\theoremstyle{plain}
\newtheorem{theorem}{Theorem}[section]
\newtheorem{corollary}[theorem]{Corollary}
\newtheorem{lemma}[theorem]{Lemma}
\newtheorem{claim}[theorem]{Claim}
\newtheorem{inductive-invariant}[theorem]{Inductive Invariant}
\newtheorem{SP}[theorem]{Step-property}
\theoremstyle{definition}
\newtheorem{definition}[theorem]{Definition}
\newtheorem{example}[theorem]{Example}
\newtheorem{remark}[theorem]{Remark}
\newtheorem{observation}[theorem]{Observation}
\tikzstyle{startstop} = [rectangle, rounded corners, minimum height=1cm, minimum width=3cm, text centered, draw=black, fill=red!30]
\tikzstyle{process} = [rectangle, text centered, minimum height=1cm, text width=3cm, draw=black, fill=orange!30]
\tikzstyle{decision} = [circle, text centered, text width=1.5cm, draw=black, fill=blue!20]
\tikzstyle{external} = [ellipse, text centered, text width=3cm, draw=black, minimum height=1cm, fill=green!30]
\tikzstyle{arrow} = [thick,->,>=stealth]
\def\ImportFromMnSymbol#1{%
  \DeclareFontFamily{U} {MnSymbol#1}{}
  \DeclareFontShape{U}{MnSymbol#1}{m}{n}{
   <-6> MnSymbol#15
   <6-7> MnSymbol#16
   <7-8> MnSymbol#17
   <8-9> MnSymbol#18
   <9-10> MnSymbol#19
   <10-12> MnSymbol#110
   <12-> MnSymbol#112}{}
  \DeclareFontShape{U}{MnSymbol#1}{b}{n}{
   <-6> MnSymbol#1-Bold5
   <6-7> MnSymbol#1-Bold6
   <7-8> MnSymbol#1-Bold7
   <8-9> MnSymbol#1-Bold8
   <9-10> MnSymbol#1-Bold9
   <10-12> MnSymbol#1-Bold10
   <12-> MnSymbol#1-Bold12}{}
  \DeclareSymbolFont{MnSy#1} {U} {MnSymbol#1}{m}{n}
}
\newcommand\DeclareMnSymbol[4]{\DeclareMathSymbol{#1}{#2}{MnSy#3}{#4}}
\DeclareMnSymbol{\DDiamond}{\mathrel}{C}{120}
\newcommand{\rom}[1]{\mbox{\rm #1}}
\newcommand{\arrow}{\mbox{$\mapsto$}}
\newcommand{\notarrow}{\mbox{$\not\hspace{-1mm}\arrow$}}
\newcommand{\LeftDes}{\rm LeftDes}
\newcommand{\RightDes}{\rm RightDes}
\newcommand{\Des}{\rm Des}
\newcommand{\booleanT}{\mbox{\tt boolean}}
\newcommand{\TRUE}{\mbox{$\tt true$}}
\newcommand{\FALSE}{\mbox{$\tt false$}}
\newcommand{\goto}{\mbox{\tt{goto}}}
\newcommand{\pre}{\rm pr}
\newcommand{\LR}{\mbox{\tt LR}}
\newcommand{\IF}{\mbox{\tt{if}}}
\newcommand{\THEN}{\mbox{\tt{then}}}
\newcommand{\ELSE}{\mbox{\tt{else}}}
\newcommand{\newnode}{\mbox{\sf new}}
\newcommand{\return}{\mbox{\tt return}}
\newcommand{\lock}{\mbox{\sf wait\_lock}}
\newcommand{\key}{\mbox{\tt{key}}}
\newcommand{\Left}{\mbox{\tt{left}}}
\newcommand{\Right}{\mbox{\tt{right}}}
\newcommand{\del}{\mbox{\tt{del}}}
\newcommand{\rem}{\mbox{\tt{rem}}}
\newcommand{\Root}{\mbox{\tt{root}}}
\newcommand{\bfcontains}{\mbox{\sf contains}}
\newcommand{\bfdelete}{\mbox{\sf delete}}
\newcommand{\bfinsert}{\mbox{\sf insert}}
\newcommand{\rotateLeft}{\mbox{\sf rotateLeft}}
\newcommand{\rotateRight}{\mbox{\sf rotateRight}}
\newcommand{\bfremove}{\mbox{\sf remove}}
\newcommand{\AddressT}{\mbox{\tt Adrs}}
\newcommand{\KeyT}{\mbox{\tt Key}}
\newcommand{\InstructionT}{\mbox{\tt Instrc}}
\newcommand{\ProcT}{\mbox{\tt Proc}}
\newcommand{\Sys}{\mbox{\it Sys}}
\newcommand{\Set}{\mbox{$\mathit Set$}}
\newcommand{\Step}{\mbox{\it Step}}
\newcommand{\DeletedP}{\mbox{\it Del}}
\newcommand{\RemovedP}{\mbox{\it Rem}}
\newcommand{\KeyF}{\mbox{\it Key}}      
\newcommand{\LeftF}{\mbox{\it Left}}
\newcommand{\RightF}{\mbox{\it Right}}
\newcommand{\ControlF}{\mbox{\it Ctrl}}
\newcommand{\LockedP}{\mbox{\it Lock}} 
\newcommand{\preRemovedP}{\mbox{$\mathit preRemoved$}}
\newcommand{\prt}{\mbox{$\mathit{prt}_0$}}
\newcommand{\child}{\mbox{$\mathit{chd}_0$}}
\newcommand{\lft}{\mbox{$\mathit{lft}_0$}}
\newcommand{\new}{\mbox{$\mathit new$}}
\newcommand{\n}{\mbox{$\mathit{nd}_0$}}
\renewcommand{\r}{{r_0}}
\newcommand{\rl}{\mbox{$r\ell_0$}}
\newcommand{\lr}{\mbox{$\ell r_0$}}
\newcommand{\nd}{\mbox{$\mathit nd$}}
\newcommand{\nxt}{\mbox{$\mathit nxt$}}
\newcommand{\oM}{\mbox{$\overline{M}$}}
\newcommand{\returnVal}{\mbox{$\mathit returnVal$}}
\newcommand{\Begin}{\mbox{$\mathit begin$}}
\newcommand{\End}{\mbox{$\mathit end$}}
\newcommand{\search}{\mbox{\rm search}}
\newcommand{\possible} {\mbox{\it Possible changes only in:}}
\title{Linearizability Analysis of the Contention-Friendly Binary Search Tree}
\author{Uri Abraham}
\author{Avi Hayoun}
\affil{Ben-Gurion University of the Negev, Beer-Sheva, Israel}
\date{November 2022}
\begin{document}

\maketitle


\begin{abstract}
We present a formal framework for proving the correctness of set implementations backed by binary-search-tree (BST) and linked lists, which are often difficult to prove correct using automation. This is because many concurrent set implementations admit non-local linearization points for their `contains' procedure. We demonstrate this framework by applying it to the Contention-Friendly Binary-Search Tree algorithm of Crain et al~\cite{Crain13,Crain16}.

We took care to structure our framework in a way that can be easily translated into input for model-checking tools such as TLA+, with the aim of using a computer to verify bounded versions of claims that we later proved manually. Although this approach does not provide complete proof (i.e., does not constitute full verification), it allows checking the reasonableness of the claims before spending effort constructing a complete proof. This is similar to the test-driven development methodology, that has proven very beneficial in the software engineering community.

We used this approach and validated many of the invariants and properties of the Contention-Friendly algorithm using TLA+~\cite{engberg1992TLA}. It proved beneficial, as it helped us avoid spending time trying to prove incorrect claims. In one example, TLA+ flagged a fundamental error in one of our core definitions. We corrected the definition (and the dependant proofs), based on the problematic scenario TLA+ provided as a counter-example.

Finally, we provide a complete, manual, proof of the correctness of the Contention-Friendly algorithm, based on the definitions and proofs of our two-tiered framework.
\end{abstract}

\section{Introduction}
Highly concurrent algorithms are often considered difficult to design and implement correctly. The large number of possible ways to execute such algorithms, brought about by the high degree of inter-process interference, translates into complexity for the programmer.

Linearizability~\cite{herlihy1990linearizability} is the accepted correctness condition for the implementation of concurrent data structures. It implies that each operation of a data structure implementation can be regarded as executing instantly at some point in time, known as the {\em linearization point} of the operation that is located between the initialization of the operation and its response (ending). This causes the operation to behave atomically for other concurrent operations. Although the definition of linearizability is intuitively simple, its proofs is usually complex.

The concurrent set data structure is particularly interesting in proving linearizability: many concurrent set algorithms are common examples of implementations in which some data operations have non-local linearization points~\cite{ohearn2010hindsight,zhu2015poling}. That is, the linearization point of an operation by process $p$ may be an event generated by the action of another process.

Crain et al.~\cite{Crain13,Crain16} presented an elegant and efficient concurrent, lock-based, contention-friendly, binary-search tree (BST), that provides the standard set interface operations. \bfcontains\ queries for the presence of a value $k$ (a key value) in the set; \bfdelete\ performs a logical deletion of a value $k$ (by changing the status of an address with value $k$ from undeleted to deleted); 
and \bfinsert\ performs either a logical insertion (by changing the status of an address with
value $k$ from deleted to undeleted) or a physical insertion (by adding a new address with value $k$, when
no address with key $k$ exists).
The main features of the contention-friendly algorithm are a self-balancing mechanism (the \rotateLeft\ and \rotateRight\ operations), and a physical removal procedure (the \bfremove\ operation), which help approximate the big O guarantees of a sequential BST implementation. The authors of the contention-friendly
algorithm provided experimental evidence of the efficiency of their approach, which is of prime importance,
but our work deals with correctness rather than the algorithm's efficiency. We present the full details of the algorithm itself in Section~\ref{sec:the-algorithm}.

Proofs of the correctness of variants of this contention-friendly algorithm have been proposed~\cite{Feldman18,Feldman20}. 
However, these variants forgo certain core behavioral aspects of the algorithm, specifically, that the backtracking mechanism of the original algorithm is not fully realized. 
In this chapter, we provide complete proof of the algorithm. More precisely, we do not deal here with the original algorithm of \cite{Crain13,Crain16},
but rather, with a simpler version that retains the original backtracking behavior. 
We acknowledge that it may seem strange to devote more than 50 pages to the proof of
an algorithm that is implemented using fewer than 20 instructions, but an important aim of Section~\ref{Sec:exploring-an-example} is 
to explain why such is necessary, using illustrations.

Some formal correctness proofs are quite easy to follow despite their formidable length 
because they are guided by relatively simple and intuitive arguments that motivate each step. 
That does not seem to be the case for the contention-friendly algorithm, and it was unclear to us at first how to approach the proof. 
In fact, we were initially unable to identify and formally characterize the states that the algorithm executions generate. 
A mathematical definition of these states is necessary to support the definition of invariants, 
which are the basic ingredients of any correctness proof. 
It is customary to define states of a memory system as functions from memory locations to a set of possible values. Still, it was evident at an early stage that such simple states would not do, 
and a richer language and corresponding structures are needed to reflect the subtleties of the states of the algorithm. 
The need for richer structures will be illustrated by the scenario examples described in Section ~\ref{Sec:exploring-an-example}.

We begin our journey in Section~\ref{sec:perlims} by rigorously defining the formal framework that we use
here and that we believe can be generalized to many other BST and Set implementations.
This framework is based on a model-theoretic approach to defining program states, as proposed by Abraham~\cite{abraham1999models}, 
and it is used here to formulate and prove inductive invariants and properties of steps specific to the algorithm in question (see Section~\ref{sec:invariants-and-props}).

The proof approach developed and presented here has two parts. The first part consists of defining and proving
invariants and properties of states. In the second part, we focus on properties of the histories, which
are the structures that describe executions of the algorithm. The reason for this two-step approach is that a state
 is a different object from an execution of the algorithm; an execution consists of a sequence of states,
that is, a history, and the correctness of the algorithm is a property of histories, not of states.

It is in the second part of the proof that we formulate and prove the central theorem of our work, the Scanning Theorem (see Section~\ref{SecHistory}), with the help of the theory of states developed in the first part.
This theorem is largely disconnected from the technical, low-level details of the algorithm in question, and is abstracted away from the model-theoretic framework developed and used in the previous sections. This abstraction turned out to be very powerful, and it greatly simplifies the final section of this study, proving the correctness of the contention-friendly algorithm.

The rest of this study is structured as follows. In Section~\ref{sec:the-algorithm}, we present the technical details of the contention-friendly algorithm and observe some interesting aspects of its behavior. Then, we lay out the formal foundation of our proof system in Section~\ref{sec:address-structure}, by presenting the logical language we used to formally define program states. In Section~\ref{sec:invariants-and-props}, we use this language to formulate claims about the states of the algorithm and the relationships between them. This section culminates in the definition of the notion of {\em regularity}, which is crucial for the next steps in the proof. In Section~\ref{SecHistory}, we present and prove the Scanning Theorem, followed by the correctness proof of the contention-friendly algorithm in Section~\ref{Sec6}. Finally, we survey some related work and conclude.

\section{The Contention-Friendly algorithm}
\label{sec:the-algorithm}

The contention-friendly (CF) binary-search tree~\cite{Crain13,Crain16} is a lock-based concurrent binary-search tree (BST) that implements the classic set interface of set \bfinsert/\bfdelete/\bfcontains operations. 
Each of its nodes contains the following fields: a key \key; \Left\ and \Right\ pointers to the left and right child nodes, respectively; a boolean \del\ flag indicating
if the node has been logically deleted; and a boolean \rem\ flag indicating if the node has been physically removed.

In Figures \ref{LM} and \ref{RLred}, we present a slightly modified version of the CF algorithm. Nonetheless, this version retains the core principles of the original algorithm and, most importantly, the backtracking mechanism.

\begin{figure}
\centering
\fbox{
\begin{minipage}[t]{\textwidth}
\begin{tabbing}
*\=****\=**\=**\=**\=**\=\kill
Master program for a process $p>0$: \\ 
\\
\> m0.\> set $k_p\in\omega$,  $\nd_p=\Root$, $\nxt_p=\Root$, and \goto\ c1, d1, or i1.  
\\ 
\par\noindent\rule{0.95\textwidth}{0.4pt}
\\
Master program for the system process \Sys:\\ 
\\
\> m0. set $\prt\in\AddressT$, $\lft\in \booleanT$, establish the prerequisites, and \goto\ f6, r6, or v6.
\end{tabbing}
\end{minipage}
}
\caption{Master Program for the working and system processes.}
\label{FigMP}
\end{figure}

\begin{figure}
\centering
\begin{tabular}{|l|}\hline
\begin{minipage}[t]{0.9\textwidth}
\begin{tabbing}
*\=****\=***\=***\=***\=\kill
 \booleanT\  $\rotateLeft(\prt,\, \lft)$: \\[0.3cm]

\> \pre 1. \>\ $\neg \RemovedP(\prt)\wedge \prt\neq \bot$\\[0.1cm]

\> \pre 2. \>\  $\n =\LR(\prt,\lft)\wedge \n\notin\{\Root,\bot,\prt\}\wedge\neg\RemovedP(\n)$\\[0.1cm]

\> \pre 3. \>\ $\r = \RightF(\n) \wedge \r \neq \bot \wedge  \neg \RemovedP(\r)$\\[0.1cm] 

\> \pre 4. \>\ $\LockedP(\prt,\Sys);\ \LockedP(\n,\Sys);\ \LockedP(\r,\Sys)$\\[0.1cm]
 
\> \pre 5.\>\ $\rl = \LeftF(\r)$;\ $\ell_0 = \LeftF(\n)$\\[0.1cm] 

--------------------------------- \ \\
 \>f6. \>\ $\r.\Left := \newnode(\n.\key,\n.\del, \FALSE, \ell_0, \rl)$ \\[0.1cm]

\> f7.\>\ $\n.\Left := \r$\\[0.1cm]

\> f8.\>\ \IF\ \lft\ \hspace{1mm} \THEN\ \hspace{1mm} $\prt.\Left := \r$ \hspace{1mm} \ELSE\ \hspace{1mm} $\prt.\Right := \r$\\[0.1cm]

\> f9. \>\ $\n.\rem := \TRUE$\\
\end{tabbing}
\end{minipage}\\
\hline
\begin{minipage}[t]{0.9\textwidth} 
\begin{tabbing}
*\=****\=***\=***\=\kill
 \booleanT\  $\rotateRight(\prt,\, \lft)$: \\[0.3cm]

\> \pre 1. \>\ $\neg \RemovedP(\prt)\wedge \prt\neq \bot$\\[0.1cm]

\> \pre 2. \>\  $\n =\LR(\prt,\lft)\wedge \n\notin\{\Root,\bot,\prt\}\wedge\neg\RemovedP(\n)$\\[0.1cm]

\> \pre 3. \>\ $\ell_0 = \LeftF(\n)\wedge \ell_0\neq \bot \wedge \neg\RemovedP(\ell_0)$\\[0.1cm]
\> \pre 4. \>\ $\LockedP(\prt,\Sys);\ \LockedP(\n,\Sys);\ \LockedP(\ell_0,\Sys)$\\[0.1cm]

\> \pre 5. \>\ $\lr = \RightF(\ell_0)$;\ $\r= \RightF(\n)$\\[0.1cm]
--------------------------------- \ \\ 

\> r6. \>\ $\ell_0.\Right := \newnode(\n.\key,\n.\del, \FALSE, \lr, \r)$ \\[0.1cm]

\> r7. \>\  $\n.\Right := \ell_0$\\[0.1cm]

\> r8. \>\ \IF\ \lft\ \hspace{1mm} \THEN\ \hspace{1mm} $\prt.\Left := \ell_0$ \hspace{1mm} \ELSE\ \hspace{1mm} $\prt.\Right := \ell_0$\\[0.1cm]

\>r9.\> $\n.\rem := \TRUE$\\
\end{tabbing}
\end{minipage}\\
\hline
\begin{minipage}[t]{0.9\textwidth}
\begin{tabbing}
*\=****\=***\=***\=\kill
\booleanT\ $\bfremove(\prt, \lft)$:\\[0.3cm]

\> \pre 1. \>\ $\neg \RemovedP(\prt)\wedge \prt\neq \bot$\\[0.1cm]

\> \pre 2. \>\ $\n =\LR(\prt,\lft)\wedge \n\notin\{\Root,\bot,\prt\}\wedge\neg\RemovedP(\n)$\\[0.1cm]

\> \pre 3. \>\  $\LockedP(\prt,\Sys);\ \LockedP(\n,\Sys)$\\[0.1cm]
\> \pre 4. \>\  $ \DeletedP(\n) \wedge$ $(\LeftF(\n)= \bot \vee \RightF(\n) = \bot)$\\[0.1cm]
\> \pre 5. \>\ $\LeftF(\n) \neq \bot\rightarrow \child = \LeftF(\n)\ \wedge \LeftF(\n) = \bot\rightarrow \child = \RightF(\n)$\\[0.1cm]

--------------------------------- \ \\

\> v6.\>\ \IF\ $\lft$ \hspace{1mm} \THEN\ \hspace{1mm} $\prt.\Left := \child$ \hspace{1mm} \ELSE\ \hspace{1mm} $\prt.\Right := \child$\\[0.1cm]

\> v7.\>\ \IF\ $\n.\Left=\bot$ \hspace{1mm} \THEN\ \hspace{1mm} $\n.\Left := \prt$ \hspace{1mm} \ELSE\ \hspace{1mm} $\n.\Right :=\prt$\\[0.1cm]

\> v8. \>\ \IF\ $\n.\Left =\prt$ \hspace{1mm} \THEN\ \hspace{1mm} $\n.\Right :=\prt$ \hspace{1mm} \ELSE\ \hspace{1mm} $\n.\Left:=\prt$\\[0.1cm]

\> v9. \>\ $\n.\rem := \TRUE$\\

\end{tabbing}
\end{minipage}




\\
\hline
\end{tabular}

\caption{The rotation and removal operations of the contention-friendly algorithm.}
\label{LM}
\end{figure}

\begin{figure}
\centering
\begin{tabular}{|l|l|}\hline
\begin{minipage}[t]{65mm}
\begin{tabbing}
*\=***\=***\=**\=**\=**\=\kill
 \booleanT\  $\bfcontains(k)$: \\ \\

\>c1.\>\ \IF\ $\nxt=\bot$ \THEN\ \return\ \FALSE \\
\>\>\ $\nd := \nxt$ \\
\>\>\  \IF\ $k=\nd.key$ \THEN\ \goto\ c2\\
\>\>\  $\nxt := \LR(\nd, k<\nd.\key)$\\ 
\>\>\ \goto\ c1 \\

\\
\>c2.\>\  $\return\ \neg\nd.\del$ \\
\end{tabbing}
\end{minipage}
&
\multirow{25}{*}{
\begin{minipage}[t]{75mm}
\begin{tabbing}
**\=***\=***\=***\=***\=***\= \kill
 \booleanT\  $\bfinsert(k)$: \\ \\

\> i1.\>\  \IF\ $\nxt = \bot$ \THEN\ \goto\ i3\\
\>\>\ $\nd := \nxt$\\
\>\>\ \IF\ $k =\nd.\key$ \THEN\ \goto\ i2\\
\>\>\  $\nxt := \LR(\nd,  k< \nd.\key)$\\
\>\>\  \goto\ i1\\
\\

\> i2.\>\  $\lock(\nd)$ \\  
\>\>\ \IF\ $\neg\nd.\del$ \THEN\ \return\ \FALSE\\
\>\>\ \IF\ $\nd.\rem$ \THEN\\
\>\>\>\  $\nxt:= \nd.\Right$  \\ 
\>\>\>\  \goto\ i1\\
\>\>\ $\nd.\del := \FALSE$\\ 
\>\>\ \return\ \TRUE\\  
\\

\> i3.\>\ $\lock(\nd)$\\
\>\>\ \IF\ $\LR(\nd, k<\nd.\key )\neq \bot$ \THEN\\
\>\>\>\ $\nxt := \LR(\nd,k<\nd.\key)$ \\
\>\>\>\ \goto\ i1 \\

\>\>\ \IF\ $k <  \nd.\key$ \THEN\\
\>\>\>\ $\nd.\Left := \newnode(k, \FALSE, \FALSE, \bot, \bot)$\\
\>\>\ \ELSE\\
\>\>\>\ $\nd.\Right := \newnode(k, \FALSE, \FALSE, \bot, \bot)$\\

\>\>\ \return\ \TRUE \\
\end{tabbing}
\end{minipage}}\\
\cline{1-1}

\begin{minipage}[t]{65mm}
\begin{tabbing}
*\=***\=***\=**\=**\=**\= \kill
\booleanT\  $\bfdelete(k)$: \\ \\

\> d1.\>\  \IF\ $\nxt = \bot$ \THEN\ \return\ \FALSE\\
\>\>\ $\nd := \nxt$ \\
\>\>\ \IF\ $k= \nd.\key$ \THEN\ \goto\ d2  \\
\>\>\ $\nxt := \LR(\nd, k<\nd.\key)$   \\
\>\>\ \goto\ d1\\
 
\\
\> d2.\>\ $\lock(\nd)$   \\
\>\>\ \IF\ $\nd.\del$ \THEN\ \return \ \FALSE\\
\>\>\ \IF\ $\nd.\rem$ \THEN \\
\>\>\>\ $\nxt := \nd.\Right$\\
\>\>\>\ \goto\ d1 \\

\>\>\ $\nd.\del := \TRUE$\\
\>\>\ \return\, \TRUE  \\


\end{tabbing}
\end{minipage} 
& \\
\hline
\end{tabular}

\caption{The Data Operations of the contention-friendly algorithm. 
  }
\label{RLred}
\end{figure}

The main difference between the original version and this modified version of the algorithm is in the \rotateLeft\ and \rotateRight\ operations of the \Sys\ process. In the original version, the \new\ node allocated by the operations is constructed in such a way that no other node points to it. The \new\ node is then attached to the tree as the child of $\r$ (of $\ell_0$, respectively) in the following step. We merged these two separate steps, so the allocation and the connection to the tree occur at once in line f6 (in line r6, respectively). In addition to simplifying the rotation protocols, this allows us to simplify the type of the \rem\ field to the boolean type (instead of the tertiary type used in the original work). We argue that this merge of steps makes sense since the only change to shared memory is the change to the $\Left$ field of $\r$ (to the $\Right$ field of $\ell_0$, respectively). In contrast, the $\new$ node is unreachable by any process other than $\Sys$ until it is connected to $\r$ (to $\ell_0$, respectively). This allows $\Sys$ to treat $\new$ as a process-local address for initialization.

There are more instances in which we merged distinct steps into a single instruction. Note that each of the atomic program commands presented in Figure~\ref{RLred} includes multiple steps. We relied on the work of Elmas et al.~\cite{elmas2009calculus}, which formalized the notion of abstraction through command reduction. In our case, two consecutive commands that read to and/or write from a thread-local variable may be merged. Additionally, two consecutive commands that access the same shared memory object may be merged if one reads an immutable field of the shared object. This includes the case when both commands occur within the same critical section and one is a read command since a shared object is effectively immutable to any process that is not executing the critical section.

The system process $\Sys=0$ and each of the working processes $p>0$ act by executing their Master Programs (Figure \ref{FigMP}). The Master Program is in charge of three things: \begin{enumerate*}[label=(\arabic*)]
    \item initializing the local variables of the process,
    \item enforcing the preconditions of the operation being invoked by the process, and
    \item changing the instruction pointer of the process to the start of the operation being invoked.
\end{enumerate*}

At this point, we remark that we omitted some of the technical details of the Master Program, compared to the original presentation by Crain et al. While they settled on a specific mechanism for choosing which balancing rotations to perform, we do not commit to any such mechanism. The specific details of this decision process do not influence the correctness of the algorithm, and due to the concurrent nature of the algorithm, do not admit any hard complexity bounds.

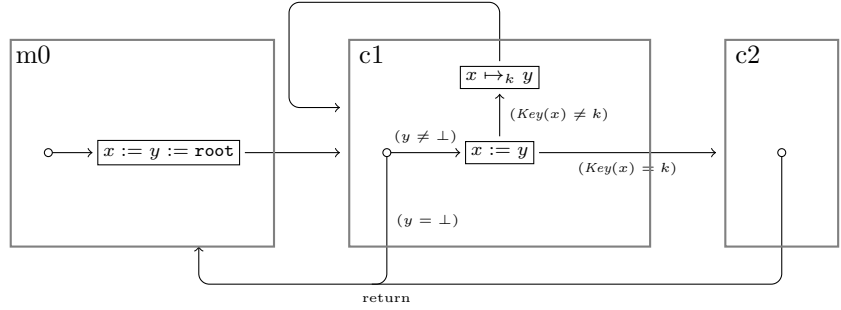
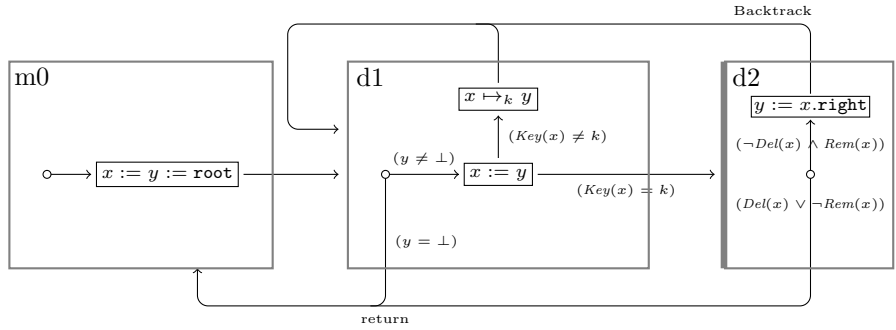
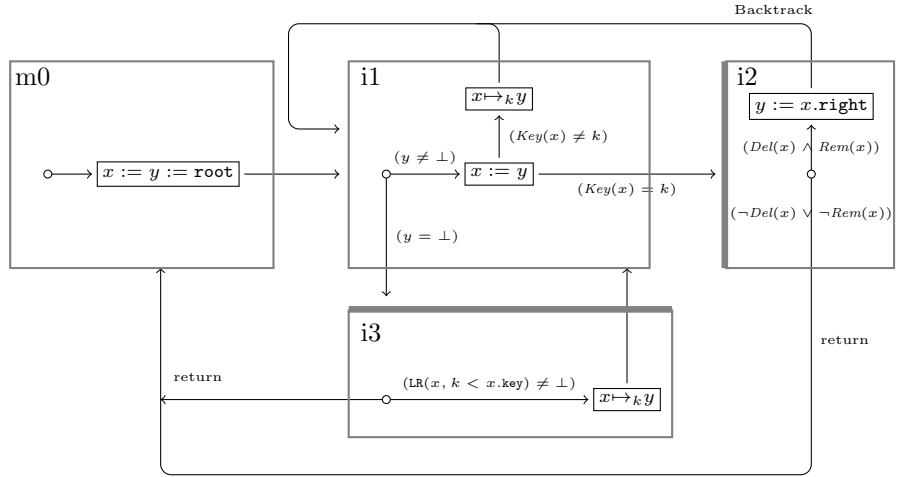
\begin{figure}
\begin{subfigure}{\textwidth}
    \centering
    \begin{tikzpicture}
        \draw
            node at (-2.5,0) [circle,draw,inner sep=0pt, outer sep=0pt, minimum size=3pt] (m0) {}
            node at (-0.9,0)[rectangle,draw,inner sep=2pt,outer sep=2pt,align=center] (init) {\scriptsize$x := y := \Root$}
            node at (1.5,0) (c1) {}
            node at (2,0) [circle,draw,inner sep=0pt, outer sep=0pt, minimum size=3pt] (in_c1) {}
            node at (3.5,0) [rectangle, draw, inner sep=2pt, outer sep=2pt] (y_to_x) {\scriptsize$x := y$}
            node [rectangle, draw, above of=y_to_x, inner sep=2pt, outer sep=2pt] (k-step) {\scriptsize$x\mapsto_k y$}
            node at (6.5,0) (c2) {}
            node at (7.25,0) [circle,draw,inner sep=0pt, outer sep=0pt, minimum size=3pt] (in_c2) {};
            
        \draw[->] (m0) -- (init);
        \draw[->] (init) -- (c1);
        \draw[->, rounded corners] (in_c1) -- node [right] {\tiny$(y=\bot)$} (2,-1.75) node [below] {\tiny{return}} -| (-0.5, -1.25);
        \draw[->] (in_c1) -- node [above] {\tiny$(y\neq\bot)$} (y_to_x);
        \draw[->] (y_to_x) -- node [below] {\tiny$(\KeyF(x)=k)$} (c2);
        \draw[->] (y_to_x) -- node [right] {\tiny$(\KeyF(x)\neq k)$} (k-step);
        \draw[->, rounded corners] (k-step) -- (3.5, 2) -- (0.7, 2) |- ($(c1.west) + (0, 0.6cm)$);
        \draw[-, rounded corners] (in_c2) |- (1.8, -1.75);
        
        \draw [color=gray,thick] (-3,-1.25) rectangle (0.5, 1.5);
        \node at (-2.7, 1.3) {m0};
        
        \draw [color=gray,thick](1.5,-1.25) rectangle (5.5,1.5);
	    \node at (1.8,1.3) {c1};
	    
	    \draw [color=gray,thick](6.5,-1.25) rectangle (8,1.5);
	    \node at (6.8,1.3) {c2};
    \end{tikzpicture} 
    \caption{Flowchart and transitions of Contains.}
    \label{FIGflowchart1}
\end{subfigure}
\\[1.5cm]
\begin{subfigure}{\textwidth}
    \centering
    \begin{tikzpicture}
        \draw
            node at (-2.5,0) [circle,draw,inner sep=0pt, outer sep=0pt, minimum size=3pt] (m0) {}
            node at (-0.9,0)[rectangle,draw,inner sep=2pt,outer sep=2pt,align=center] (init) {\scriptsize$x := y := \Root$}
            node at (1.5,0) (d1) {}
            node at (2,0) [circle,draw,inner sep=0pt, outer sep=0pt, minimum size=3pt] (in_d1) {}
            node at (3.5,0) [rectangle, draw, inner sep=2pt, outer sep=2pt] (y_to_x) {\scriptsize$x := y$}
            node [rectangle, draw, above of=y_to_x, inner sep=2pt, outer sep=2pt] (k-step) {\scriptsize$x\mapsto_k y$}
            node at (6.5,0) (d2) {}
            node at (7.65,0) [circle,draw,inner sep=0pt, outer sep=0pt, minimum size=3pt] (in_d2) {}
            node [rectangle, draw,inner sep=1pt, outer sep=1pt, above of=in_d2, yshift=-1mm] (backtrack) {\scriptsize$y := x.\Right$}
            ;
            
        \draw[->] (m0) -- (init);
        \draw[->] (init) -- (d1);
        \draw[->, rounded corners] (in_d1) -- node [right] {\tiny$(y=\bot)$} (2,-1.75) node [below] {\tiny{return}} -| (-0.5, -1.25);
        \draw[->] (in_d1) -- node [above] {\tiny$(y\neq\bot)$} (y_to_x);
        \draw[->] (y_to_x) -- node [below] {\tiny$(\KeyF(x)=k)$} (d2);
        \draw[->] (y_to_x) -- node [right] {\tiny$(\KeyF(x)\neq k)$} (k-step);
        \draw[->, rounded corners] (k-step) -- (3.5, 2) -- (0.7, 2) |- ($(d1.west) + (0, 0.6cm)$);
        \draw[->] (in_d2) -- node {\tiny$(\neg\DeletedP(x)\wedge\RemovedP(x))$} (backtrack);
        \draw[-, rounded corners] (backtrack) -- (7.65, 2) node [above,xshift=-5mm] {\tiny{Backtrack}} -- (3.2,2);
        \draw[-, rounded corners] (in_d2) -- node [yshift=5mm] {\tiny$(\DeletedP(x)\vee\neg\RemovedP(x))$} (7.65, -1.75) -- (1.8, -1.75);
        
        \draw [color=gray,thick] (-3,-1.25) rectangle (0.5, 1.5);
        \node at (-2.7, 1.3) {m0};
        
        \draw [color=gray,thick](1.5,-1.25) rectangle (5.5,1.5);
	    \node at (1.8,1.3) {d1};
	    
	    \draw [color=gray,thick](6.5,-1.25) -- (8.75,-1.25) -- (8.75,1.5) -- (6.5,1.5); \draw [color=gray,line width=2.5pt](6.5,1.5) -- (6.5,-1.25);
	    \node at (6.8,1.3) {d2};
    \end{tikzpicture} 
    \caption{Flowchart and transitions of Delete.}
    \label{FIGflowchart2}
\end{subfigure}
\\[1.5cm]
\begin{subfigure}{\textwidth}
    \centering
    \begin{tikzpicture}
        \draw
            node at (-2.5,0) [circle,draw,inner sep=0pt, outer sep=0pt, minimum size=3pt] (m0) {}
            node at (-0.9,0)[rectangle,draw,inner sep=2pt,outer sep=2pt,align=center] (init) {\scriptsize$x := y := \Root$}
            node at (1.5,0) (i1) {}
            node at (2,0) [circle,draw,inner sep=0pt, outer sep=0pt, minimum size=3pt] (in_i1) {}
            node at (3.5,0) [rectangle, draw, inner sep=2pt, outer sep=2pt] (y_to_x) {\scriptsize$x := y$}
            node [rectangle, draw, above of=y_to_x, inner sep=2pt, outer sep=2pt] (k-step) {\scriptsize$x\arrow_k y$}
            node at (6.5,0) (i2) {}
            node at (7.65,0) [circle,draw,inner sep=0pt, outer sep=0pt, minimum size=3pt] (in_i2) {}
            node [rectangle, draw,inner sep=2pt, outer sep=2pt, above of=in_i2, yshift=-1mm] (backtrack) {\scriptsize$y := x.\Right$}
            node at (2,-1.75) (i3) {}
            node at (2,-3) [circle,draw,inner sep=0pt, outer sep=0pt, minimum size=3pt] (in_i3) {}
            node at (5.2, -3) [rectangle, draw,inner sep=2pt, outer sep=2pt] (retry) {\scriptsize$x\arrow_k y$}
            ;
            
        \draw[->] (m0) -- (init);
        \draw[->] (init) -- (i1);
        \draw[->] (in_i1) -- node [right] {\tiny$(y=\bot)$} (i3);
        \draw[->] (in_i1) -- node [above] {\tiny$(y\neq\bot)$} (y_to_x);
        \draw[->] (y_to_x) -- node [below] {\tiny$(\KeyF(x)=k)$} (i2);
        \draw[->] (y_to_x) -- node [right] {\tiny$(\KeyF(x)\neq k)$} (k-step);
        \draw[->, rounded corners] (k-step) -- (3.5, 2) -- (0.7, 2) |- ($(i1.west) + (0, 0.6cm)$);
        \draw[->] (in_i2) -- node {\tiny$(\DeletedP(x)\wedge\RemovedP(x))$} (backtrack);
        \draw[-, rounded corners] (backtrack) -- (7.65, 2) node [above,xshift=-5mm] {\tiny{Backtrack}} -- (3.2,2);
        \draw[->, rounded corners] (in_i2) -- node [yshift=15mm] {\tiny$(\neg\DeletedP(x)\vee\neg\RemovedP(x))$} (7.65, -4) -- (-1, -4) -- (-1,-1.25);
        \draw[->] (in_i3) -- (-1, -3);
        \draw node at (-0.5, -2.7) {\tiny{return}};
        \draw[->] (in_i3) -- node [above] {\tiny$(\LR(x, k < x.\key)\neq\bot)$} (retry);
        \draw[->,rounded corners] (retry) -- (5.2,-1.25);
        \draw node at (8.1, -2.2) {\tiny{return}};
        
        \draw [color=gray,thick] (-3,-1.25) rectangle (0.5, 1.5);
        \node at (-2.7, 1.3) {m0};
        
        \draw [color=gray,thick](1.5,-1.25) rectangle (5.5,1.5);
	    \node at (1.8,1.3) {i1};
	    
	    \draw [color=gray,thick](6.5,-1.25) -- (8.75,-1.25) -- (8.75,1.5) -- (6.5,1.5); \draw [color=gray,line width=2.5pt](6.5,1.5) -- (6.5,-1.25);
	    \node at (6.8,1.3) {i2};
	    
	    \draw [color=gray,thick](1.5,-1.8) -- (1.5,-3.5) -- (5.8,-3.5) -- (5.8,-1.8); \draw [color=gray,line width=2.5pt] (5.8,-1.8) -- (1.5,-1.8);
	    \node at (1.8,-2.1) {i3};
    \end{tikzpicture} 
    \caption{Flowchart and transitions of Insert.}
    \label{FIGflowchart3}
\end{subfigure}
\caption{Flowcharts and transition diagrams of the data operations of the algorithm.}
\label{fig:flowcharts}
\end{figure}
The flowcharts in Figure~\ref{fig:flowcharts} are the graphical representations of the steps of the algorithm, which are detailed fully in Appendix~\ref{Sec4.1}. The flowcharts are presented to help understand the steps of the data operations, and the course of values that the program variables $\nd_p$ and $\nxt_p$ take as the operation is executed. Thus, the flowcharts do not specify the return values of the operations --- they only indicate when a return is executed and the operation terminates.

In these flowcharts, address $x$ follows the denotation of $\nd_p$ during the operation execution, and $y$ follows the denotation of $\nxt_p$.

We use Figure~\ref{FIGflowchart2}, which details the behavior of $\bfdelete(k)$, as an example to help explain the meaning of the different shapes and notations of the flowcharts.

The $\bfdelete(k)$  consists of instructions d1 and d2. Each instruction is represented by a large gray rectangle, labeled with the instruction name. Instruction m0 is also represented in the chart, as it is in charge of variable initialization, invocations, and returns. Transitions are marked with arrows, which may be labeled with parenthesized conditions that must hold for the transition to occur, e.g., the transition from d1 to d2 occurs when $(\KeyF(x)=k)$. Within each instruction, assignments to local variables appear as rectangular nodes. For example, the assignment of $y := x.\Right$ in instruction d2 if $(\neg\DeletedP(x)\wedge\RemovedP(x))$, or the assignment of the value \LR$(x, k<x.\key)$ to $y$ (denoted $x\mapsto_k y$) in instruction d1 if $\KeyF(x)\neq k$. The thick westerly-facing edge of the rectangle of d2 marks that this instruction constitutes a critical section, and thus, access to it requires that the process first acquires a lock on node $x$.


\begin{figure}
\centering
\begin{subfigure}{0.2\textwidth}
    \begin{tikzpicture}[node distance=0.75cm]
        \draw 
            node [circle, draw, fill, inner sep=0pt, outer sep=0pt, minimum size=3pt, label={180:\tiny$\prt$}] (prt) {}
            node [circle, draw, fill, inner sep=0pt, outer sep=0pt, minimum size=3pt, label={180:\tiny$\n$}, below of=prt] (nd0) {}
            node [circle, draw, fill, inner sep=0pt, outer sep=0pt, minimum size=3pt, label={0:\tiny$\r$}, below of=nd0, right of=nd0] (r) {}
            node [circle, draw, inner sep=0pt, outer sep=0pt, minimum size=3pt, label={0:\tiny$\rl$}, below of=r, xshift=-3mm] (rl) {}
            node [circle, draw, inner sep=0pt, outer sep=0pt, minimum size=3pt, label={180:\tiny$\ell_0$}, left of=rl, xshift=-2mm] (l) {};
        \draw[->] (prt) -- (nd0);
        \draw[->] (nd0) -- (l);
        \draw[->] (nd0) -- (r);
        \draw[->] (r) -- (rl);
    \end{tikzpicture}
    \caption{Initial state}
\end{subfigure}
\hfill
\begin{subfigure}{0.2\textwidth}
    \begin{tikzpicture}[node distance=0.75cm]
        \draw 
            node [circle, draw, fill, inner sep=0pt, outer sep=0pt, minimum size=3pt, label={180:\tiny$\prt$}] (prt) {}
            node [circle, draw, fill, inner sep=0pt, outer sep=0pt, minimum size=3pt, label={180:\tiny$\n$}, below of=prt] (nd0) {}
            node [circle, draw, fill, inner sep=0pt, outer sep=0pt, minimum size=3pt, label={0:\tiny$\r$}, below of=nd0, right of=nd0] (r) {}
            node [circle, draw, inner sep=0pt, outer sep=0pt, minimum size=3pt, label={90:\tiny$\new$}, below of=nd0, yshift=-2.5mm] (new) {}
            node [circle, draw, inner sep=0pt, outer sep=0pt, minimum size=3pt, label={0:\tiny$\rl$}, below of=r, xshift=-3mm] (rl) {}
            node [circle, draw, inner sep=0pt, outer sep=0pt, minimum size=3pt, label={180:\tiny$\ell_0$}, left of=rl, xshift=-2mm] (l) {};
        \draw[->] (prt) -- (nd0);
        \draw[->] (nd0) -- (l);
        \draw[->] (nd0) -- (r);
        \draw[->] (r) -- (new);
        \draw[->] (new) -- (rl);
        \draw[->] (new) -- (l);
    \end{tikzpicture}
    \caption{after f6}
\end{subfigure}
\hfill
\begin{subfigure}{0.2\textwidth}
    \begin{tikzpicture}[node distance=0.75cm]
        \draw 
            node [circle, draw, fill, inner sep=0pt, outer sep=0pt, minimum size=3pt, label={180:\tiny$\prt$}] (prt) {}
            node [circle, draw, fill, inner sep=0pt, outer sep=0pt, minimum size=3pt, label={180:\tiny$\n$}, below of=prt] (nd0) {}
            node [circle, draw, fill, inner sep=0pt, outer sep=0pt, minimum size=3pt, label={0:\tiny$\r$}, below of=nd0, right of=nd0] (r) {}
            node [circle, draw, inner sep=0pt, outer sep=0pt, minimum size=3pt, label={180:\tiny$\new$}, left of=r, yshift=-2.5mm] (new) {}
            node [circle, draw, inner sep=0pt, outer sep=0pt, minimum size=3pt, label={0:\tiny$\rl$}, below of=r, xshift=-3mm] (rl) {}
            node [circle, draw, inner sep=0pt, outer sep=0pt, minimum size=3pt, label={180:\tiny$\ell_0$}, left of=rl, xshift=-1mm] (l) {};
        \draw[->] (prt) -- (nd0);
        \draw[->] (nd0.east) -- (r.east);
        \draw[->] (nd0.west) -- (r.west);
        \draw[->] (r) -- (new);
        \draw[->] (new) -- (rl);
        \draw[->] (new) -- (l);
    \end{tikzpicture}
    \caption{after f7}
\end{subfigure}
\hfill
\begin{subfigure}{0.2\textwidth}
    \begin{tikzpicture}[node distance=0.75cm]
        \draw 
            node [circle, draw, fill, inner sep=0pt, outer sep=0pt, minimum size=3pt, label={180:\tiny$\prt$}] (prt) {}
            node [circle, draw, fill, inner sep=0pt, outer sep=0pt, minimum size=3pt, label={180:\tiny$\n$}, below of=prt] (nd0) {}
            node [circle, draw, fill, inner sep=0pt, outer sep=0pt, minimum size=3pt, label={0:\tiny$\r$}, below of=nd0, right of=nd0] (r) {}
            node [circle, draw, inner sep=0pt, outer sep=0pt, minimum size=3pt, label={180:\tiny$\new$}, left of=r, yshift=-2.5mm] (new) {}
            node [circle, draw, inner sep=0pt, outer sep=0pt, minimum size=3pt, label={0:\tiny$\rl$}, below of=r, xshift=-3mm] (rl) {}
            node [circle, draw, inner sep=0pt, outer sep=0pt, minimum size=3pt, label={180:\tiny$\ell_0$}, left of=rl, xshift=-1mm] (l) {};
        \draw[->] (prt) -- (r);
        \draw[->] (nd0.east) -- (r.east);
        \draw[->] (nd0.west) -- (r.west);
        \draw[->] (r) -- (new);
        \draw[->] (new) -- (rl);
        \draw[->] (new) -- (l);
    \end{tikzpicture}
    \caption{after f8}
\end{subfigure}
\caption{Illustration of the structural changes caused by \rotateLeft. $\bullet$ represents locked nodes, and $\circ$ represents unlocked nodes.}
\label{Fig:rotate-left}
\end{figure}

\begin{figure}
\centering
\begin{subfigure}{0.2\textwidth}
    \begin{tikzpicture}[node distance=0.75cm]
        \draw 
            node [circle, draw, fill, inner sep=0pt, outer sep=0pt, minimum size=3pt, label={180:\tiny$\prt$}] (prt) {}
            node [circle, draw, fill, inner sep=0pt, outer sep=0pt, minimum size=3pt, label={180:\tiny$\n$}, below of=prt, left of=prt] (nd0) {}
            node [circle, draw, inner sep=0pt, outer sep=0pt, minimum size=3pt, label={0:\tiny$\child$}, below of=nd0, right of=nd0] (chd) {}
            node [circle, draw, inner sep=0pt, outer sep=0pt, minimum size=3pt, label={180:\tiny$\bot$}, below of=nd0] (bot) {};
        \draw[->] (prt) to (nd0);
        \draw[->] (nd0) to (chd);
        \draw[->] (nd0) to (bot);
    \end{tikzpicture}
    \caption{Initial state}
\end{subfigure}
\hfill
\begin{subfigure}{0.2\textwidth}
    \begin{tikzpicture}[node distance=0.75cm]
        \draw 
            node [circle, draw, fill, inner sep=0pt, outer sep=0pt, minimum size=3pt, label={180:\tiny$\prt$}] (prt) {}
            node [circle, draw, fill, inner sep=0pt, outer sep=0pt, minimum size=3pt, label={180:\tiny$\n$}, below of=prt, left of=prt] (nd0) {}
            node [circle, draw, inner sep=0pt, outer sep=0pt, minimum size=3pt, label={0:\tiny$\child$}, below of=nd0, right of=nd0] (chd) {}
            node [circle, draw, inner sep=0pt, outer sep=0pt, minimum size=3pt, label={180:\tiny$\bot$}, below of=nd0] (bot) {};
        \draw[->] (prt) to (chd);
        \draw[->] (nd0) to (chd);
        \draw[->] (nd0) to (bot);
    \end{tikzpicture}
    \caption{After v6}
\end{subfigure}
\hfill
\begin{subfigure}{0.2\textwidth}
    \begin{tikzpicture}[node distance=0.75cm]
        \draw 
            node [circle, draw, fill, inner sep=0pt, outer sep=0pt, minimum size=3pt, label={0:\tiny$\prt$}] (prt) {}
            node [circle, draw, fill, inner sep=0pt, outer sep=0pt, minimum size=3pt, label={180:\tiny$\n$}, below of=prt, left of=prt] (nd0) {}
            node [circle, draw, inner sep=0pt, outer sep=0pt, minimum size=3pt, label={0:\tiny$\child$}, below of=nd0, right of=nd0] (chd) {}
            node [circle, draw, inner sep=0pt, outer sep=0pt, minimum size=3pt, label={180:\tiny$\bot$}, below of=nd0] (bot) {};
        \draw[->] (prt) to (chd);
        \draw[->] (nd0) to (chd);
        \draw[->] (nd0) to [bend left] (prt);
    \end{tikzpicture}
    \caption{After v7}
\end{subfigure}
\hfill
\begin{subfigure}{0.2\textwidth}
    \begin{tikzpicture}[node distance=0.75cm]
        \draw 
            node [circle, draw, fill, inner sep=0pt, outer sep=0pt, minimum size=3pt, label={0:\tiny$\prt$}] (prt) {}
            node [circle, draw, fill, inner sep=0pt, outer sep=0pt, minimum size=3pt, label={180:\tiny$\n$}, below of=prt, left of=prt] (nd0) {}
            node [circle, draw, inner sep=0pt, outer sep=0pt, minimum size=3pt, label={0:\tiny$\child$}, below of=nd0, right of=nd0] (chd) {}
            node [circle, draw, inner sep=0pt, outer sep=0pt, minimum size=3pt, label={180:\tiny$\bot$}, below of=nd0] (bot) {};
        \draw[->] (prt) to (chd);
        \draw[->] (nd0) to [bend right] (prt);
        \draw[->] (nd0) to [bend left] (prt);
    \end{tikzpicture}
    \caption{After v8}
\end{subfigure}
\caption{Illustration of the structural changes caused by \bfremove. $\bullet$ represents locked nodes, and $\circ$ represents unlocked nodes.}
\label{Fig:remove}
\end{figure}

Now that we are acquainted with the algorithm and before delving into the details of the framework and the proof, we want to present the complexities inherent in the CF algorithm in an abstract manner and to demonstrate a few aspects of its behavior that we believe make it challenging to prove correct. In the next subsection, we will discuss some of these aspects with the help of Figure~\ref{fig:big-graph-example}.

\subsection{Exploring an Example}
\label{Sec:exploring-an-example}
Figure~\ref{fig:big-graph-example} illustrates an example of a series of non-contiguous memory structures that may appear in an arbitrary execution of the CF algorithm. All definitions given in this section will be repeated in due time in a more formal manner. 

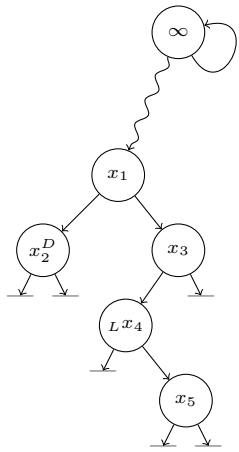
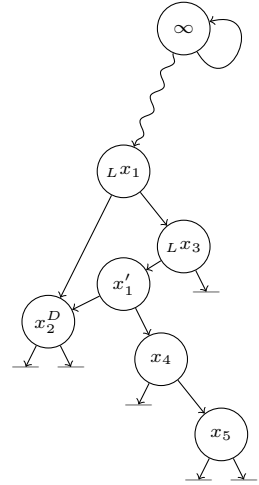
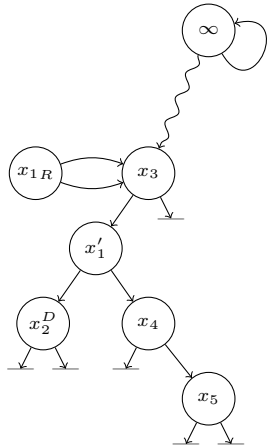
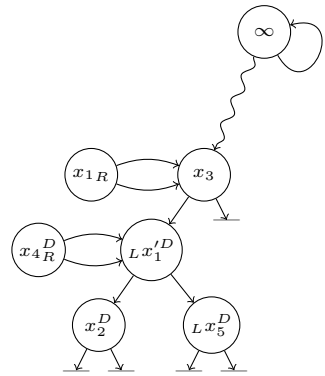
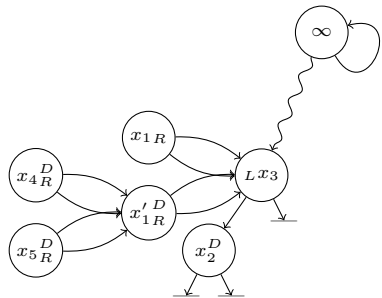
\begin{figure}
    \centering
   
    \begin{subfigure}{0.3\textwidth}
        \begin{tikzpicture}[decoration={snake, amplitude=0.6mm}]
            \tikzstyle{graphnode} = [circle, draw, font=\scriptsize, inner sep=1.2pt, minimum size=7mm]
            \draw
                node [graphnode] (root) {$\infty$}
                node [graphnode, below of=root,left of=root, xshift=2mm, yshift=-9mm] (nd0) {$x_1$}
                node [graphnode, below of=nd0, right of=nd0, xshift=-2mm] (r0) {$x_3$}
                node [graphnode, below of=nd0, left of=nd0] (l0) {$x_2^D$}
                node [graphnode, below of=r0, left of=r0, xshift=3mm] (rl0) {$_{L}x_4$}
                node [graphnode, below of=rl0, right of=rl0, xshift=-2mm] (rlr) {$x_5$}
                ;
                
                \draw[->, decorate] (root) -- (nd0);
                \draw[->] (root) to[out=-60,in=15,looseness=5] (root);
                \draw[->] (nd0) -- (l0);
                \draw[->] (nd0) -- (r0);
                \draw[->] (r0) -- (rl0);
                \draw[->] (r0) -- ++(.3cm,-.6cm) node[yshift=-0.25mm] {---};
                \draw[->] (rl0) -- ++(-.3cm,-.6cm) node[yshift=-0.25mm] {---};
                \draw[->] (rl0) -- (rlr);
                \draw[->] (rlr) -- ++(.3cm,-.6cm) node[yshift=-0.25mm] {---};
                \draw[->] (rlr) -- ++(-.3cm,-.6cm) node[yshift=-0.25mm] {---};
                \draw[->] (l0) -- ++(-.3cm,-.6cm) node[yshift=-0.25mm] {---};
                \draw[->] (l0) -- ++(.3cm,-.6cm) node[yshift=-0.25mm] {---};
        \end{tikzpicture}
        \caption{}\label{subfig:valid-state}
    \end{subfigure}
    \hfill
    \begin{subfigure}{0.3\textwidth}
        \begin{tikzpicture}[decoration={snake, amplitude=0.6mm}]
            \tikzstyle{graphnode} = [circle, draw, font=\scriptsize, inner sep=1.2pt, minimum size=7mm]
            \draw
                node [graphnode] (root) {$\infty$}
                node [graphnode, below of=root,left of=root, xshift=2mm, yshift=-9mm] (nd0) {$_{L}x_1$}
                node [graphnode, below of=nd0, right of=nd0, xshift=-2mm] (r0) {$_{L}x_3$}
                node [graphnode, below of=nd0, yshift=-5mm] (new) {$x_1'$}
                node [graphnode, below of=nd0, left of=nd0, yshift=-10mm] (l0) {$x_2^D$}
                node [graphnode, below of=new, xshift=5mm] (rl0) {$x_4$}
                node [graphnode, below of=rl0, right of=rl0, xshift=-2mm] (rlr) {$x_5$}
                ;
                
                \draw[->, decorate] (root) -- (nd0);
                \draw[->] (root) to[out=-60,in=15,looseness=5] (root);
                \draw[->] (nd0) -- (l0);
                \draw[->] (nd0) -- (r0);
                \draw[->] (r0) -- (new);
                \draw[->] (r0) -- ++(.3cm,-.6cm) node[yshift=-0.25mm] {---};
                \draw[->] (new) -- (rl0);
                \draw[->] (new) -- (l0);
                \draw[->] (rl0) -- ++(-.3cm,-.6cm) node[yshift=-0.25mm] {---};
                \draw[->] (rl0) -- (rlr);
                \draw[->] (rlr) -- ++(.3cm,-.6cm) node[yshift=-0.25mm] {---};
                \draw[->] (rlr) -- ++(-.3cm,-.6cm) node[yshift=-0.25mm] {---};
                \draw[->] (l0) -- ++(-.3cm,-.6cm) node[yshift=-0.25mm] {---};
                \draw[->] (l0) -- ++(.3cm,-.6cm) node[yshift=-0.25mm] {---};
        \end{tikzpicture}
        \caption{}\label{subfig:mid-rotate-l-f6}
    \end{subfigure}
    \hfill
    \begin{subfigure}{0.45\textwidth}
        \begin{tikzpicture}[decoration={snake, amplitude=0.6mm}]
            \tikzstyle{graphnode} = [circle, draw, font=\scriptsize, inner sep=1.2pt, minimum size=7mm]
            \draw
                node [graphnode] (root) {$\infty$}
                node [graphnode, below of=root, left of=root, xshift=2mm, yshift=-9mm] (r0) {$x_3$}
                node [graphnode, left of=r0, xshift=-5mm] (nd0) {${x_1}_R$}
                node [graphnode, below of=r0, left of=r0, xshift=3mm] (new) {$x_1'$}
                node [graphnode, below of=new, left of=new, xshift=3mm] (l0) {$x_2^D$}
                node [graphnode, below of=new, right of=new, xshift=-3mm] (rl0) {$x_4$}
                node [graphnode, below of=rl0, right of=rl0, xshift=-2mm] (rlr) {$x_5$}
                ;
                
                \draw[->, decorate] (root) -- (r0);
                \draw[->] (root) to[out=-60,in=15,looseness=5] (root);
                \draw[->] (nd0) to [bend left=20] (r0);
                \draw[->] (nd0) to [bend right=20] (r0);
                \draw[->] (r0) -- (new);
                \draw[->] (r0) -- ++(.3cm,-.6cm) node[yshift=-0.25mm] {---};
                \draw[->] (new) -- (rl0);
                \draw[->] (new) -- (l0);
                \draw[->] (rl0) -- ++(-.3cm,-.6cm) node[yshift=-0.25mm] {---};
                \draw[->] (rl0) -- (rlr);
                \draw[->] (rlr) -- ++(.3cm,-.6cm) node[yshift=-0.25mm] {---};
                \draw[->] (rlr) -- ++(-.3cm,-.6cm) node[yshift=-0.25mm] {---};
                \draw[->] (l0) -- ++(-.3cm,-.6cm) node[yshift=-0.25mm] {---};
                \draw[->] (l0) -- ++(.3cm,-.6cm) node[yshift=-0.25mm] {---};
        \end{tikzpicture}
        \caption{}\label{subfig:rotate-l-done}
    \end{subfigure}
    \hfill
    \begin{subfigure}{0.45\textwidth}
        \begin{tikzpicture}[decoration={snake, amplitude=0.6mm}]
            \tikzstyle{graphnode} = [circle, draw, font=\scriptsize, inner sep=1.2pt, minimum size=7mm]
            \draw
                node [graphnode] (root) {$\infty$}
                node [graphnode, below of=root, left of=root, xshift=2mm, yshift=-9mm] (r0) {$x_3$}
                node [graphnode, left of=r0, xshift=-5mm] (nd0) {${x_1}_R$}
                node [graphnode, below of=r0, left of=r0, xshift=3mm] (new) {$_{L}x_1'^D$}
                node [graphnode, below of=new, left of=new, xshift=3mm] (l0) {$x_2^D$}
                node [graphnode, left of=new, xshift=-5mm] (rl0) {${x_4}^D_R$}
                node [graphnode, below of=new, right of=new, xshift=-2mm] (rlr) {$_{L}x_5^D$}
                ;
                
                \draw[->, decorate] (root) -- (r0);
                \draw[->] (root) to[out=-60,in=15,looseness=5] (root);
                \draw[->] (nd0) to [bend left=20] (r0);
                \draw[->] (nd0) to [bend right=20] (r0);
                \draw[->] (r0) -- (new);
                \draw[->] (r0) -- ++(.3cm,-.6cm) node[yshift=-0.25mm] {---};
                \draw[->] (new) -- (rlr);
                \draw[->] (new) -- (l0);
                \draw[->] (rlr) -- ++(.3cm,-.6cm) node[yshift=-0.25mm] {---};
                \draw[->] (rlr) -- ++(-.3cm,-.6cm) node[yshift=-0.25mm] {---};
                \draw[->] (rl0) to [bend left=20] (new);
                \draw[->] (rl0) to [bend right=20] (new);
                \draw[->] (l0) -- ++(-.3cm,-.6cm) node[yshift=-0.25mm] {---};
                \draw[->] (l0) -- ++(.3cm,-.6cm) node[yshift=-0.25mm] {---};
        \end{tikzpicture}
        \caption{}\label{subfig:mid-remove-v6}
    \end{subfigure}
    \hfill
    \begin{subfigure}{0.5\textwidth}
        \begin{tikzpicture}[decoration={snake, amplitude=0.6mm}]
            \tikzstyle{graphnode} = [circle, draw, font=\scriptsize, inner sep=1.2pt, minimum size=7mm]
            \draw
                node [graphnode] (root) {$\infty$}
                node [graphnode, below of=root, left of=root, xshift=2mm, yshift=-9mm] (r0) {$_{L}x_3$}
                node [graphnode, left of=r0, xshift=-5mm, yshift=5mm] (nd0) {${x_1}_R$}
                node [graphnode, left of=r0, xshift=-5mm, yshift=-5mm] (new) {${x_1'}^D_R$}
                node [graphnode, below of=r0, left of=r0, xshift=3mm] (l0) {$x_2^D$}
                node [graphnode, left of=new, xshift=-5mm, yshift=5mm] (rl0) {${x_4}^D_R$}
                node [graphnode, left of=new, xshift=-5mm, yshift=-5mm] (rlr) {${x_5}^D_R$}
                ;
                
                \draw[->, decorate] (root) -- (r0);
                \draw[->] (root) to[out=-60,in=15,looseness=5] (root);
                \draw[->] (nd0) to [bend left=20] (r0);
                \draw[->] (nd0) to [bend right=20] (r0);
                \draw[->] (r0) -- (l0);
                \draw[->] (r0) -- ++(.3cm,-.6cm) node[yshift=-0.25mm] {---};
                \draw[->] (new) to [bend left=20] (r0);
                \draw[->] (new) to [bend right=20] (r0);
                \draw[->] (rl0) to [bend left=20] (new);
                \draw[->] (rl0) to [bend right=20] (new);
                \draw[->] (rlr) to [bend left=20] (new);
                \draw[->] (rlr) to [bend right=20] (new);
                \draw[->] (l0) -- ++(-.3cm,-.6cm) node[yshift=-0.25mm] {---};
                \draw[->] (l0) -- ++(.3cm,-.6cm) node[yshift=-0.25mm] {---};
        \end{tikzpicture}
        \caption{}\label{subfig:remove-done}
    \end{subfigure}
    \caption{Illustration of five states in some execution of the algorithm. Nodes marked with $_L$ are locked, nodes marked with $_R$ are removed, and nodes marked with $^D$ are deleted. The zigzag line refers to
		a path from the root to $x_1$.}
    \label{fig:big-graph-example}
\end{figure}

Figure \ref{subfig:valid-state} shows a valid state $M_a$ of the CF algorithm, with the focus on an unbalanced sub-tree consisting of nodes $x_2$, $x_1$, $x_4$, $x_5$, $x_3$ in their increasing key values. Node $x_2$ is logically deleted and thus, is not in $\Set(M_a)=\{ x_1, x_4, x_5, x_3\}$, represented by $M_a$. A process $p$ is inserting the value $x_5$ into the tree, but the parent of the new node, $x_4$, which is locked by $p$ for the duration of the insertion has yet to be unlocked. 

We suggest, as an exercise, to describe a full scenario, beginning with the initial state and ending
 with $M_a$. For example, insert the values 10, 22, 14, 18, and 13 one after the other. At this point, the
tree is not balanced; for example, node 10 has no left descendant but has 4 right descendants. Now
continue by deleting node 13 and adding node 15, and then node 17. The letter $L$ at node 15 indicates that
this node is still locked by process $p$ that added node 17. Every node of $M_a$ is path-connected, which
means that there is a parent--child path from the root to that node.

Figure \ref{subfig:mid-rotate-l-f6} shows a later state, $M_b$, in which the system process (called \Sys)
 is in the middle of  a \rotateLeft\ operation. Virtually, the rotation is performed by moving $x_3$ ``up'' and moving $x_1$ ``down and to the left''. However, the rotation is actually more complicated. Both nodes are locked by the system process for the duration of the operation (and so is the parent node of $x_1$, which
is not explicitly shown in the figure, but is the node with key 22 in our concrete example). Instead of shifting node $x_1$ down, the CF algorithm clones it (i.e., creates a new node with the same key and delete features). The new clone, denoted as $x_1'$, is the left child of $x_3$. It has the same left child as the original ($x_2$), and the previous left child of $x_3$ ($x_4$) is now the right child of the cloned node. Note that in $M_b$, the graph is no longer a BST: the original node $x_1$ has a right descendant (i.e., $x_1'$) that has the same key-value as $x_1$, and node $x_2$ has two parents: $x_1$ and $x_1'$.

With Figure~\ref{subfig:mid-rotate-l-f6} we can exemplify two important concepts that play a significant role
in the proof. A path-connected node $a$ is said to be {\em tree-like} if any right-descendant of $a$ has a key value
greater than the value of $a$ and every left-descendant has a smaller key value. In our example, node $x_1$
(the node with key 14 in our concrete example) is not tree-like, but all other nodes of the graph are tree-like,
and in particular, $x_1'$ is a tree-like node. The second important concept is that of a {\em confluent} node, that is, a node
that has two parents. In our example, node $x_2$ (13) is confluent, and its two parents are node $x_1$ and $x_1'$ (14). 

With these two concepts, several important questions arise, the answers to which are required in the proof. Is it possible
to have more than one node that is not tree-like? Could there be more than one confluent node in a state?
Can a confluent node have more than two parents? (We say that node $x$ is a parent of node $y$ if there is
a path from the root to node $y$ where $x$ is the immediate predecessor of $x$ on that path.) The answer
to all these questions is negative. The reader may find that these answers are intuitively evident, but we do not
think there is an easy proof for them. 

Figure~\ref{subfig:rotate-l-done} shows a later state, $M_c$, in which the system process has just completed the \rotateLeft\ operation from Figure~\ref{subfig:mid-rotate-l-f6}. The original node $x_1$ has been removed, $x_3$ has taken its place, and the cloned node $x_1'$ is the left child of $x_3$. Note that while the nodes that are {\em reachable} from the root node once again constitute a BST, the graph, as a whole, does not, 
since node $x_3$ is pointed to by two nodes (one of which is removed). There is a distinction between a removed node and
a deleted node. A deleted node that is not removed is not contributing its key value to the set of the state, and
it may (under some conditions) regain its status as non-deleted through an insert operation.
 A removed node is not necessarily deleted, and it remains removed forever. Removing does not mean it
is not part of the tree; it is possible for a process executing an operation to reach a non-removed node in its
searching phase, stay dormant in that node while it is being removed, and wake up in what is now a removed node. The process
is still required to continue its search, even if that node's key is the search key (see instructions d2 and i2). 
This mechanism is called {\em backtracking}, an essential feature of the CF algorithm.

As an exercise, the reader may want to complete the sequence of states that need to be added to
reach state $M_c$. Figure~\ref{Fig:rotate-left} can help in this exercise. 
Now, suppose that a process $p>0$ (a {\em working} process) deletes node $x_4$ and another process $q>0$
deletes $x_5$, that is, continuing our exercise, they execute operations $\bfdelete(15)$ and 
$\bfdelete(17)$. This requires that nodes $x_4$ and $x_5$ be locked by $p$ and by $q$, respectively.
Then, the system process \Sys\ is called to remove node $x_4$, that is,
to execute $\bfremove(x_1', \FALSE)$, which should be interpreted as the removal of the right child of
node $x_1'$, i.e., the removal of $x_4$. Note that the left child of $x_4$ is $\bot$, and hence, in the notation of the code, $x_5=\child$. The next state is the result of this removal.

Figure~\ref{subfig:mid-remove-v6} shows a later state, $M_d$, in which $x_4$ is already removed, and nodes
$x_1'$ and $x_5$ are deleted. The system process is in the middle of performing a \bfremove\ operation, physically removing the logically deleted node $x_5$. Although $x_1'$ is also logically deleted, it cannot be removed yet, since neither of its children is $\bot$, which is a precondition of the \bfremove\ operation.

Figure~\ref{subfig:remove-done} shows a much later state, $M_e$, in which the system process has just finished removing the logically deleted node $x_1'$. This follows the completion of the \bfremove\ operation being executed in Figure~\ref{subfig:mid-remove-v6}. Among the effects of that operation was setting the right child of $x_1'$ to $\bot$, enabling the physical removal of $x_1'$. Of particular interest in this figure is the fact that a complex structure of removed nodes has begun to develop, in which separate sub-graphs of removed nodes are chained together, forming complicated ``dendrite-like'' structures that, though external to the ``tree'' portion of the graph, may still participate
in active operations. Note that these dendrite-like structures are anchored to a single ``in-tree'' node (in this case, $x_3$).

As is evident from this small example, the states of the CF algorithm quickly evolve from having the structure of a BST to a much more complex graph structure. Nevertheless, these complex graphs still have some form of regularity, maintaining a multitude of invariants. We delve more deeply into invariants in Section~\ref{sec:invariants-and-props}, in which we try to formulate a notion of regularity that is both an inductive invariant and a useful statement that can be used in the correctness 
proof. (See Definition~\ref{DefReg} of regularity and Theorem~\ref{InvReg} for the proof that regularity
is an invariant.)

Next, we present some of the challenges that this algorithm poses. Elegant and simple as it may seem, it hides
quite complex behaviors. The simplest operation, that of $\bfcontains(x_1)$, is useful for this purpose. Consider again Figure~\ref{subfig:mid-rotate-l-f6}, and imagine that a process $p>0$ is traversing the graph, searching for a node with value $x_1$ (corresponding to 14 in our exercise). If $p$ works alone 
on structure $M_a$ with other processes being inactive, then it
would certainly return a correct answer: $p$ reaches node $x_1$ and reports that value $x_1$ is on the set. 
(Similarly, if $p$ traverses state $M_e$ in search of $x_1$, then it reaches the bottom node
$\bot$ and reports correctly that value $x_1$ is not to be found on the set).
More commonly, the execution of an operation is spread over many structures, since the operations of the different processes
(including of the system) are interleaved. Thus, the processes ``pretend'' that their world is not in permanent flux.
With values as in the exercise, take an execution of $\bfinsert(16)$ by process $p$ and suppose first
a simple case in which the execution occurs completely in state $M_a$. Then
$p$ reaches node $x_5$ (value 17) and finds that its left child, $\nxt_p$, is $\bot$. The code
directs $p$ to \goto i3, after which $p$ obtains a lock on $x_5$ and then adds a new node with key value 16. 
Suppose that instead of $p$ reaching node $x_5$ in $M_a$ after a long traversal, it is sent to execute i3. 
Process $p$ requests a lock on $\nd_p=x_5$, but the scheduler prefers to activate the system process \Sys
so that when $p$ gets the lock, it finds itself at state $M_e$, and when it 
checks $\LR(\nd_p, k_p<\nd_p.\key)$, instead of the previous $\bot$ node,
it finds $x_1'\neq \bot$. 
As a result, process $p$ performs a backtracking step by executing $\nxt := \LR(\nd, k<\nd.\key)$ at line i3. 
Thus, process $p$ at state $M_e$ would reach node $x_1'$ in one step and then,
in two additional steps, get to node $x_2$ and, if all goes well, add a new node of key
value 16 as a right child of $x_2$.

This example demonstrates the need to have a precise definition for a correct traversal process. Such a definition is crucial to proving the correctness of the algorithm. This is why the Scanning Theorem (see Section~\ref{SecHistory}) is such a core component of our proof system. This theorem, in turn, relies on the foundation of a whole body of invariants and behavioral properties that, at first glance, may seem simple, even trivial, but are in fact not so.

Consider, for example, one of the properties we prove in Section~\ref{sec:invariants-and-props}: In any state of the algorithm, for any node $x$, if $x$ is physically removed, then there is no path from the root node to $x$. This invariant sounds intuitively correct but is actually difficult to prove, and it relies on a step-property that depends on the notion
of regularity (see \ref{sp:rem-not-pc}).


\section{Preliminaries}
\label{sec:perlims}
\subsection{Address Structures}
\label{sec:address-structure}
An {\em address structure} is a structure in the model-theoretic sense, the aim of which is to model the state of the memory space at a specific moment during the execution of the CF algorithm. Thus, throughout the execution of the algorithm, a sequence of memory states is created, and each one is represented by a specific address structure. In some contexts we prefer the term ``state'' over ``address structure'', but these terms have the same meaning here. The term {\em structure} refers here to an interpreting structure
of a certain (mostly first-order) logical language, and specifically
an address structure interprets the logical language $\mathcal{L}_{AS}$ 
which we now define. 

\begin{enumerate}
    \item An address structure has four sorts (types of the members of the structure universe) which are \AddressT, \KeyT, \InstructionT, and \ProcT. There are additional sorts that are standard and not specific to the address structure language, such as the Boolean sort (with values \TRUE, \FALSE), and the natural numbers $\omega$. Members of the \AddressT\ sort are called {\em addresses} or {\em nodes}.

    Address structures interpret these sorts as follows:
    \begin{enumerate*}
        \item \AddressT\ is a finite set of addresses which includes the two special distinct values $\Root$ and $\bot$.
        \item \KeyT\ is the set of natural numbers, with the addition of the two special distinct values $\infty$ and $-\infty$.
        \item \InstructionT\ is the set of the command identifiers of the algorithm (e.g, c1 is the identifier of the first command of the \bfcontains\ operation).
        \item \ProcT\ is the set of processes $\{0,\ldots,N\}$. Processes $p$ where $p>0$ are said to be `working' processes, and process $p=0$ is the system process \Sys.
    \end{enumerate*}
    
    The \AddressT\ sort of one structure may be different from that of another structure, but the other sorts
    have the same interpretation in our structures, which model
    states of the CF algorithm.

    \item There are two unary predicates defined over the \AddressT\ sort: \DeletedP, and \RemovedP.
  
    A binary predicate $\LockedP(a,p)$ is defined over $\AddressT\times \ProcT$.

    \item There are four function symbols in $\mathcal{L}_{AS}$:
    \begin{enumerate}
        \item $\KeyF\colon\AddressT\to \KeyT$ maps addresses to key values. We require that $\KeyF(\Root)=\infty$, and $\KeyF(\bot)=-\infty$ in every structure.
        \item $\LeftF,\RightF\colon\AddressT\to \AddressT$ map addresses to addresses. We require that 
        $\LeftF(\bot)=\RightF(\bot)=\bot$ and $\RightF(\Root)= \Root$ in every structure. 
        If $b=\LeftF(a)$ ($b=\RightF(a)$) we say that $b$ is the left (right) child of $a$. 
        \item $\ControlF\colon\ProcT \to \InstructionT$ maps process id's to instructions, and represents the program counters of the various processes.
    \end{enumerate}
        
    \item As any logical language, $\mathcal{L}_{AS}$ includes logical variables which range over the different sorts. For example, in
    the sentence
    $\forall x (x\neq \Root\wedge x\neq\bot\rightarrow \KeyF(x)\in\omega)$, $x$ is a
    quantified logical variable of sort \AddressT. 
    We have the following conventions.
    \begin{enumerate*}
        \item $x$, $y$, $z$, $w$, $a$ range over the \AddressT\ sort;
        \item $k$ ranges over the \KeyT\ sort; and
        \item $p$ and $q$ range over the \ProcT\ sort.
    \end{enumerate*}
    
    Additionally, $\mathcal{L}_{AS}$ has {\em names} which denote addresses,
    but unlike the logical variables cannot be quantified.
    For example, $p_3$ is the name of the third process, and
    it does not make sense to begin a formula with $\exists p_3 (...)$.
    The program variables($\nd_p$, $\nxt_p$, $k_p$ etc.) which appear in the code of the algorithm are names in $\mathcal{L}_{AS}$. 
      
    
    
    \item 
    The term $\new$ is a shorthand for an address definition:  
    \begin{equation}
    \label{Def-Of-new}
        \new = \begin{cases}
                    \Left(\r) & \ControlF(\Sys) \in \{\rom{f7, f8, f9}\} \\
                    \Right(\ell_0) & \ControlF(\Sys) \in \{\rom{r7, r8, r9}\} \\
                    \Root & \text{otherwise}
                \end{cases}
    \end{equation}
    
\end{enumerate}

Let $M$ be any structure that interprets the $\mathcal{L}_{AS}$ language.
For any term or formula $X$ of $\mathcal{L}_{AS}$, $X^M$ denotes the interpretation of $X$ in $M$. For example, $\AddressT^M$ is the
set of addresses of $M$, $\LeftF^M\colon\AddressT^M\to \AddressT^M$ is 
the interpretation in $M$ of the function symbol
$\LeftF$, $\nd_p^M$ is the address to which program-variable
$\nd_p$ refers to in $M$, and $(a=\RightF(\nd_p)^M)$
is the statement that the right child in $M$
of the address of program-variable $\nd_p$ is $a$. Sometimes, when the relevant address structure is 
obvious, the state-identifier superscript is omitted.

Throughout this work, we use $\varphi(p)$ to denote the instantiation of $\varphi$ to process $p>0$.

\begin{definition}
The initial address structure is defined as follows:
\begin{enumerate}
\item The \AddressT\ sort of the initial structure contains only \Root\ and $\bot$.
 
\item  Predicates \DeletedP,  \RemovedP, and \LockedP\ have the empty extension in the initial structure.

\item $\KeyF(\Root)=\infty$, $\KeyF(\bot)=-\infty$.
$\RightF(\Root)=\Root$,
$\LeftF(\Root) =\bot$,
and $\RightF(\bot)=\LeftF(\bot)=\bot$.

\item The program-counter of any process is at line m0 of the master-program, i.e. $\ControlF(p)=m0$ for every process $p\in\ProcT$.

\item For any process $p\in\ProcT$, $\nd_p=\Root$. For $p>0$, $\nxt_p=\Root$, and $\prt=\r=\ell_0=\lr=\rl=\Root$ and $\lft=\TRUE$.
\end{enumerate}
\end{definition}

\begin{definition}[Paths]
\label{DefPath}
Let $M$ be an address structure.
\begin{enumerate}
    \item We say that address $x$ {\em points to} address $y$ in $M$, denoted $x\arrow y$, if
    $\LeftF(x)=y \vee \RightF(x)= y$ holds in $M$. 
    
    \item A {\em path} in $M$ is a sequence $P$ of addresses $(x_0,\ldots, x_{n})$ (where $n\geq 0$)
    such that for every index $0\leq i <n$, $x_i\arrow x_{i+1}$. We say that
    $(x_i,x_{i+1})$ is an {\em arc} on $P$, and that addresses $x_i$ and $x_{i+1}$ are on the path.  
    Path $P$ is said to lead from $x_0$ to $x_{n}$ in $M$. 
    
    \item The transitive and reflexive closure of the $x\arrow y$ relation is denoted $\arrow^*$.
    If $\Root\arrow^* x$, then we say that address $x$ is {\em path-connected}.

    \item If $x$ and $y$ are nodes of $M$ and $k\in \omega$ (a value that is different from $\infty$ and $-\infty$)
    then $x \arrow_k y$ is the conjunction of the following statements.
    \begin{enumerate}
        \item $\KeyF(x)\neq k$, and 
        \item $k<\KeyF(x) \Rightarrow y=\LeftF(x)$, and 
        \item $k>\KeyF(x) \Rightarrow y=\RightF(x)$.
    \end{enumerate}
    
    \item Given a key value $k$, a $k$-search path in $M$ (or a ``$k$ path'') is a sequence of addresses
    $x_0,\ldots,x_n$ such that $(x_i\arrow_k x_{i+1})^M$ for every $i<n$. 
    $(x\arrow_k^* y)^M$ denotes the transitive and reflexive closure of the $\arrow_k$ relation. If $(x\arrow_k^*y)^M$ we say that $y$ is $k$-connected to $x$ in $M$.
    If $\Root\arrow_k^* x$, then we say that node $x$ is $k$-{\em connected}.
\end{enumerate}
\end{definition}

For any address structure $M$ we define a set of key values $\Set(M)$. 
\begin{definition}[The Set of a structure]
\label{DefSet}
\begin{equation*}
\Set(M)=\{k\in\omega \mid \exists a\, (\Root\arrow_k^* a\, \wedge k=\KeyF(a) \wedge \neg\DeletedP(a) )\}.
\end{equation*}
\end{definition}

Let $\mathit{Stp}$ be the set of steps of the Contention Free algorithm as described in Section~\ref{sec:the-algorithm}.
A {\em history} is a sequence of states (i.e. memory structures) $(M_i\mid {i\in I})$,
where $I$ is either the set $\omega$ of finite ordinal numbers or a finite interval of $\omega$, 
and for every index $i$ and its successor $i+1$ in $I$, $(M_i,M_{i+1})$ is a step in $\mathit{Stp}$.
We are mainly interested in infinite histories $(M_i\mid i\in \omega)$ such that $M_{0}$ is an initial structure state. 

If $l_1,l_2\in\InstructionT$ are instructions, then $\Step(p, l_1, l_2)$ denotes the set of all steps by process $p$ of atomic instruction $l_1$ that have the effect (among other things) of setting $\ControlF(p)=l_2$. For example, $s\in\Step(p,\rom{i3}, \rom{i1})$ says that step $s=(M,N)$ is an execution of an instruction i3 by process $p$ such that takes the \goto\ i1 branch, resulting in $\ControlF(p)^N=l_2$.

If $l\in\InstructionT$ is an instruction, then $\Step(p,l)$ denotes the set of all steps by process $p$ of atomic instruction $l$.
For example, executions of instruction i3 split into those that take the \goto\ i1 branch and those that return to m0: 
$\Step(p, \rom{i3})= \Step(p, \rom{i3, i1})\cup \Step(p, \rom{i3, m0})$.

\subsection{Invariants and step-properties}
In this chapter, we make extensive use of {\em invariants} and {\em step-properties} to prove various claims regarding aspects of the behavior of the CF algorithm:

\begin{definition}
\label{DefInv}
A {\em step-invariant} is a sentence $\sigma$ in $\mathcal{L}_{AS}$ such that for every step 
$(M,N)\in\mathit{Stp}$, $M\models \sigma \Rightarrow N\models \sigma$.

A step-invariant $\sigma$ is said to be an {\em inductive invariant} if it holds in every initial structure.

A sentence $\sigma$ in $\mathcal{L}_{AS}$ that holds in every state of every history sequence (of the CF algorithm) is said to be a {\em valid state property}.
Inductive invariants and their consequences are valid sentences.

A statement about pairs of states $(S,T)$ is said to be a {\em valid step-property} if it is true about every
step of the algorithm.
\end{definition}

\begin{remark}
A step-property is not a step-invariant simply
because a step-property is a property of {\em pairs}
of steps (shared by all steps) whereas a step-invariant
is a property of states (which no step can violate).
\end{remark} 

These definitions of ``step-invariant'', ``inductive invariant'' and ``step-property'' are the usual ones~\cite{lamport1977proving,ohearn2010hindsight}.
We often use the shorthand {\em invariant} instead of step-invariant.

The following is an example of a step-property:
\begin{SP}
\label{sp:immutable-keys}
We assume that the key fields of addresses are immutable. Formally, for any step $(M,N)$ and for any address $x$ in $\AddressT^M$ and in $\AddressT^N$, $\KeyF(x)^M=\KeyF(x)^N$.
\end{SP}

\section{Properties of the Contention-Friendly Algorithm}
\label{sec:invariants-and-props}
In this section, we formulate and prove a myriad of invariants and properties of the CF algorithm. 
The culmination of this section is the presentation of the {\em Regularity} property, and the proof that this property is an invariant of the algorithm.
This is a core component of our work, enabling the proofs in later sections.

Many of the other properties and invariants we prove in this section are necessary for the proof that regularity is an invariant of the algorithm.

We supplement the theoretical work in this section with a bounded model of the algorithm, encoded in TLA+~\cite{engberg1992TLA}, which was used to model-check all of the invariants and step-properties presented in this section. This proved to be quite useful, as demonstrated in footnote~\ref{footnote:pt2} of definition~\ref{DefPot}.
While not a full verification of our proofs (due to the bounded nature of the model), this model-checking process does act to validate the correctness of our proofs.
The model and accompanying invariants and properties can be found at~\cite{thesis-code}.

We begin our journey with a simple inductive invariant, which says that
there is no address that points to itself, except for $\Root=\RightF(\Root)$ and $\bot=\LeftF(\bot)=\RightF(\bot)$.
\begin{inductive-invariant}
\label{inv:address-points-to-root}
\begin{enumerate}
    \item[]
    \item For every address $x$, if $x\not\in\{\Root, \bot\}$ then 
    $x\neq\LeftF(x) \wedge x\neq \RightF(x)$. $\Root\neq \LeftF(\Root)$.
    \item For every address $x\neq\Root$, if $x\arrow \Root$, then either $\RemovedP(x)$ or $\LockedP(\Root,\Sys)$,
    $x=\n$ and $\ControlF(\Sys)\in \{v8, v9\}$.
    \item $\Root=\RightF(\Root)\wedge \bot=\LeftF(\bot)=\RightF(\bot)$.
\end{enumerate}
\end{inductive-invariant}

\begin{remark}
\label{remark:rl-rr-symmetry}
Throughout the proofs in this section we rely on the symmetry of \rotateLeft\ and \rotateRight; we will only prove the claims for the case of \rotateLeft, and omit the nearly-identical proofs for the case of \rotateRight.
\end{remark}

\begin{proof}
The invariant statement is trivially true at the initial state which has only two addresses --- \Root\ and $\bot$.

We note that for any step $s=(M,N)$ such that the functions \LeftF/\RightF\ are the same in $N$ and $M$, the claim holds trivially, since the invariant statement holds trivially (as it is a claim about the functions \LeftF/\RightF).

So let $s=(M,N)$ be any step such that $M$ satisfies the invariant and the step changes the functions \LeftF/\RightF. We have to prove that it holds also in $N$.

If $s$ is a successful execution of instruction i3
by some process $p>0$ (a {\em working} process), and \new\ is the new address
introduced by this step, then arc $(\nd_p,\bot)$ in $M$ is
replaced by arc $(\nd_p,\new)$ in $N$, and since $\new\not\in \{\nd_p,\Root\}$
(as $\nd_p$ and $\Root$ are not new addresses), arc $(\nd_p,\new)$
is neither a self-pointing arc nor a \Root\ pointing arc. If $(x,y)$ is an old arc of $M$ that remains
in $N$, then it is obvious that $y\neq\Root$ by our assumption on $M$.

Suppose next that $s$ is a step by the \Sys\ process that introduces a new node. 
Then $s$ is an execution of instruction f6 or r6. Suppose that $s$ is an execution of instruction f6. 
Then a new node \new\ is created whose left and right children
are nodes $\rl$ and $\ell_0$ which are already in $M$.
Thus $(\new,\rl)$ and $(\new,\ell_0)$ are not
self pointing arcs. But neither are they \Root\ pointing arcs:
By the precondition pr2 of \rotateLeft, $\n\not\in \{\Root,\bot\}$. Node $\n$ points to $\rl$ and to $\ell_0$
in $M$, and so these two nodes are different from the root, since $\ControlF(\Sys)\notin\{v8,v9\}$, which means that only the root can point to itself in $M$.

Finally, suppose that $s$ is a step by the \Sys\ process
that does not introduce a new address, but changes the \LeftF\ 
or the \RightF\ function. Executions of instructions f7 and f8 (as well as
r7 and r8) and v6, v7, and v8 are such steps:

In a step $s$ that executes instruction f7, arc $(\n,\ell_0)$ (due to $\ell_0=\LeftF(\n)$) of $M$ is replaced by arc $(\n,\r)$ 
of $N$ (due to $\r=\LeftF^N(\n)$). But $(\n,\r)$ is an 
arc of $M$ (due to $\r=\RightF(\n)$), and since $\ControlF(\Sys)\notin\{v8,v9\}$, $\r\neq\Root$. The other clauses of the invariant hold trivially.

In a step $s$ that executes instruction f8, arc $(\prt,\n)$ of $M$ is replaced by arc $(\prt,\r)$ of $N$. 
Once again, since $\ControlF(\Sys)\notin\{v8,v9\}$, $\r\neq\Root$. The other clauses of the invariant hold trivially.

In a step $s= (M,N)$ that executes v6, arc $(\prt,\n)$ of $M$ is replaced by arc $(\prt,\child)$ of $N$. 
Since $(\n, \child)^M$, $\child\neq\Root$, and the invariants hold in $N$.

If $s$ executes v7, then arc $(\n,\child)$ is replaced by $(\n,\prt)$, and it is indeed possible that $\prt=\Root$, as the invariant states. 
The arguments in case step $s$ is an execution of v8 are similar.
\end{proof}

It is easy to check (syntactically) that no step reduces the extension of predicate \RemovedP. 
Also, a step that adds a new address (one of i3, f6 and r6) adds an address that is not removed. 
We formalize this as the following trivial step-property:

\begin{SP}
\label{InvRem}
For any step $s=(M,N)$, if address $x$ is removed in $M$,
then it is removed in $N$, and if $x$ is an address
of $N$ but not of $S$ then $\neg\RemovedP(x)^N$.
\end{SP}

\begin{remark}[Using invariants and step-properties as axioms.] 
\label{remark:invs-as-axioms}
In proving that an $\mathcal{L}_{AS}$ sentence $\alpha$ is an invariant we 
use the tools of mathematics, and once $\alpha$ is an established invariant
we may use it as an axiom in proving that other sentences $\beta$ are invariants. That is, when proving
that for any step $(M,N)$, $\beta^M\rightarrow \beta^M$ we may assume that $\alpha$ holds in both $M$
and $N$ and use this assumption in the proof. Surely, this is nothing more than proving that
$\alpha\wedge\beta$ is an invariant, but it brings about clearer proofs. When declaring that $\beta$
is an invariant we usually write in square brackets the
invariants and step-properties on which that proof relies. Likewise, we may
use proven step-properties as axioms when proving newer step-properties. 
\end{remark}

\begin{definition}
\label{Def:ControlDependent}
Many useful invariants have the form $\ControlF(p)=line \rightarrow \varphi$ 
where $line\in\InstructionT$; we say that such valid statements are {\em control-dependent} invariants. 
\end{definition}

Control-dependent invariants are often quite simple to prove. 
We present some such invariants in Figures
\ref{fig:protocol-elementary-invariants} and \ref{cdLM}.
As explained in Remark \ref{remark:invs-as-axioms}, we may use these control-dependent invariants
as axioms when proving other invariants.
The proof that the control-dependent invariants are indeed invariant statements
is simple but not completely trivial. As an example, 
we prove one of the invariants presented in 
Figure~\ref{fig:protocol-elementary-invariants}.

\begin{inductive-invariant}[uses: \ref{sp:immutable-keys}, \ref{InvRem}]
\label{inv:nd=k}
For every process $p>0$,
\[\ControlF(p)=\rom{i1} \rightarrow (\KeyF(\nd_p) = k_p \rightarrow \RemovedP(\nd_p)).\]
\end{inductive-invariant}

\begin{proof}
Let $p>0$ be some working process, and let $s=(M,N)$ by a step of the algorithm such that the invariant holds in $M$. If $\ControlF(p)^N\neq \rom{i1}$, then the claim holds trivially, and specifically in the initial state.

By Step-property \ref{sp:immutable-keys} and Invariant \ref{InvRem}, if $\nd_p^M=\nd_p^N$, then $\KeyF(\nd_p)^M=\KeyF(\nd_p)^N$ and $\RemovedP(\nd_p)^M \iff \RemovedP(\nd_p)^N$. Since $\nd_p$ is a local variable of process $p$, only $p$ is able to modify $\nd_p$. Using all of these facts together leads to the conclusion that if step $s$ is not a step by $p$ and $\ControlF(p)^N=\rom{i1}$, then the invariant holds in $N$.

If $s$ is a step by $p$ such that $\ControlF(p)^N=\rom{i1}$, then there are four possibilities:

\begin{enumerate}
\item $s\in \Step(p,\rom{m0, i1})$, and so $\nd_p^N=\Root$, and $\KeyF(\Root)=\infty$ whereas $k_p^N\in\omega$, so that $\KeyF(\nd_p)\neq k_p$ at $N$. Hence the invariant holds in $N$.
\item $s\in \Step(p,\rom{i1, i1})$. An inspection of this step shows that, since the \goto\ i2 branch was not taken in this execution of instruction i1, then $\KeyF(\nd_p)\neq k_p$ holds, and hence the invariant holds in $N$.
\item $s\in\Step(p, \rom{i2, i1})$. An inspection of this execution of instruction i2 shows that $M\models \RemovedP(\nd_p)$. Since step $s$ does not change the denotation of $\nd_p$ or the predicate \RemovedP, the invariant holds in $N$.
\item $s\in\Step(q, \rom{i3, i1})$. Here we have that $\nd_p^N=\nd_p^S$, $k_p^N=k_p^M$, and $\RemovedP(\nd_p)^N$ if and only if $\RemovedP(\nd_p)^M$. So the invariant holds in $N$.
\end{enumerate} 
\end{proof}

As this proof demonstrated, proving control-dependent invariants is a mostly mechanical process, and can be automated. Indeed, the remainder of these control-dependent invariants are  validated in the supplementary TLA+ specification~\cite{thesis-code}.

\begin{figure}
\centering
\begin{tabular}{|l|l|}\hline
\begin{minipage}[t]{65mm}
\begin{alltt}
\booleanT\ \bfcontains(\(k\))
 c1-2  \(\nd\sb{p}\neq\bot\)
 c1    \(\KeyF(\nd\sb{p})\neq k\sb{p}\)  
 c2    \(\KeyF(\nd\sb{p}) = k\sb{p}\)

\booleanT\ \bfdelete(\(k\))
 d1-2  \(\nd\sb{p}\neq\bot\)
 d1    \(\KeyF(\nd\sb{p})={k\sb{p}}\rightarrow\RemovedP(\nd\sb{p})\)
 d2    \(\LockedP(nd\sb{p},p)\wedge\KeyF(\nd\sb{p})=k\sb{p}\)
\end{alltt}
\end{minipage}
&
\begin{minipage}[t]{65mm}
\begin{alltt}
\booleanT\ \bfinsert(\(k\))
 i1-3  \(\nd\sb{p}\neq\bot\)
 i1    \(\KeyF(\nd\sb{p})={k\sb{p}}\rightarrow\RemovedP(\nd\sb{p})\)
 i2    \(\LockedP(nd\sb{p},p)\wedge\KeyF(\nd\sb{p})=k\sb{p}\)
 i3    \(\LockedP(nd\sb{p},p)\wedge\KeyF(\nd\sb{p})\neq{k\sb{p}} \wedge\)
       \(\nxt\sb{p} =\bot\)

       
       
 
\end{alltt}
\end{minipage}\\
\hline
\end{tabular}
\caption{Control dependent invariants of the working processes.}
\label{fig:protocol-elementary-invariants}
\end{figure}

\begin{figure}
\centering
\begin{tabular}{|l|}\hline
\begin{minipage}[t]{0.9\textwidth}
\begin{alltt}
\booleanT\ \rotateLeft(\(\prt,\lft\)):
 f6-9 \(\prt\neq\bot \wedge \neg\RemovedP(\prt) \wedge \n\neq\bot \wedge \r=\RightF(\n)\neq\bot \wedge\)
      \(\neg\RemovedP(\r) \wedge \neg\RemovedP(\n) \wedge \KeyF(\n)\neq\KeyF(\prt) \wedge \n\neq\prt \wedge\)
      \(\LockedP(\prt,\Sys) \wedge \LockedP(\n,\Sys) \wedge \LockedP(\r,\Sys)\)
 f6-8 \(\n=\LR(\prt,\lft)\)
 f6   \(\rl=\LeftF(\r) \wedge \ell\sb{0}=\LeftF(\n)\)
 f7-9 \(\KeyF(\new)=\KeyF(\n) \wedge \new\neq\n \wedge \neg\RemovedP(\new) \wedge\)
      \(\DeletedP(\new)\iff\DeletedP(\n) \wedge \LeftF(\new)=\ell\sb{0} \wedge \RightF(\new)=\rl \wedge\)
      \(\LeftF(\r)=\new \)
 f7   \(\LeftF(\n)=\ell\sb{0}\)
 f8-9 \(\LeftF(\n)=\r\)
 f9   \(\r=\LR(\prt,\lft)\)
 
\end{alltt}
\end{minipage}\\
\hline
\begin{minipage}[t]{0.9\textwidth}
      \begin{alltt}
\booleanT\ \rotateLeft(\(\prt,\lft\)):
 r6-9 \(\prt\neq\bot \wedge \neg\RemovedP(\prt) \wedge \n\neq\bot \wedge \ell\sb{0}=\LeftF(\n)\neq\bot \wedge\)
      \(\neg\RemovedP(\ell\sb{0}) \wedge \neg\RemovedP(\n) \wedge \KeyF(\n)\neq\KeyF(\prt) \wedge \n\neq\prt \wedge\)
      \(\LockedP(\prt,\Sys)\wedge\LockedP(\n,\Sys) \wedge \LockedP(\ell\sb{0},\Sys)\)
 r6-8 \(\n=\LR(\prt,\lft)\)
 r6   \(\lr=\RightF(\ell\sb{0})\wedge\r=\RightF(\n)\)
 r7-9 \(\KeyF(\new)=\KeyF(\n) \wedge \new\neq\n \wedge \neg\RemovedP(\new) \wedge\)
      \(\DeletedP(\new)\iff\DeletedP(\n) \wedge \RightF(\new)=\r \wedge \LeftF(\new)=\lr \wedge\)
      \(\RightF(\ell\sb{0})=\new\)
 r7   \(\RightF(\n)=\r\)
 r8-9 \(\RightF(\n)=\ell\sb{0}\)
 r9   \(\ell\sb{0}=\LR(\prt,\lft)\)
 
\end{alltt}
\end{minipage}\\
\hline
\begin{minipage}[t]{0.9\textwidth}
\begin{alltt}
\booleanT\ \bfremove(\(\prt,\lft\)):
 v6-9 \(\prt\neq\bot \wedge \neg\RemovedP(\prt) \wedge \n\neq\bot \wedge \neg\RemovedP(\n) \wedge\)
      \(\LockedP(\n,\Sys)\wedge\LockedP(\prt,\Sys) \wedge \DeletedP(\n) \wedge \n\neq\prt\)
 v6   \(\n=\LR(\prt,\lft)\wedge\n\arrow\child \wedge \n\arrow\bot\) 
 v6-7 \((\LeftF(\n)=\bot\rightarrow\RightF(\n)=\child)\wedge(\LeftF(\n)\neq\bot\rightarrow\LeftF(\n)=\child)\)      
 v7-8 \(\n\arrow\child \wedge \prt\arrow\child \wedge \prt\notarrow\n\)
 v8-9 \(\n\arrow\prt\)
 
\end{alltt}
\end{minipage}\\
\hline
\end{tabular}

\caption{Control-dependent invariants of the \Sys\ process.}
\label{cdLM}
\end{figure}

Two simple observations can be made about steps that change the \LeftF/\RightF\ functions:
\begin{enumerate}
    \item For any address $x$, a step can only change either $\LeftF(x)$ or $\RightF(x)$ of only a single address $x$, but not both\footnote{A step in which a new node
    is introduced, for example an execution of f6, expands the domain of the \LeftF\ or the \RightF\ function, but {\em changes} the value
    of just one argument.}. Such a step does not change the truth value of $\RemovedP(x)$ or $\DeletedP(x)$.
    
    \item A working process $p>0$ can change these functions only when executing instruction i3 of the \bfinsert\ operation. 
    In this atomic step, process $p$ creates a new node and assigns it to $\LeftF(\nd_p)$ or to $\RightF(\nd_p)$. 
    Observe that this change is from $\LeftF(\nd_p)=\bot$ to $\LeftF(\nd_p)\neq \bot$, 
    or else from $\RightF(\nd_p)=\bot$ to $\RightF(\nd_p)\neq \bot$. 
\end{enumerate}

We formalize this observation in the following step-property:

\begin{SP}
\label{Cl3.3}
Let $(M,N)$ be a step by process $p>0$.
Suppose that for some three distinct addresses $x$, $y$ and $z$: 
$\LeftF^M(x)=y \wedge \LeftF^N(x)=z$ or $\RightF^M(x)=y \wedge 
\RightF^N(x) =z$.
Then $(M,N)$ is an execution of instruction \rom{i3}, $y=\bot$, $M\models x =\nd_p \wedge \neg\RemovedP(x)$.
And in $N$, $z$ is a new node, and $N\models \LeftF(z)=\RightF(z)=\bot$.
\end{SP}

The proof of this step-property is similar to the proof of Invariant \ref{inv:nd=k} above, relying on syntactic reasoning, 
on Invariant \ref{InvRem}, and on the fact that local variables of a process can only be modified by that process.
Validation of this step-property is also included in the accompanying repository~\cite{thesis-code}.

\begin{inductive-invariant}[uses: \ref{InvRem}, \ref{Cl3.3}]
\label{inv:rem-chld-not-bot}
\[\forall x (\RemovedP(x)\rightarrow \LeftF(x)=\RightF(x)\neq\bot).\]
\end{inductive-invariant}

\begin{proof}
Since the initial state contains only two addresses, \Root\ and $\bot$ which are not removed, it is obvious that the initial state
satisfies invariant. We have to prove that the invariant is preserved by every step $s=(M,N)$.   

So assume that the invariant holds in $M$,
and let $x$ be an address in $N$ such that $\RemovedP^N(x)$. 
By Invariant \ref{InvRem}, $x$ is not a new address of $N$. Thus $x$ is an
address in $M$, and either: 
\begin{enumerate*}[label=(\arabic*)]
    \item $x$ is removed in $M$, or
    \item $x$ is not removed in $M$.
\end{enumerate*}
\begin{enumerate}
    \item Assume that $\RemovedP^M(x)$, and so 
    \[
        M\models \RemovedP(x)\wedge \LeftF(x)=\RightF(x)\neq\bot.
    \]
    We check that there is no step $s=(M,N)$ that changes $\LeftF(x)$ or $\RightF(x)$.
    The instructions that may change $\LeftF(x)$ or $\Right(x)$
    are i3 (affecting $\nd_p$), f6 (affecting $\LeftF(\r)$),
    f7 (affecting $\LeftF(\n)$, and f8 (affecting $\LeftF(\prt)$
    or $\RightF(\prt)$) (as well as the respective \rotateRight).
    \begin{enumerate}
        \item $s$ cannot be an execution of i3: by Step-property \ref{Cl3.3}, $\neg\RemovedP(x)$ when $x=\nd_p$ for some working process $p>0$.
        \item An execution of f6 changes $\LeftF^M(\r)$ from $\rl$ (in $M$) to $\new$ (in $N$). However, $\r$ is not removed
        in $S$ because of precondition pr3 (see control-dependent invariants f6-9 in Figure~\ref{cdLM}).
        \item An execution of f7 changes $\LeftF^M(\n)$, but $\n$ is not removed in $M$ (see control-dependent invariants f6-8 in Figure~\ref{cdLM}). 
        \item An execution of f8 changes $\LeftF^M(\prt)$ (or $\RightF^M(\prt)$) but again, $\prt$ is not removed in $M$ (see control-dependent invariants in Figure~\ref{cdLM}).
    \end{enumerate}
    
    \item Assume next that $\neg\RemovedP(x)$ in $M$, and hence step $s=(M,N)$ is a removal step. There are three removal steps
    and they are all by the \Sys\ process: f9, r9, and v9, and they all remove node $\n$. 
    In $N$, $\LeftF(\n)=\RightF(\n)=\r, \ell_0,\prt$ which are all not $\bot$ as can be gathered from the control-dependent invariants in Figure~\ref{cdLM}.
    \end{enumerate} 
\end{proof}

\begin{observation}[uses: \ref{inv:rem-chld-not-bot}]
\label{observation:rem-children-constant}
We already made the syntactic observation that there is no step that changes both \LeftF\ and \RightF\ at once. It follows immediately from Invariant~\ref{inv:rem-chld-not-bot} that for any step $(S, T)$ and $x\in\AddressT^S$, $\RemovedP^S(x)\rightarrow  \LeftF^S(x)=\LeftF^T(x) \wedge \RightF^S(x)=\RightF^T(x)$.
\end{observation}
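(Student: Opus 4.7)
The proof plan is short because the observation really is essentially a direct consequence of Invariant~\ref{inv:rem-chld-not-bot}, Step-property~\ref{InvRem}, and the syntactic fact (stated just before the formalization~\ref{Cl3.3}) that no single step alters both $\LeftF(x)$ and $\RightF(x)$ for the same address~$x$.

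The plan is to fix an arbitrary step $(S,T)$ and an address $x \in \AddressT^S$ with $\RemovedP^S(x)$, and argue by contradiction. Suppose one of the two fields of $x$ changes across the step; by symmetry I may assume $\LeftF^T(x) \neq \LeftF^S(x)$. First I would invoke the syntactic single-field observation to conclude that then necessarily $\RightF^T(x) = \RightF^S(x)$. Next, Invariant~\ref{inv:rem-chld-not-bot} applied to state $S$ gives $\LeftF^S(x) = \RightF^S(x)$ (and both are distinct from $\bot$). Step-property~\ref{InvRem} then ensures $\RemovedP^T(x)$, so Invariant~\ref{inv:rem-chld-not-bot} applies in $T$ as well and yields $\LeftF^T(x) = \RightF^T(x)$.

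Chaining these equalities produces
\[
\LeftF^T(x) \;=\; \RightF^T(x) \;=\; \RightF^S(x) \;=\; \LeftF^S(x),
\]
which contradicts the assumption $\LeftF^T(x) \neq \LeftF^S(x)$. The symmetric case (where $\RightF(x)$ changes instead) is handled identically. Hence both fields are preserved by the step, which is exactly the claim.

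There is no real obstacle here: the only subtlety is remembering to apply Invariant~\ref{inv:rem-chld-not-bot} in both $S$ and $T$, which requires the use of Step-property~\ref{InvRem} to carry the removal status forward across the step. Because of this, the observation relies on \emph{three} prior facts (the syntactic observation, Invariant~\ref{inv:rem-chld-not-bot}, and Step-property~\ref{InvRem}), even though its statement only names Invariant~\ref{inv:rem-chld-not-bot}.
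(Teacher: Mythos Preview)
Your argument is correct and is exactly the unpacking the paper leaves implicit: the paper offers no separate proof for this observation, merely asserting that it ``follows immediately'' from Invariant~\ref{inv:rem-chld-not-bot} together with the single-field syntactic observation. Your only addition is making explicit the use of Step-property~\ref{InvRem} to carry $\RemovedP$ from $S$ to $T$ so that the invariant can be applied in both states---a dependency the paper silently assumes.
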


\begin{corollary}[uses: \ref{Cl3.3}, \ref{observation:rem-children-constant}]
\label{cor:i3-not-rem}
The combination of Step-property \ref{Cl3.3} and Observation \ref{observation:rem-children-constant} implies that if a working process $p>0$ executes a step $(S,T)\in\Step(p, \rom{i3, m0})$, then $nd_p$ is not removed in $S$ or in $T$, since a removed node does not have $\bot$ as a child, while $\nd_p$ must have $\bot$ as a child in order for $(S,T)$ to execute.
\end{corollary}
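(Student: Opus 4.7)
The plan is to unpack what membership in $\Step(p, \rom{i3, m0})$ forces on the state of $\nd_p$ just before the step, and then combine this with the two cited results.

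First I would observe that the only way an execution of i3 by a working process $p$ can reach m0 (rather than looping back to i1) is through the \return\ \TRUE\ branch at the bottom of i3. That branch is entered precisely when $\LR(\nd_p, k_p<\nd_p.\key)^S = \bot$, so in $S$ one of the two children of $\nd_p$ (selected by the sign of $k_p - \nd_p.\key$) equals $\bot$, and the step assigns the freshly created node \new\ to that same child slot.

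Next, I would apply Step-property \ref{Cl3.3} to the modified pointer (with $x=\nd_p$, $y=\bot$, $z=\new$); it yields $\neg\RemovedP^S(\nd_p)$ directly. This matches the intuition captured by Invariant \ref{inv:rem-chld-not-bot}: a removed node has $\LeftF=\RightF\neq \bot$, so the presence of a $\bot$ child rules out removal of $\nd_p$ in $S$. Equivalently, Observation \ref{observation:rem-children-constant} contrapositively says that if a child of $x$ changes across a step, then $x$ is not removed in $S$, which is exactly our situation.

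For the claim in $T$, I would note that $(S,T)$ is a step by the working process $p>0$; the only instructions that add an address to \RemovedP\ are f9, r9, v9, all executed by \Sys, and Step-property \ref{InvRem} forbids any step from removing an address from \RemovedP. Hence $\RemovedP^S(\nd_p)\iff \RemovedP^T(\nd_p)$, and so $\neg\RemovedP^T(\nd_p)$ follows from the previous paragraph.

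There is no real obstacle here; the only substantive check is confirming that the i3-to-m0 transition indeed forces a $\bot$ child of $\nd_p$ in $S$ (so that \ref{Cl3.3} and \ref{inv:rem-chld-not-bot} both apply), and that a step by a working process cannot change \RemovedP\ on any address.
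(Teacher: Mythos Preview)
Your argument is correct and matches the paper's reasoning, which is given inline in the corollary statement itself rather than as a separate proof: the i3$\to$m0 transition forces a $\bot$ child of $\nd_p$ in $S$, which is incompatible with $\RemovedP(\nd_p)$ by Invariant~\ref{inv:rem-chld-not-bot} (equivalently, Step-property~\ref{Cl3.3} gives $\neg\RemovedP^S(\nd_p)$ directly), and the step does not touch \RemovedP\ so the conclusion carries over to $T$. Your write-up is simply a more explicit version of the same idea; the only small nit is that Step-property~\ref{InvRem} alone gives only one direction of your ``iff'', but you correctly supply the other direction via the syntactic observation that only f9, r9, v9 extend \RemovedP.
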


\begin{definition}
\label{def:focused}
An address $a$ is {\em focused} if $\LeftF(a)=\RightF(a)\neq \bot$.
\end{definition}

Thus $\bot$ and \Root\ are not focused, but as Figure \ref{cdLM} shows, $\n$ is
focused when $\ControlF(\Sys)\in\{\rom{f8, f9, r8, r9}\}$.

\begin{corollary}[uses: \ref{InvRem}, \ref{inv:rem-chld-not-bot}, \ref{observation:rem-children-constant}]
\label{cor:constantly-focused}
In any history sequence, once a node is removed, it stays removed and
its left and right children are equal, do not change, and are not $\bot$, i.e., a removed address is constantly focused.
\end{corollary}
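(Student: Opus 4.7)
The plan is to assemble this corollary directly from the three cited results by induction on the position in the history sequence. Let $(M_i \mid i \in \omega)$ be an infinite history with $M_0$ initial, let $x$ be any address, and let $j$ be the least index such that $\RemovedP(x)^{M_j}$ holds (if no such $j$ exists, the claim is vacuous). Note that by Step-property \ref{InvRem}, $x$ must already be an address of $M_j$ that is present in some earlier state, or $j$ is the step that flips $\RemovedP(x)$ from false to true; in either case $x \in \AddressT^{M_i}$ for every $i \geq j$, since addresses are never deleted from the universe.

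First I would show, by a trivial induction on $i \geq j$, that $\RemovedP(x)^{M_i}$ holds for all $i \geq j$: the base case is the definition of $j$, and the inductive step is precisely Step-property \ref{InvRem} applied to $(M_i, M_{i+1})$. Next, applying Invariant \ref{inv:rem-chld-not-bot} at each $M_i$ with $i \geq j$ gives $\LeftF(x)^{M_i} = \RightF(x)^{M_i} \neq \bot$, so $x$ is focused in every state $M_i$ with $i \geq j$, in the sense of Definition \ref{def:focused}.

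Finally, to show that the children are fixed from $M_j$ onward (not merely focused at each state), I would do another straightforward induction on $i \geq j$, applying Observation \ref{observation:rem-children-constant} to the step $(M_i, M_{i+1})$: since $\RemovedP(x)^{M_i}$, the observation gives $\LeftF(x)^{M_i} = \LeftF(x)^{M_{i+1}}$ and $\RightF(x)^{M_i} = \RightF(x)^{M_{i+1}}$. Chaining these equalities yields constant values of $\LeftF(x)$ and $\RightF(x)$ across all states $M_i$ with $i \geq j$.

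There is no real obstacle here; the corollary is essentially a bookkeeping statement that lifts three step-local facts to a statement about entire histories. The only subtlety worth flagging explicitly is that one should apply the invariants and observation only at indices $i \geq j$ (where $x$ is already present and removed), and that the simultaneous constancy of both $\LeftF(x)$ and $\RightF(x)$ relies on the syntactic fact, recalled just before Step-property \ref{Cl3.3}, that no single step modifies both child fields at once — this is already baked into Observation \ref{observation:rem-children-constant}, so no additional work is required.
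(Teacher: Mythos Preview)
Your proposal is correct and matches the paper's implicit argument: the paper states this corollary without an explicit proof, simply citing the three results \ref{InvRem}, \ref{inv:rem-chld-not-bot}, and \ref{observation:rem-children-constant} as its dependencies, and your induction on the history index is exactly the routine unpacking of how those three step-local facts combine. There is nothing to add.
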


\begin{SP}[uses: \ref{Cl3.3}]
\label{Cl4.9}
For any step $s=(M,N)$ and address $x\neq \bot$, if $(\Root\arrow^* x)$ in $M$ and $(\Root\notarrow^* x)$ in $N$,
then $x=\n$ in both $M$ and $N$, and $s\in \Step(\Sys,\rom{f8}) \cup \Step(\Sys,\rom{r8}) \cup \Step(\Sys,\rom{v6})$.
\end{SP}

\begin{proof}
Let $s=(M,N)$ be a step and $x\neq \bot$ an address that is path-connected in $M$ but not in $N$, i.e., $s$ is a step that modifies \LeftF/\RightF.
We must prove that $s$ is an execution of one of the instructions f8, r8, v6, and that $x=\n$ in $M$ and $N$. 

First, note that $s$ is not a step by some working process $p>0$ or else
we would find that $b$ is $\bot$ (by Step-property \ref{Cl3.3}).

Thus $s$ is one of the steps by the \Sys\ process that modifies \LeftF/\RightF.
We check below all such steps that are not executions of the f8, r8, v6 instructions,
and find that none could make the disconnecting mutation.
(Illustrations \ref{Fig:rotate-left} and \ref{Fig:remove} can be consulted while following our proof.)

\begin{enumerate}
    \item $s\in \Step(\Sys, \rom{f6},\rom{f7})$. 
    Then any node that is path-connected in $M$ remains
    path-connected in $N$. The reason is that the only arc of $M$ that is lost in $N$
    is $(\r, \rl)$, which is replaced by to $(\r, \new)$. 
    However, since $s$ also adds arc $(\new,\rl)$, path $\r\arrow\new\arrow\rl$ connects $\r$ and $\rl$ in $N$.
    
    \item  $s\in \Step(\Sys,\rom{f7},\rom{f8})$. 
    Here the lost arc $(\n, \ell_0)$ of $M$ is compensated for by the path $\n\arrow \r\arrow \new\arrow\ell_0$ of $N$. 

    \item If $s\in \Step(\Sys,\rom{v7})$, then an arc $(\n,\bot)$ of $M$ is replaced by to $(\n, \prt)$ of $N$. 
    Since $\bot$ can only lead to $\bot$, and $x\neq\bot$, then $(\nd,\bot)$ is not on path $\Root\arrow^*x$ in $M$.
    This implies that in this case $\Root\arrow^*x$ would hold in $N$, making the step irrelevant. 

    \item If $s\in \Step(\Sys,\rom{v8})$, then the lost of arc $(\n,\child)$ of $M$ is replaced by arc $(\n,\prt)$ of $N$. 
    By the control-dependent invariants of Figure \ref{cdLM}, we have that $\prt\arrow\child$, and so the path $\n \arrow \prt \arrow \child$ is in $N$.

\end{enumerate}

Thus, $s$ may only be a step in $\Step(\Sys,\rom{f8})$, $\Step(\Sys,\rom{r8})$, and $\Step(\Sys,\rom{v6})$.

We first prove that for every node $y\neq\n$, $s$ does not disconnect $y$.
Let $P$ be the shortest path from $\Root$ to $y$ in $M$. We prove the claim for the possibilities for $s$:

\begin{enumerate}
\item Suppose that $s\in \Step(\Sys, \rom{f8})$, in which arc $(\prt, \n)$ is replaced by arc $(\prt, \r)$. 
If $(\prt, \n)$ is not an arc of $P$, then every arc of $P$ remains in $N$, and hence $y$ is path-connected in $N$. 
If $(\prt, \n)$ is an arc on $P$, then it is not the last arc (because $y\neq \n$ is the last node of $P$).
Since $\LeftF(\n)=\RightF(\n)=\r$ in $M$ and in $N$ by the control-dependent invariants of Figure \ref{cdLM}, 
$\r$ must be the successor of $\n$ in $P$. So the sub-path $\prt\arrow \n \arrow \r$ appears in $P$.
Since arc $(\prt, \r)$ in $N$ replaces the sub-path $\prt\arrow \n\arrow \r$ of $P$, and we see that $y$ remains
path connected in $N$. 

\item Suppose that $s\in \Step(\Sys,\rom{v6})$. 
In this step arc $(\prt, \n)$ of $M$ is replaced by arc $(\prt, \child)$ of $N$.
If $(\prt, \n)$ is not an arc of path $P$, then every arc of $P$ remains in $N$, and hence $y$ is path-connected in $N$.
If $(\prt, \n)$ is an arc of $P$, then it is not the last arc (because $y\neq \n$ is the last node of $P$).
The children of $\n$ are $\bot$ and $\child$, and $\n\neq\prt$ in $M$ and in $N$. 
Since $\bot$ is not on $P$, $(\n,\child)$ is in $P$. The sub-path $\prt\arrow \n\arrow \child$
of $P$ is replaced by the arc $(\prt, \child)$ in $N$, and we see that $y$ remains
path connected in $N$. 
\end{enumerate}
\end{proof}

\begin{remark}
We will prove (in Lemma \ref{Lem4.14}) that $\bot$ is always path-connected in
any state of any history. So, in applications of Step-invariant \ref{Cl4.9}, assumption 
$x\neq\bot$ is not really necessary, but we are not yet in a position to prove this.
\end{remark}

\begin{definition}
\label{Def:preRemoved}
Address $x$ is {\em pre-removed} if $x\neq \bot \wedge \neg \RemovedP(x) \wedge \neg(\Root\arrow^* x)$.
\end{definition}

\begin{inductive-invariant}[uses: \ref{InvRem}, \ref{Cl3.3}, \ref{Cl4.9}]
\label{inv:pre-rem}
\[
    \forall\, x (\preRemovedP(x) \rightarrow  x=\n \wedge \ControlF(\Sys)\in\{\rom{f9, r9, v7, v8, v9}\})
\]
\end{inductive-invariant}

\begin{proof}
Let $s=(M,N)$ be a step such that $M$ satisfies our invariant. Suppose that $x_0$ is
an address of $N$ that is pre-removed in $N$, i.e. 
\begin{equation}
\label{Eq10}
   N\models (x_0\neq \bot \wedge \neg\RemovedP(x_0)\wedge \Root\notarrow^* x_0).
\end{equation}
By Step-property \ref{InvRem}, since $x_0$ is not removed in $N$ it cannot be removed in $M$.
So either $x_0$ is not in $M$, or else it is in $M$ and not removed.
We have to prove that 
\begin{equation}
\label{Eq11}
    M\models(\ControlF(\Sys)\in\{\rom{f9, r9, v7, v8, v9}\} \wedge x_0=\n).
\end{equation}
There are four cases to check: 
\begin{enumerate*}[label=(\arabic*)]
    \item $s$ is a step by some working process $p>0$ and $x_0$ is an address in $M$,
    \item $s$ is a step by $p>0$ but $x_0$ is a new address in $N$,
    \item $s$ is a step by process \Sys, and $x_0$ is in $M$, and
    \item $s$ is a step by \Sys\ and $x_0$ is a new address in $N$. 
\end{enumerate*}

Assume first that $s$ is a step by process $p>0$ and $x_0$ is an address of $M$.
There are two possibilities:
\begin{enumerate}
    \item If $x_0$ is not path-connected in $M$
    then it is pre-removed (since $\neg\RemovedP^M(x_0)$). Since $M$ satisfies the
    invariant, $M \models(\ControlF(\Sys)\in\{\rom{f9, r9, v7, v8, v9}\} \wedge x_0=\n)$. 
    But a step by $p>0$ does not change the denotation of $\n$
    or the value of $\ControlF(\Sys)$, and hence (\ref{Eq11}) as required.
    
    \item If $x_0$ is path-connected in $M$, then since $x_0\neq\bot$, the path
    $P$ from $\Root$ to $x_0$ in $M$ does not contain $\bot$. Thus step $s$ changes no arc of
    $P$ (by Step-property \ref{Cl3.3}), and hence
    $x_0$ remains path-connected in $N$ which contradicts
    our assumption in (\ref{Eq10}).
\end{enumerate}
 
Assume secondly that $s$ is a step by process $p>0$ but that $x_0$ is a new address of $N$.
So $s=(M,N)$ is an execution of i3, $x_0=\new$ is the new address in $N$, $\nd_p\neq \bot$ points to \new\ in $N$. 
Since $x_0=\new$ is not path-connected in $N$, $\nd_p$ is not path-connected
in $N$ (for $\nd_p$ is the sole node that points to $\new$ in $N$). 
It follows that already in $M$, $\nd_p$ is not 
path-connected, or else the path of $M$ from \Root\ to $\nd_p$ 
remains a path in $N$ (again by Step-property \ref{Cl3.3} as above). 
But $\nd_p$ is not removed in $M$ by Corollary \ref{cor:i3-not-rem}.
Thus $\nd_p$ is pre-removed in $M$, and the
invariant $\ControlF^M(\Sys)\in \{\rom{f9, r9, v7--9}\}$
implies that $\LockedP^M(\nd_p,\Sys)$, which cannot be the case
as step $s$, an execution of i3, requires that $\LockedP(\nd_p,p)$.

Assume thirdly that $s$ is a step by \Sys, and that $x_0$ is an address in $M$.
There are two possibilities:
\begin{enumerate}
    \item $x_0$ is not pre-removed in $M$, and so $(\Root\arrow^* x_0)^M$
    (since $x_0$ is not removed in $M$).
    By the assumption at (\ref{Eq10}) $x_0$, is not path-connected in $N$. 
    So by Step-property \ref{Cl4.9}, $s$ is an execution of f8, r8, or v6. 
    Thus $\ControlF(\Sys)\in \{\rom{f9, r9, v7}\}$ in $N$, 
    and again by \ref{Cl4.9}, $x_0=\n$ in $N$.

    \item $x_0$ is pre-removed in $M$, and so $M\models \ControlF(\Sys)\in\{\rom{f9, r9, v7, v8, v9}\} \wedge x_0= \n$:
    \begin{enumerate}
        \item If $\ControlF^M(\Sys)=\rom{f9}$, then $\RemovedP^N(x_0)$ (where $x_0=\n$), and the claim holds trivially.
        
        \item If $\ControlF^M(\Sys)=\rom{v7}$, then the effects of the step are $\ControlF^N(\Sys)=\rom{v7}$
        and arc $(\n,\bot)$ of $M$ being replaced with arc $(\n,\prt)$ in $N$. $x_0=\n$ remains not
        path-connected in $N$, and the connectivity of any other node $y$ of $N$ is unaffected, because if $y$ is
        path-connected in $M$, the path from \Root\ to $y$ in $M$ remains a path in $N$ (because $x_0$ is not on that path).
        
        \item If $\ControlF^M(\Sys)=\rom{v8}$, then the effect of step $s$ is that arc $(\n,\child)$
        of $M$ is replaced with arc $(\n,\prt)$ in $N$, and the invariant holds in $N$, similar to the previous case.
        
        \item If $\ControlF^M(\Sys)=\rom{v9}$, then $\n$ is removed in $N$, and the claim holds trivially.
    \end{enumerate}
\end{enumerate}

Finally, suppose that $s$ is a step by \Sys\ and $x_0=\new$ is a new node in $N$. Thus, $s$ is an execution of f6 (or r6).
At state $M$ we have that $\r=\RightF(\n)$, and at state $N$ we have that $\r=\RightF(\n)\wedge \new=\LeftF(\r)$. 
Since $\ControlF(\Sys)=\rom{f6}$ in $M$, the invariant implies that no address of $M$ is pre-removed. 
In particular $\r$ is not pre-removed, and since it is not removed at $M$ 
(by the control-dependent invariants of Figure \ref{cdLM}), $\r$ is path-connected there. 
If $P$ is the shortest path in $M$ from \Root\ to $\r$, 
then $P$ remains a path in $N$ (since the only arc of $M$ that is removed by $s$ is $(\r,\rl)$). 
Since $\r$ is path-connected in $N$, then \new\ is also path-connected
in $N$ and hence is not pre-removed, which contradicts our assumption on $x_0$.
\end{proof}

\begin{corollary}
\label{lemma:pdc-rem-or-locked}
If $\ControlF(\Sys)\not\in \{\rom{f9, r9, v7, v8, v9}\}$ and $x\neq\bot$ is any address that is
not removed, then $x$ is path-connected. 
In particular, if $\Root\notarrow^*x$, then either $\RemovedP(x)$ or $\LockedP(x, \Sys)$.
\end{corollary}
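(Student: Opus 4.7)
The proof will be an almost immediate consequence of Invariant~\ref{inv:pre-rem}, together with the control-dependent invariants of the \Sys\ process collected in Figure~\ref{cdLM}. The plan is to derive each of the two statements by contraposition, with no induction or new step-analysis required.

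For the first statement, I would assume $x\neq\bot$ with $\neg\RemovedP(x)$ and $\ControlF(\Sys)\notin\{\rom{f9, r9, v7, v8, v9}\}$, and argue toward $\Root\arrow^* x$. If instead $\Root\notarrow^* x$, then by Definition~\ref{Def:preRemoved} the three conjuncts $x\neq\bot$, $\neg\RemovedP(x)$ and $\neg(\Root\arrow^* x)$ together say exactly that $x$ is pre-removed. Then Invariant~\ref{inv:pre-rem} forces $\ControlF(\Sys)\in\{\rom{f9, r9, v7, v8, v9}\}$, contradicting the hypothesis.

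For the ``in particular'' clause, I would assume $\Root\notarrow^* x$ and $x\neq\bot$. If $\RemovedP(x)$ there is nothing to show, so suppose $\neg\RemovedP(x)$. Then $x$ is pre-removed, and Invariant~\ref{inv:pre-rem} yields $x=\n$ together with $\ControlF(\Sys)\in\{\rom{f9, r9, v7, v8, v9}\}$. Inspecting the control-dependent invariants f6--9, r6--9 and v6--9 in Figure~\ref{cdLM}, each of the five listed control states carries $\LockedP(\n,\Sys)$ as a conjunct (inherited from the preconditions pr4 of \rotateLeft, \rotateRight\ and pr3 of \bfremove, which persist through the operation body). Substituting $x=\n$ gives $\LockedP(x,\Sys)$, as required.

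There is essentially no obstacle here: the corollary is a repackaging of Invariant~\ref{inv:pre-rem} plus the syntactic observation that \Sys\ continues to hold its lock on \n\ throughout the remaining lines of each of its three operations. The only care needed is to confirm that the lock appears in all five of the listed control states, which can be read off directly from the table in Figure~\ref{cdLM}.
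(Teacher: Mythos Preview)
Your proposal is correct and matches the paper's intent: the corollary is stated without proof immediately after Invariant~\ref{inv:pre-rem}, and your derivation by contraposition together with the control-dependent invariants of Figure~\ref{cdLM} is exactly the unpacking that the label ``Corollary'' signals.
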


\begin{definition}
\label{Def:confluence}
Node $x$ is {\em confluent} in a state if $x\neq \bot$ and there are two path-connected
nodes that both point to $x$. That is, for some nodes $y\neq z$, both $y$ and
$z$ are path-connected, and $y\arrow x\wedge z\arrow x$.
\end{definition}

\begin{example}
\label{exmp:f7-conf}
If $\ControlF(\Sys)=\rom{f7}$ then $\ell_0$ is confluent if $\ell_0\neq\bot$.
\end{example}
\begin{proof}
Assume that $\ControlF(\Sys)=\rom{f7}$. Then both $\n$ and \new\ point to $\ell_0$.
If $\ell_0\neq \bot$, then it suffices to prove that
$\n$ and $\new$ are path connected in order to deduce
that $\ell_0$ is confluent: $\n$ and $\new$
are not removed (this is a control-dependent invariant). Hence $\n$ and $\new$ are not pre-removed
(by \ref{inv:pre-rem}) and thus, are path-connected.
\end{proof}

\begin{definition}
\label{def:descendants}
Given a state $M$ we have the following definitions.
\begin{enumerate}
    \item The {\em descendants} of any path-connected node $x$ are the set of all
    nodes $y\not\in \{x, \bot\}$ that are reachable from $x$:
    $\Des(x)=\{ y\not\in\{x,\bot\}\mid x\arrow^* y\}$

    \item For any node $x$ we define the set of its left and right descendants:
    \begin{equation}
    \label{EqLeftDes}
    \LeftDes(x) = \{ y\neq\bot\mid \LeftF(x)\arrow^* y\}
    \end{equation}
    \begin{equation}
    \forall x\neq\Root,\ \RightDes(x) = \{ y\neq\bot\mid \RightF(x)\arrow^* y\}      
    \end{equation}

    Define $\RightDes(\Root)=\emptyset$.
    Note that $\LeftDes(\bot)=\RightDes(\bot)=\emptyset$. 

\end{enumerate}
\end{definition}

Observe that $x\not\in \LeftDes(x)$ (and likewise $x\not\in\RightDes(x)$) unless $\LeftF(x)\arrow^* x$, which indicates
a cycle\footnote{We will see in Lemma \ref{Lem4.14} that regular states have no cycles.} in the $\arrow^*$ relation.
Thus, assuming that there are no cycles, the set of descending nodes of $x$ is $\Des(x) = \LeftDes(x)\cup \RightDes(x)$.

\begin{definition}
\label{def:properly-located}
In a state $M$, a node $x$ is {\em properly-located} with respect to another node $y$ if the following conditions hold:
\begin{enumerate}
    \item $x\in\Des(y)$ 
    \item if $x\in\LeftDes(y)$ then $\KeyF(x)<\KeyF(y)$
    \item if $x\in\RightDes(y)$ then $\KeyF(x)>\KeyF(y)$
\end{enumerate}
\end{definition}

\begin{definition}
\label{Def:tree-like}
In a state $M$, a node $x$ is {\em tree-like} if $x\in\{\Root, \bot\}$, if $\Des(x)=\emptyset$ (i.e., both children of $x$ are $\bot$), or if for every $y\in\Des(x)$, $y$ is properly-located with respect to $x$.
\end{definition} 

A cycle in $\arrow$ is a sequence of path connected addresses,
$a_1,\ldots,a_n$ such that $a_1=a_n$,  for every $i<n$
$a_i\arrow a_{i+1}$, and $a_i\neq a_j$ for any indexes $i<j<n$.

\begin{observation}
\label{Lem4.10}
If $x\not\in\{\Root,\bot\}$ is a tree-like node, then $x$ is not a node in a cycle.
Thus if all path-connected nodes are tree-like, then there is no cycle of path-connected nodes except
for the trivial cycles $\{\Root, \Root\}$ and $\{\bot, \bot\}$.
\end{observation}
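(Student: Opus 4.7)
The plan is to argue by contradiction: suppose $x$ is tree-like with $x\notin\{\Root,\bot\}$ and that $x$ lies on a cycle $x = a_1\arrow a_2\arrow \cdots\arrow a_n = x$, where $a_1,\ldots,a_{n-1}$ are pairwise distinct. First I would rule out the self-loop case $n=2$ using Invariant~\ref{inv:address-points-to-root}(1), which gives $x\neq\LeftF(x)$ and $x\neq\RightF(x)$. I would also observe that no intermediate $a_i$ equals $\bot$: since $\bot$ points only to itself by Invariant~\ref{inv:address-points-to-root}(3), a $\bot$ on the cycle would force $a_n=\bot\neq x$.

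Next, by the left/right symmetry I may assume $a_2=\LeftF(x)$, the case $a_2=\RightF(x)$ being symmetric. The set $\LeftDes(x)$ is closed under $\arrow$ along non-$\bot$ targets: if $y\in\LeftDes(x)$ and $y\arrow z$ with $z\neq\bot$, then $\LeftF(x)\arrow^* y\arrow z$ puts $z\in\LeftDes(x)$. Iterating along $a_2\arrow a_3\arrow\cdots\arrow a_{n-1}$ therefore places every intermediate $a_i$ in $\LeftDes(x)$. Each such $a_i$ also lies in $\Des(x)$ (reachable from $x$, and distinct from $x$ and $\bot$), so tree-likeness of $x$ yields $\KeyF(a_i)<\KeyF(x)$ for all $2\leq i\leq n-1$.

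The main obstacle will be closing the contradiction from here. Tree-likeness of $x$ constrains only its descendants, whereas the final arc $a_{n-1}\arrow x$ involves $x$ as a \emph{successor} of $a_{n-1}$. I expect the proof to close by noting that the cycle also forces $x\in\LeftDes(x)$ (since $\LeftF(x)=a_2\arrow^* x$ along the cycle), combined with the observation that tree-likeness of $x$ implies $\LeftDes(x)\cap\RightDes(x)\subseteq\{x\}$ (a node strictly in the intersection and in $\Des(x)$ would need a key simultaneously below and above $\KeyF(x)$). If the cycle also goes through $x$'s right side, this already yields a contradiction; otherwise a finer analysis of the last arc $a_{n-1}\arrow x$ — perhaps iterating the proper-location argument down the cycle — supplies it. Concretely, the intended reading may be that if tree-likeness propagates along the cycle (which is exactly what the second sentence of the observation codifies), then $a_{n-1}$ tree-like together with $a_{n-1}\arrow x$ and $\KeyF(a_{n-1})<\KeyF(x)$ forces $x$ to be a right child of $a_{n-1}$, and iterating the argument cyclically contradicts the distinctness of the $a_i$ or the strictness of the key bounds.

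The second sentence then falls out immediately: if every path-connected node were tree-like but a non-trivial cycle of path-connected nodes existed, the cycle would necessarily contain some node $x\notin\{\Root,\bot\}$ (the only cycles confined to $\{\Root,\bot\}$ are the self-loops $\{\Root,\Root\}$ and $\{\bot,\bot\}$ guaranteed by Invariant~\ref{inv:address-points-to-root}(3)). Applying the first part to such $x$ would contradict its being on a cycle, completing the argument.
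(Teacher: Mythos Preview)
The paper offers no proof of this observation; it is simply asserted. Your attempt therefore goes further than the paper does, and you correctly locate the sticking point: tree-likeness of $x$ constrains only members of $\Des(x)$, which by Definition~\ref{def:descendants} explicitly excludes $x$ itself, so when the cycle closes up at $x$ there is nothing in the definition that forces a contradiction.

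In fact the gap you flag is not closable: the observation is false as stated. Take $x,a\notin\{\Root,\bot\}$ with $\KeyF(x)=10$, $\KeyF(a)=5$, $\LeftF(x)=a$, $\RightF(x)=\bot$, $\RightF(a)=x$, $\LeftF(a)=\bot$, and set $\LeftF(\Root)=x$ so that both $x$ and $a$ are path-connected. Then $\Des(x)=\{a\}$, with $a\in\LeftDes(x)$, $a\notin\RightDes(x)=\emptyset$, and $\KeyF(a)<\KeyF(x)$; hence $x$ is tree-like. Symmetrically $\Des(a)=\{x\}$, with $x\in\RightDes(a)$, $x\notin\LeftDes(a)=\emptyset$, and $\KeyF(x)>\KeyF(a)$; hence $a$ is tree-like. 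Yet $(x,a,x)$ is a non-trivial cycle of path-connected nodes, refuting both sentences of the observation.

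Your proposed repairs do not escape this: when one child of $x$ is $\bot$ the corresponding descendant set is empty, so no node is forced into $\LeftDes(x)\cap\RightDes(x)$, and ``propagating tree-likeness along the cycle'' still fails because in the two-node example each node's sole proper descendant sits on the correct side. The paper's downstream uses (notably the proof of Lemma~\ref{Lem4.14}) presumably survive because the control-dependent invariants of the CF algorithm rule out such configurations in reachable states, but as a freestanding claim about address structures the observation does not hold, and no proof along the lines you sketch can succeed.
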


\begin{lemma}
\label{lemma:tree-like-path}
Let $P$ be a path from node $x$ to node $y$. If
all the nodes on $P$ that precede $y$ are tree-like, then
$P$ is a $\KeyF(y)$-search path.
\end{lemma}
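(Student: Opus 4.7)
The plan is to prove $x_i \arrow_k x_{i+1}$ for each index $0 \le i < n$, where $k = \KeyF(y)$ and $P = (x_0,\ldots,x_n)$ with $x_n = y$. The argument relies only on the tree-like condition at each $x_i$, together with Observation~\ref{Lem4.10} (no cycles through non-\Root, non-$\bot$ tree-like nodes). I treat the main case $y \neq \bot$; the case $y = \bot$ puts $k = -\infty$ outside the domain of $\arrow_k$ and should be dealt with separately or by vacuity.

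Fix $i < n$. I would first show $y \in \Des(x_i)$. The relation $x_i \arrow^* y$ is given by the suffix of $P$. The condition $y \neq \bot$ is our running assumption, and $y \neq x_i$ follows from Observation~\ref{Lem4.10}: otherwise the suffix would form a cycle through the tree-like node $x_i$, which cannot occur unless $x_i \in \{\Root, \bot\}$; the case $x_i = \bot$ is ruled out because $\bot$ reaches only $\bot$, and $x_i = \Root$ together with $y = \Root$ would give $k = \infty \notin \omega$.

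The core of the proof extracts two facts from the properly-located property of $y$ with respect to the tree-like node $x_i$. First, $y$ lies in exactly one of $\LeftDes(x_i)$ and $\RightDes(x_i)$, since belonging to both would force the contradiction $\KeyF(y) < \KeyF(x_i) < \KeyF(y)$. Second, $x_i$ is not focused, because the unique non-$\bot$ child of a focused node lies in both descendant sets and yields the same contradiction. Together these imply $\LeftF(x_i) \neq \RightF(x_i)$, so the arc $x_i \arrow x_{i+1}$ pins down $x_{i+1}$ as exactly one of the two children. If $x_{i+1} = \LeftF(x_i)$, then $\LeftF(x_i) = x_{i+1} \arrow^* y$ and $y \neq \bot$ give $y \in \LeftDes(x_i)$, hence $k < \KeyF(x_i)$, which together with $\KeyF(x_i) \neq k$ gives $x_i \arrow_k x_{i+1}$. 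The case $x_{i+1} = \RightF(x_i)$ is symmetric.

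The main obstacle I anticipate is the bookkeeping around \Root, $\bot$, and the focused/non-focused dichotomy: the clean ``tree-like forces the correct branch'' argument requires these edge cases to be dispatched carefully before the descendant-set reasoning can be applied uniformly. Once that is done, the proof is essentially a direct unfolding of the definitions of tree-like, properly-located, and $\arrow_k$.
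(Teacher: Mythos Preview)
Your proposal is correct and follows essentially the same approach as the paper. The paper's own proof is a single sentence that simply invokes Definition~\ref{Def:tree-like} (properly-located) and declares the conclusion; you have expanded that one line into a careful case analysis, making explicit the points the paper leaves implicit---namely that $y\in\Des(x_i)$ (hence $y\neq x_i$, handled via Observation~\ref{Lem4.10}), that a tree-like $x_i$ with a non-$\bot$ descendant cannot be focused, and that the $y=\bot$ boundary case needs separate treatment. None of this constitutes a different route; it is the same argument spelled out at a finer granularity than the paper chose to present.
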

\begin{proof}
Let $P$ be a path from node $x$ to $y$. By Definition \ref{Def:tree-like}, $y$ is properly located with respect to every node along the path $P$, and so $P$ is a $\KeyF(y)$-search path by definition.
\end{proof}

\begin{corollary}
\label{cor:tree-like-path}
If every path-connected node is tree-like, then:
\begin{enumerate}
    \item There are no confluent nodes.
    \item For every path-connected node $x\neq \bot$, there is a single path $P$ from the root to $x$.
    \item No two path-connected nodes have the same key.
\end{enumerate}
\end{corollary}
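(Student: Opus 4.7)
The plan is to derive all three clauses uniformly from Lemma~\ref{lemma:tree-like-path} together with the observation that the relation $\arrow_k$ is \emph{functional} on its domain: for any node $w$ with $\KeyF(w)\neq k$, the unique $\arrow_k$-successor of $w$ is $\LeftF(w)$ if $k<\KeyF(w)$ and $\RightF(w)$ if $k>\KeyF(w)$; and $\arrow_k$ is undefined at any node of key $k$. Consequently the $k$-search path starting from $\Root$ is deterministic and must terminate as soon as it first encounters a node with key $k$ (or $\bot$).

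For clause (2), let $x\neq\bot$ be path-connected and set $k=\KeyF(x)$. If $P=(\Root,x_1,\ldots,x_{n-1},x)$ is any path from $\Root$ to $x$, then every $x_i$ with $i<n$ is path-connected and therefore tree-like by hypothesis, so Lemma~\ref{lemma:tree-like-path} gives that $P$ is a $k$-search path. The determinism of $\arrow_k$ forces $P$ to coincide with the unique $k$-search path from $\Root$ that ends at a node of key $k$. Clause (1) is then immediate: if $y\neq z$ were two distinct path-connected parents of a node $x\neq\bot$, then concatenating the (unique) paths $\Root\arrow^*y$ and $\Root\arrow^*z$ with the arc to $x$ would produce two distinct paths from $\Root$ to $x$, contradicting clause (2).

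For clause (3), suppose $x\neq y$ are path-connected nodes with $\KeyF(x)=\KeyF(y)=k$. Since $\KeyF(\Root)=\infty$ and $\KeyF(\bot)=-\infty$ are reserved, neither $x$ nor $y$ can be $\Root$ or $\bot$, so $k\in\omega$. Applying Lemma~\ref{lemma:tree-like-path} to any paths $\Root\arrow^*x$ and $\Root\arrow^*y$ yields two $k$-search paths from $\Root$ that each terminate at a node of key $k$. By the determinism of $\arrow_k$ these paths agree step by step until they first reach such a terminal node, so they agree on their endpoints; hence $x=y$, a contradiction.

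I do not expect a genuine obstacle here: Lemma~\ref{lemma:tree-like-path} already carries the analytic content of the corollary, and the only remaining ingredient is the elementary determinism of BST key-directed walks. The one subtlety worth flagging is the edge case in (3) involving $\Root$ and $\bot$, which is handled by recalling that their keys $\infty$ and $-\infty$ are structurally distinct from all other keys.
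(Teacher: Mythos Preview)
Your proof is correct, but it takes a genuinely different route from the paper. The paper orders the clauses as $(1)\Rightarrow(2)$, proving $(1)$ directly from the definition of tree-like: if $x$ had two distinct path-connected parents $y,z$, then following paths from $\Root$ to $y$ and to $z$ one finds a last common ancestor $a$ at which the two paths diverge left and right, so $x\in\LeftDes(a)\cap\RightDes(a)$, contradicting that $a$ is tree-like; clause $(2)$ is then immediate from $(1)$, and $(3)$ is again argued directly from proper location. You invert this: you prove $(2)$ first by invoking Lemma~\ref{lemma:tree-like-path} to turn \emph{any} path $\Root\arrow^* x$ into a $\KeyF(x)$-search path and then use the functionality of $\arrow_k$ to force uniqueness; $(1)$ then drops out of $(2)$, and $(3)$ is obtained the same way. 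Your approach is arguably cleaner in that all three clauses flow uniformly through Lemma~\ref{lemma:tree-like-path} and a single elementary observation about $\arrow_k$; the paper's approach stays closer to the raw descendant-set definition of tree-like and does not need to talk about key-directed walks at all.

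One small edge case worth making explicit in your write-up of clause~$(2)$: when $x=\Root$ you have $\KeyF(x)=\infty\notin\omega$, so $\arrow_k$ is not defined and your argument via Lemma~\ref{lemma:tree-like-path} does not literally apply. The intended reading (both here and in the paper, which glosses over the same point) is that the unique path is the length-zero path, the self-loop $\RightF(\Root)=\Root$ being disregarded; it does no harm to say so.
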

\begin{proof}
Assume that every path-connected node is tree-like.
\begin{enumerate}
    \item Assume for a contradiction that there exists a confluent node $x$, and let its two distinct path-connected parents be $y$ and $z$. Then $y$ and $z$ must have a common path-connected ancestor $a$ such that $x\in\LeftDes(a)$ and $x\in\RightDes(a)$, and so $x$ is not properly-located with respect to $a$, contradicting that $a$ is tree-like.
    \item This trivially follows from the fact that there are no confluent nodes if every path-connected node is tree-like.
    \item If two different path-connected nodes $x_1$ and $x_2$ had the same key values, then there would be some path-connected $y$ such that either $x_1$ or $x_2$ would not be properly-located with respect to $y$.
\end{enumerate}
\end{proof}

\begin{definition}[Potential Connectivity.]
\label{DefPot}
Given a state $M$, we say that node $x$ is {\em potentially} $k$-connected in $M$ (where $k\in \omega$)
if one of the following three conditions holds in $M$.
\end{definition}
\begin{description}
\item[$\rom{PT1}(x,k)$]
$\equiv\, x$ is $k$-connected.

\item[$\rom{PT2}(x,k)$] $\equiv$ 
\begin{enumerate*}[label=(\arabic*)]
    \item $x$ is pre-removed, 
    \item there is a node $y$ such that $x\arrow y$ and $y$ is $k$-connected\label{item:pt2}\footnote{\label{footnote:pt2}Item \ref{item:pt2} of PT2 was originally ``if $x\arrow_k y$ then $y$ is $k$-connected''. However, this makes Step-property \ref{sp:pdc-remains-pkc} incorrect and unprovable. We did not notice the issue on our own, but, fortunately, the model-checking process we carried out with TLA+ flagged the problem, allowing us to correct the definition of PT2, and maintain the correctness of our proof.}, and 
    \item $\prt$ is $k$-connected.
\end{enumerate*}

\item[$\rom{PT3}(x,k)$] $\equiv\,$
$x$ is removed, and for some $d\geq 0$ there is a sequence
of removed nodes $t_0,\ldots,t_d$ such that $t_0=x$, 
$t_i\arrow t_{i+1}$ for $i<d$, and 
if $y$ is  such that $t_d \arrow y$
then $y$ is potentially connected but is not removed, i.e.,
either PT1$(y,k)$ or PT2$(y,k)$ holds\footnote{Note that
$y=\LeftF(t_d)=\RightF(t_d)$ since $t_d$ is removed
and by Invariant \ref{inv:rem-chld-not-bot}.}.
\end{description}

Intuitively, the notion of potential connectivity captures the 
idea that traversals do not ``get lost'':
When a process $p$ executes one of the $\bfcontains(k_p)$, $\bfdelete(k_p)$,
or $\bfinsert(k_p)$ operations, then we may be tempted to expect that, while $p$ is in midst of its search, 
node $\nd_p$ is on the path from the root to the address with key value $k_p$, if
there is one, or on the $k$-path from the root to $\bot$ if there is none.
Yet this is certainly not the case: process $p$ may reach some $\nd_p$ that becomes a
non path-connected node while $p$ is still there. Process $p$
however is not lost and does not have to abort, it may continue and in a finite number of steps reach an address
that is $k_p$-connected.

\begin{observation}
\label{lemma:pt3}
\begin{enumerate}
    \item[] 
    \item If PT3$(x,k)$ and $x\arrow a$, then $a$ is potentially $k$-connected.
    \item If $x$ is potentially $k$-connected and $x\arrow_k a$, then $a$ is potentially $k$-connected.
\end{enumerate}
\end{observation}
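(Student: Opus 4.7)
The plan is to establish both parts by case analysis on which of the conditions $\rom{PT1}$, $\rom{PT2}$, $\rom{PT3}$ witnesses the potential $k$-connectivity of $x$, combined with the structural constraints already at our disposal.

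For part~1, assume $\rom{PT3}(x,k)$ holds with witness chain $t_0=x,t_1,\ldots,t_d$. Since $x$ is removed, Invariant~\ref{inv:rem-chld-not-bot} forces $\LeftF(x)=\RightF(x)\neq\bot$, so the hypothesis $x\arrow a$ pins $a$ down as the unique non-$\bot$ child of $x$. If $d=0$, then $a$ is precisely the successor of $t_d=x$ that appears in the third clause of $\rom{PT3}$, hence $\rom{PT1}(a,k)$ or $\rom{PT2}(a,k)$ holds. If $d\geq 1$, then by uniqueness of $x$'s non-$\bot$ child we have $a=t_1$, and the tail $t_1,\ldots,t_d$ is itself a witness sequence for $\rom{PT3}(a,k)$.

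For part~2, I split on the PT condition that $x$ satisfies. The $\rom{PT1}$ case is immediate, since $\Root\arrow_k^* x\arrow_k a$ gives $\rom{PT1}(a,k)$. The $\rom{PT3}$ case reduces to part~1, because $x\arrow_k a$ implies $x\arrow a$. The $\rom{PT2}$ case is where the bulk of the work lies, and is the main obstacle I anticipate. Invariant~\ref{inv:pre-rem} forces $x=\n$ with $\ControlF(\Sys)\in\{\rom{f9},\rom{r9},\rom{v7},\rom{v8},\rom{v9}\}$, and the control-dependent invariants of Figure~\ref{cdLM} describe the shape of $\n$'s children in each subcase. In the focused subcases (f9, r9, v9) $\n$'s two children collapse to a single node --- $\r$, $\ell_0$, or $\prt$ respectively --- which coincides either with the $k$-connected child furnished by $\rom{PT2}$ clause~(2) or with $\prt$, $k$-connected by clause~(3); in either event $\rom{PT1}(a,k)$ follows. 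The non-focused subcases v7 and v8 are the most delicate: there $\n$'s two children are genuinely distinct, and the argument must align the $k$-direction at $\n$ with the $k$-direction at $\prt$ (whose relevant pointer was just re-pointed by v6), using the $k$-connectedness of $\prt$ granted by clause~(3) to extend a $k$-path one further step onto $a$, or else recognize that $a$ coincides with the child realizing clause~(2).

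Everything apart from the v7/v8 analysis is a mechanical unpacking of definitions against the control-dependent invariants. The delicate piece is verifying that whichever child realizes clause~(2) of $\rom{PT2}$, combined with the $k$-connectedness of $\prt$ from clause~(3), suffices to deliver a witness for the potential $k$-connectivity of $a$ in every configuration --- in particular handling the boundary situation in which $a=\bot$ by appealing directly to clause~(2).
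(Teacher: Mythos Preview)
The paper records this statement as an \emph{Observation} and gives no proof at all, so there is no argument of the paper's to compare against; you are supplying a justification where the paper offers none. Your treatment of Part~1 and of the PT1 and PT3 branches of Part~2 is correct and is the natural unfolding of the definitions together with Invariant~\ref{inv:rem-chld-not-bot}.

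The PT2 branch of Part~2 is where a real difficulty lies, and your sketch for the non-focused subcases v7 and v8 does not close with only the tools available at this point in the paper. Your plan is a dichotomy: either $a$ coincides with the $k$-connected child $y$ supplied by clause~(2) of $\rom{PT2}$, or else you push the $k$-path one step past $\prt$ (using clause~(3)) to land on $a$. But consider a state with $\ControlF(\Sys)=\rom{v8}$ in which $\LeftF(\n)=\prt$, $\RightF(\n)=\child$, and take $k=\KeyF(\prt)$. Then $\prt$ is $k$-connected (the $k$-path from $\Root$ terminates at $\prt$), clause~(2) is witnessed by $y=\prt$ itself, and $\n\arrow_k\child$ since $k>\KeyF(\n)$. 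Yet $\child$ is path-connected via $\prt$, not removed, and the $k$-path cannot be extended beyond $\prt$ because $\KeyF(\prt)=k$; so $\child$ satisfies none of PT1, PT2, PT3. Nothing among Invariants~\ref{inv:address-points-to-root}--\ref{inv:pre-rem} or the control-dependent invariants of Figure~\ref{cdLM} excludes this configuration. The ``alignment of $k$-directions at $\n$ and $\prt$'' you invoke would need the key comparison encoded in R0, but R0 is part of regularity and is established only in Theorem~\ref{InvReg}, \emph{after} this Observation.

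In practice the paper only appeals to Observation~\ref{lemma:pt3}(2) inside the proof of Theorem~\ref{InvReg}, and there always for $x=\nd_p$ and $k=k_p$ in a state whose graph coincides with that of a regular predecessor; in that restricted setting the degenerate PT2 scenarios do not arise for the particular $(x,k)$ in play. But as a free-standing claim at this point in the development, the v7/v8 subcases of PT2 are not dispatched by the route you outline.
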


Regularity of a state is the central definition of this section. It represents the notion of a ``valid'' state during the execution of the algorithm.
This is represented by three properties. The first two deal with the nodes $\nd_p$, $\nxt_p$ and $\prt$ may refer to, and the relationships between them.
Due to the concurrent nature of the algorithm, as we discussed in Section~\ref{sec:the-algorithm}, even when considering only the path-connected addresses, 
the nodes in a state of the algorithm often do not constitute a binary-tree.
The last property of regularity covers the specific manner in which the structure of the path-connected section of the graph may deviate from the binary-tree structure.
\begin{definition}[Regularity]
\label{DefReg}
A state is said to be {\em regular} if the following conditions hold.
\end{definition}

\begin{enumerate}
\item[R0.] $\prt$ is $\KeyF(\n)$-connected, and if $\n\neq\prt$ then $\lft\rightarrow\KeyF(\n)<\KeyF(\prt)$ and $\neg\lft\rightarrow\KeyF(\n)>\KeyF(\prt)$.
\item[R1.] For every process $p$ with $p>0$, node $\nd_p$ is potentially $k_p$-connected, and if $\nxt_p\neq\bot$ then $\nxt_p$ is also potentially $k_p$-connected. 
\item[R2.] If $x\not\in\{\Root,\bot\}$ is a path-connected node that is not tree-like, then $x=\n$ and \linebreak{}$\ControlF(\Sys)\in \{\rom{f7, f8,  r7, r8}\}$.
\end{enumerate}

In the claims and proofs that follow, we use R1$(p)$ to denote the instantiation of the universal statement R1 with some process $p>0$.

\begin{lemma}
\label{Lem4.14}
In any regular state, there are no cycles in the $x\arrow y$ relation on the path-connected nodes (except for
the cycles $\bot\arrow \bot$ and $\Root\arrow \Root$).
In particular, for every path-connected node $x$, $x\neq \bot \rightarrow x\neq \LeftF(x)$ and  $x\not\in \{\bot,\Root\}\rightarrow x\neq \RightF(x)$. 

As a consequence, $\bot$ is path-connected. In fact $\Root \arrow^*_{-\infty}\, \bot$.
\end{lemma}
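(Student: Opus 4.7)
The plan is a proof by contradiction. Suppose that a regular state $M$ admits a cycle $a_1 \arrow a_2 \arrow \cdots \arrow a_n = a_1$ with $a_1, \ldots, a_{n-1}$ pairwise distinct, and suppose it is not the trivial $\bot \arrow \bot$ or $\Root \arrow \Root$. Since $\bot$ only points to itself, no $a_i$ can equal $\bot$.

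The crux combines Observation~\ref{Lem4.10} with clause R2 of Definition~\ref{DefReg}. By Observation~\ref{Lem4.10} no tree-like node outside $\{\Root, \bot\}$ can lie on a cycle, and by R2 the only path-connected node outside $\{\Root, \bot\}$ that may fail to be tree-like is $\n$, and only when $\ControlF(\Sys) \in \{\rom{f7, f8, r7, r8}\}$. Hence each distinct $a_i$ lies in $\{\Root, \n\}$, leaving three subcases. A length-one loop $\Root \arrow \Root$ is either the permitted trivial cycle via the right child or, via the left child, is ruled out by $\Root \neq \LeftF(\Root)$ from Invariant~\ref{inv:address-points-to-root}. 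A length-one loop $\n \arrow \n$ is ruled out by the first clause of the same invariant. A length-two cycle $\n \arrow \Root \arrow \n$ needs the arc $\n \arrow \Root$, which by clause~(2) of Invariant~\ref{inv:address-points-to-root} forces $\RemovedP(\n)$ or $\ControlF(\Sys) \in \{\rom{v8, v9}\}$; the control-dependent invariants of Figure~\ref{cdLM} supply $\neg\RemovedP(\n)$ throughout the control states $\{\rom{f7, f8, r7, r8}\}$ dictated by R2, and those states are disjoint from $\{\rom{v8, v9}\}$. Every subcase contradicts the assumption of a non-trivial cycle.

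The ``in particular'' clause is then immediate: a self-pointing arc on a path-connected node would be a length-one loop, and Invariant~\ref{inv:address-points-to-root}(1),(3) already leaves only $\Root = \RightF(\Root)$ and $\bot = \LeftF(\bot) = \RightF(\bot)$. For the final consequence I would iterate the left child starting at $\Root$: the sequence $b_0 = \Root$, $b_{i+1} = \LeftF(b_i)$, is a $-\infty$-search path as long as it avoids $\bot$, since $-\infty$ is strictly less than every key in the structure. The address set is finite, so either some $b_i = \bot$ — yielding $\Root \arrow^*_{-\infty} \bot$ and, in particular, path-connectedness of $\bot$ — or the sequence becomes periodic on non-$\bot$ nodes, producing a non-trivial cycle contradicting what was just shown.

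The main obstacle is not technical depth but careful bookkeeping: one must recognise that R2 admits exactly two potential offenders on a cycle ($\Root$ and $\n$) and then pin down the mixed case $\n \arrow \Root$ using the precise exceptions of Invariant~\ref{inv:address-points-to-root}(2). Once that is handled, Observation~\ref{Lem4.10} does the heavy lifting.
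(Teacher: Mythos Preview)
Your proof is correct and follows essentially the same route as the paper: use R2 to pin the sole possible non-tree-like path-connected node to $\n$, invoke Observation~\ref{Lem4.10} to exclude tree-like nodes outside $\{\Root,\bot\}$ from any cycle, and dispatch the residual self-loop cases via Invariant~\ref{inv:address-points-to-root}. The $\bot$-reachability argument by iterating left children and appealing to finiteness is also the paper's argument.

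Your treatment is in fact slightly more careful than the paper's on one point. The paper's proof for multi-node cycles says only ``the cycle contains a tree-like node and that is impossible,'' implicitly appealing to Observation~\ref{Lem4.10}. But that observation is stated only for tree-like nodes \emph{outside} $\{\Root,\bot\}$, and $\Root$ is tree-like by fiat in Definition~\ref{Def:tree-like}. So the two-node cycle $\Root\arrow\n\arrow\Root$ is not literally excluded by Observation~\ref{Lem4.10}. You close this gap explicitly by combining clause~(2) of Invariant~\ref{inv:address-points-to-root} (forcing $\RemovedP(\n)$ or $\ControlF(\Sys)\in\{\rom{v8,v9}\}$ whenever $\n\arrow\Root$) with the control-dependent invariant $\neg\RemovedP(\n)$ and the control constraint $\ControlF(\Sys)\in\{\rom{f7,f8,r7,r8}\}$ coming from R2. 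That extra case analysis is a genuine, if small, improvement in rigor.
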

\begin{proof}
By R2, if $x$ is any path-connected node that is not tree-like, then $x=\n$. So there is
at most one path-connected node that is not tree-like. Hence if there is a cycle
of more than one node, then the cycle contains a tree-like node and that is impossible.
In the case of a cycle of a single node, the cycle must be $(\n,\n)$, so either
$\n=\LeftF(\n)$ or $\n=\RightF(\n)$. 
However, this is not the case by Inductive invariant \ref{inv:address-points-to-root}.

The conclusion that $\bot$ is path-connected relies on the assumption that the set of nodes is finite.
Starting with the root and following an arbitrary path (or the $\arrow_{-\infty}$ path)
we must reach $\bot$ and stop, or else a cycle is formed.
\end{proof}

\begin{SP}[uses: \ref{inv:pre-rem}, \ref{cor:tree-like-path}, \ref{Lem4.14}]
\label{Cm2}
For any step $s=(M,N)$ such that $M$ is regular, and for any address $a\neq\bot$ in $M$, if 
$(\Root\arrow^*_k a)^M$ then $((\Root\arrow^* a)\rightarrow (\Root\arrow^*_k a))^N$.
\end{SP}

\begin{proof}
Let $M$ be an arbitrary regular state, and suppose that $k\in\KeyT$ and address $a$ is $k$-connected in $M$.
Let $P$ be the shortest $k$-path of $M$ that leads from $\Root$ to $a$. 
Thus $a$ appears on $P$ only as its last node, and since $a\neq \bot$, the bottom node $\bot$ is not on $P$.

If no arc of path $P$ is removed by our step $(M,N)$, then $P$ remains a $k$-path in $N$ from the root to $a$, and the claim holds trivially.
Thus we may assume that there exists a single arc $(x,y)$ on $P$ such that either $y=\LeftF^M(x)\neq\LeftF^N(x)$ or $y=\RightF^M(x)\neq\RightF^N(x)$.

Since $P$ is a $k$-path in $M$ and $(x,y)$ is an arc on $P$, 
\begin{equation}
\label{Eq8}
(y=\LeftF^M(x) \Rightarrow k<\KeyF(x))\ \text{and}\ (y=\RightF^M(x) \Rightarrow k> \KeyF(x)).  
\end{equation}

In what follows, we prove for each step $(M,N)$ that mutates the function \LeftF/\RightF, that 
the following disjunction holds in $N$:
\begin{equation}
\label{EQ9}
N\models (\Root\arrow_k^* a) \vee (\Root \notarrow^* a).
\end{equation}
That is, either $a$ remains $k$-connected or else $a$ is not even path connected in $N$. 
We denote with $(x,z)$ the arc that replaces $(x,y)$ in $N$.

The only kind of step by a working process $p>0$ that changes \LeftF/\RightF\ is step i3, and
in that case, $y$ is necessarily the $\bot$ node in $M$ (and $z$ is the newly inserted
node). But this cannot be the case since $y$ is on $P$ but $\bot$ is not.

All other steps that mutate \LeftF/\RightF\ are in operations by process \Sys.
Such steps are of kinds f6--8 (and the corresponding r6--8), or v6--8.
In the remainder of this proof, we assume without loss of generality that $\lft=\TRUE$, i.e. we shall deal with steps in $\rotateLeft(\prt,\TRUE)$ and 
$\bfremove(\prt,\TRUE)$.

\begin{description}
    \item[$s\in\Step(\Sys,\rom{f6})$:] As can be observed in Figure \ref{LM}, 
    in this case arc $(x,y)=(\r,\rl)$ of $M$ is replaced by arc $(x,z)=(\r,\new)$ of $N$. 
    
    Note that $\prt$ is path-connected in $M$ since it is neither removed nor pre-removed (by Inductive invariant \ref{inv:pre-rem}).
    By the control-dependent invariants of Figure \ref{cdLM}, we have that $\prt\arrow\n\arrow\r$, and so $\r$ is also path-connected in $M$.
    Let $Q$ be a path in $M$ from the root to $\r$, then $Q$ remains a path in $N$ 
    (because arc $(\r,\rl)$ is the only arc of $M$ that is removed by $s$, and it cannot be on $Q$ or else we would have a cycle in $M$).
    We claim that $\r$ is not confluent in $N$, meaning that $Q$ is the only path from \Root\ to $\r$ in $N$. 
    This follows from the fact that since $M$ is a regular state and $\ControlF^M(\Sys)=\rom{f6}$, 
    $M$ contains no confluent nodes (Corollary \ref{cor:tree-like-path}). As a result $Q$ is the sub-path of $P$ from \Root\ to $\r$.
    
    Any arc other than $(\r,\rl)$ is not removed by $s$ (since as we have said a step can remove at most one arc).
    Thus the interval of $P$ from \Root\ to $\r$ is a $k$-path, which means that $r_0$ is $k$-connected in $N$.
    Additionally, the interval of $P$ from $\rl$ to $a$ is also intact in $N$.
    
    Thus it remains to prove the following:
    \begin{claim}
    $(r_0,\new,\rl)$ is a $k$-path in $N$.
    \end{claim}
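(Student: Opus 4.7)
The plan is to unpack the two arcs of $(r_0,\new,\rl)$ under the definition of a $k$-path and discharge each condition using, respectively, the original $k$-path $P$ and the tree-like structure of $M$. First I would record the structural facts about $N$ that come from executing f6: the control-dependent invariants in Figure~\ref{cdLM} give $\KeyF(\new)=\KeyF(\n)$, $\LeftF(\new)=\ell_0$, $\RightF(\new)=\rl$, and $\LeftF^N(\r)=\new$. So $(\r,\new)$ is an arc in $N$ with $\new$ as the \emph{left} child of $\r$, and $(\new,\rl)$ is an arc in $N$ with $\rl$ as the \emph{right} child of $\new$. To conclude that $(r_0,\new,\rl)$ is a $k$-path I need to verify $k<\KeyF(\r)$ and $k>\KeyF(\new)=\KeyF(\n)$.

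The first inequality is immediate: the arc $(\r,\rl)$ lies on $P$ with $\rl=\LeftF^M(\r)$, and since $P$ is a $k$-path in $M$ this forces $k<\KeyF(\r)$, which is the condition demanded by~(\ref{Eq8}) for the $(\r,\new)$ arc in $N$. The key fields are immutable (Step-property~\ref{sp:immutable-keys}), so this value of $\KeyF(\r)$ is the same in $M$ and $N$.

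The harder inequality is $k>\KeyF(\n)$. My plan is to show that the arc $(\n,\r)$ must itself lie on $P$ in $M$, and then the fact that $\r=\RightF^M(\n)$ (a control-dependent invariant at f6) will give $k>\KeyF(\n)$ via~(\ref{Eq8}). To place $\n$ on $P$, I would argue that in $M$ the node $\r$ has a unique parent, namely $\n$. Regularity of $M$ together with $\ControlF^M(\Sys)=\rom{f6}\notin\{\rom{f7,f8,r7,r8}\}$ means, by R2, that every path-connected node of $M$ is tree-like; then Corollary~\ref{cor:tree-like-path} rules out confluent nodes, so the path from $\Root$ to $\r$ is unique. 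Since $\n$ points to $\r$ (via $\RightF$) and $\n$ is itself path-connected (it is neither removed nor pre-removed by Inductive Invariant~\ref{inv:pre-rem}), this unique path must pass through $\n$, ending in the arc $(\n,\r)$. Thus $(\n,\r)$ appears on $P$, yielding $k>\KeyF(\n)=\KeyF(\new)$, which is exactly the condition needed for the $(\new,\rl)$ arc.

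The main obstacle is the second step: without some argument ruling out alternative routes to $\r$ in $M$, one cannot conclude that $P$ actually uses the arc $(\n,\r)$, and nothing in the $k$-path $P$ itself forces $k$ to be compared against $\KeyF(\n)$. The regularity hypothesis on $M$, used to exclude confluent parents of $\r$, is doing the real work here; everything else is essentially bookkeeping about which child of which node is being assigned in the f6 step.
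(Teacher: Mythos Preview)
Your proposal is correct and matches the paper's proof essentially step for step: both obtain $k<\KeyF(\r)$ from the arc $(\r,\rl)$ on $P$, and both obtain $k>\KeyF(\n)=\KeyF(\new)$ by using regularity of $M$ (via Corollary~\ref{cor:tree-like-path}) to rule out confluence at $\r$, forcing the arc $(\n,\r)$ onto $P$. The only cosmetic difference is that the paper establishes the non-confluence of $\r$ in the paragraph immediately preceding the claim and simply cites it, whereas you re-derive it inside the claim's proof.
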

    It follows from this claim that $N\models \Root\arrow_k^* a$.
    
    Taking into account that $\new=\LeftF^N(r_0)$ and $\rl=\RightF^N(\new)$, we must prove that 
    $k<\KeyF(r_0)$ and that $\KeyF(\new)<k$ in order to conclude the proof of our claim. 

    $P$ is a $k$-path, $(\r,\rl)$ is an arc on $P$ and $\rl=\LeftF(r_0)$ in $S$. By the Definition \ref{DefPath} of a $k$-path we conclude that $k<\KeyF(\r)$.
    
    Next, since $\r=\RightF(\n)$ (in both $M$ and $N$), and since $\r$ is not confluent, $(\n, \r)$ is an arc on $P$, and so $k>\KeyF(\n)$. 
    Since $\KeyF(\new)=\KeyF(\n)$, we have that $k> \KeyF(\new)$, as required.

    \item[$s\in\Step(\Sys,\rom{f7})$:] In this case, arc $(x,y)= (\n,\ell_0)$ of $M$ 
    is replaced by arc $(x,z)=(\n,\r)$ of $N$. Taking into account that $\ell_0=\LeftF(\n)$ in $M$
    and that $(\n,\ell_0)$ is an arc of $P$ (which is a $k$-search path), we have that $k<\KeyF(\n)$.
    We have to prove that the path $(\n,\r,\new,\ell_0)$ is a $k$-path in $N$.
    
    Since $\r = \LeftF^N(\n)$ $k<\KeyF(\n)$, we have that $\n \arrow_k \r$.
    In $M$, $\n$ and $\r$ are path-connected (by Invariant \ref{inv:pre-rem}). 
    Since $M$ is regular, $\n$ is the sole path-connected node that is not tree-like. 
    As $\r\neq\n$ (by Invariant \ref{inv:address-points-to-root}), $\r$ is a tree-like node. 
    Since $\new=\LeftF(\r)$, $\KeyF(\n)=\KeyF(\new)< \KeyF(\r)$, and since $k<\KeyF(\n)$, we get that $k< \KeyF(\r)$.
    Hence $\r\arrow_k \new$.
    Clearly, $\new\arrow_k \ell_0$ since $\ell_0=\LeftF(\new)$ and $k<\KeyF(\n)$.

    \item[$s\in\Step(\Sys,\rom{f8})$:] In this case, arc $(x, y) =(\prt,\n)$ of $M$ is replaced by arc $(x, z) =(\prt,\r)$ of $N$.
    Nodes $\prt$ and $\n$ are path-connected in $M$.
    By the regularity of $M$, R2 implies that $\n\neq\bot$ is the sole path-connected node that is not tree-like, and hence there is a single path
    from the root to $\n$ is a $\KeyF(\n)$-path (by Lemma \ref{lemma:tree-like-path}). Thus $\n$ is not confluent in $M$.
    This implies that $\n$ is no longer path-connected in $N$ ($\n$ loses the only arc that connects with the root).
    So, in case $a=\n$, Equation (\ref{EQ9}) holds in $N$.
    If path $P$ continues past \n\ in $M$, then arc $(\n,\r)$ is on $P$ (since $\LeftF(\n)=\RightF(\n)$).
    Moreover, the final segment from $\r$ to $a$ of $P$ in $M$ remains a $k$-path in $N$. Hence arc $(\prt,\r)$ in $N$
    compensates for the missing arc $(\prt,\n)$ of $M$, and $a$ remains $k$-connected in $N$.

    \item[$s\in\Step(\Sys,\rom{v6})$:] In this case, $(x,y)=(\prt, \n)$ in $M$ is replaced by arc $(x,z)=(\prt,\child)$ in $N$.
    As $\ControlF^M(\Sys)=\rom{v6}$, Corollary \ref{lemma:pdc-rem-or-locked} implies that $\n$ is not pre-removed, 
    and since it is not removed, it is path-connected in $M$. 
    The regularity of $M$ implies by R2 that all path-connected nodes are tree-like, and hence
    there is no confluent node in $M$ (Corollary \ref{cor:tree-like-path}). 
    Thus at $N$, node $\n$ is no longer path connected. 
    In case $a=\n$, as $\n$ is no longer path-connected in $N$, (\ref{EQ9}) holds as required.
    It is not the case that $a=\bot$, and hence arc $(\prt,\bot)$ is not on $P$, and if $P$ continues past $\n$ in $M$, then $(\n,\child)$ is on $P$. 
    As above, arc $(\prt,\child)$ in $N$ compensates for the lost path $(\prt,\n,\child)$, and $a$ remains $k$-connected in $N$.
 
    \item[$s\in\Step(\Sys,\rom{v7})$:] In this case, arc $(x,y)=(\n, \bot)$ in $M$ is replaced by arc $(x,z)=(\n,\prt)$ in $N$. 
    However, as $a\neq \bot$, $(\n,\bot)$ is not an arc of $P$.
    
    \item[$s\in\Step(\Sys,\rom{v8})$:] In this case, arc $(x,y)=(\n,\child)$ of $M$ is replaced by arc $(x,z)=(\n,\prt)$ of $N$. 
    Since $\prt\arrow\child$ in $M$ (by the control-dependent invariants of Figure~\ref{cdLM}), $\child$ is both a left- and right-descendant of $\n$ in $M$, 
    and since $M$ is regular, R2 implies that $\n$ cannot be path-connected in $M$ (recall that $\child\neq\bot$ by the assumptions of the invariant). 
    Thus $(\n,\child)$ cannot be an oar of $P$ in $M$.
\end{description}
\end{proof}

\begin{SP}[uses: \ref{inv:pre-rem}, \ref{lemma:pdc-rem-or-locked}, \ref{Lem4.14}]
\label{Eq5a}
Let $s=(M,N)$ be a step such that $M$ is a regular state, and $x$
is an address of $M$ that is not path-connected in $M$. Then
 $x$ is not path-connected in $N$ as well.
\end{SP}

\begin{proof}
Let $x$ be an address of $M$ that is path-connected in $N$. Our aim
is to prove that $x$ is path-connected in $M$.
We may assume that $x\neq\bot$ since $M$ is a regular
state and $\bot$ is always path-connected in a regular state (Lemma \ref{Lem4.14}).

Let $P$ be the shortest path in $T$ from \Root\ to $x$. 
If all arcs of $P$ are in $M$ then surely $(\Root\arrow^* x)^M$, and
hence we may assume that $P$ contains a new arc of $N$ that is not in $M$. 

Suppose first that step $s$ introduces a new node \new. There are two possibilities for such a step.
\begin{description}
    \item[$s\in\Step(p,\rom{i3, m0})$] for some working process $p>0$, 
    and arc $(\nd_p,\bot)$ of $M$ is replaced by arc $(\nd_p,\new)$ of $N$. In this
    case the new arcs of $N$ are $(\nd_p,\new)$ and $(\new,\bot)$. Since \new\ is a node of all new arcs,
    \new\ is on $P$, or else all arcs of $P$ are in $M$. Since $x\neq\new$ (since $x$ is in $M$),
    \new\ is not the last node of $P$, and hence $(\new,\bot)$ is the last arc of $P$
    and hence $x=\bot$, and this contradicts our assumption about $x$. 
    
    \item[$s\in\Step(\Sys,\rom{f6})$] and the new arcs added in $N$ are $(r_0,\new)$, $(\new,\ell_0)$, and $(\new,\rl)$. 
    Since $P$ is not a path in $M$, \new\ must be on $P$, 
    and \new\ is not the last node of $P$ (because $x$ is an address of $M$). 
    Thus, either arc $(\new, \rl)$ is on $P$, or arc $(\new,\ell_0)$ is on $P$.
    
    If $(\new,\rl)$ is on $P$, then $\r$ (which is the sole node of $N$ that points to $\new$) is also on $P$. 
    Then arc $(\r,\rl)$ is in $M$, meaning that $\rl$, and thus, $x$ is path-connected in $M$.
    
    Otherwise, $(\new,\ell_0)$ is on $P$.
    The final segment of $P$ from $\ell_0$ to $x$ does not contain the node \new\ 
    (otherwise $\r$ would be on that segment, causing the cycle $\r$ to $\r$, contradicting regularity of $M$).
    Hence the segment of $P$ from $\ell_0$ to $x$ is in $M$.
    Additionally, $\n$ is not removed when $\ControlF(\Sys)=\rom{f6}$, and by \ref{lemma:pdc-rem-or-locked}, 
    we get that $\n$ is not pre-removed at $M$, so $\n$ is path-connected there. Since $\ell_0=\LeftF(\n)$ in $M$, $x$ is path-connected in $M$.
\end{description}

Assume next that step $s$ does not introduce a new address, i.e., $\AddressT^M=\AddressT^N$. 
So let $s$ be a step such that $N\models\Root\arrow^*x$ for some address $x$ of $M$. 
We must prove that $M\models\Root\arrow^*x$ as well.
  
It is obvious that we can ignore any steps that do not modify \LeftF/\RightF, which leaves the following cases: 
\begin{description}
    \item[$s\in\Step(\Sys, \rom{f7})$] in which arc $(\n,\ell_0)$ of $M$ is replaced with arc $(\n,\r)$ of $N$.
    However, arc $(\n,\r)$ is already in $M$, since $\r=\RightF(\n)$ in $M$ (by the control-dependent invariants of Figure \ref{cdLM}). 
    So every arc of $N$ is also an arc of $M$, and if $x$ is path-connected in $N$ then it is path-connected in $M$.

    \item[$s\in\Step(\Sys, \rom{f8})$] in which arc $(\prt,\n)$ of $M$ is replaced with arc $(\prt,\r)$ of $N$.
    $(\n,\r)$ is an arc in $M$ (by the control-dependent invariants of Figure \ref{cdLM}). 
    Thus, if $x$ is path-connected via $(\prt,\r)$ in $N$, then it is
    path-connected via $(\prt,\n)$ and $(\n,\r)$ in $M$.

    \item[$s\in\Step(\Sys, v6)$] in which arc $(\prt,\n)$ of $M$ is replaced with arc $(\prt,\child)$ of $N$.
    $(\nd,\child)$ is an arc in $M$ (by the control-dependent invariants of Figure \ref{cdLM}). 
    Thus, if $x$ is path-connected via $(\prt,\child)$ in $N$, then it is
    path-connected via $(\prt,\n),(\n,\child)$ in $M$.
    
    \item[$s\in\Step(\Sys, v7)$] in which arc $(\n,\bot)$ of $M$ is replaced with arc $(\n,\prt)$ of $N$.
    Since $M$ is regular, we have that $\prt$ is already path-connected in $M$, by R0. Thus we conclude that if $x$ is path-connected in $N$, then it must be
    path-connected in $M$.
    
    \item[$s\in\Step(\Sys, v8)$] in which arc $(\n,\child)$ of $M$ is replaced with arc $(\n,\prt)$ of $N$.
    However, arc $(\nd,\prt)$ is already arc in $M$ (by the control-dependent invariants of Figure \ref{cdLM}). 
    So any arc of $N$ is also an arc of $M$, and so if $x$ is path-connected in $N$ then it is path-connected in $M$.
\end{description}
\end{proof}

Proving that regularity is an inductive invariant is a major part of the correctness proof of the CF algorithm.

\begin{theorem}[uses: \ref{inv:pre-rem}, \ref{lemma:pdc-rem-or-locked}, \ref{lemma:tree-like-path}, \ref{cor:tree-like-path}, \ref{Lem4.14}, \ref{Cm2}]
\label{InvReg}
Regularity is an inductive invariant.
\end{theorem}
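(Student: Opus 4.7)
The plan is to proceed by step-induction. For the base case, the initial state satisfies all three clauses trivially: $\prt = \n = \Root$ so R0 holds via the empty path; each $\nd_p = \Root$ is $k_p$-connected for every $k_p \in \omega$, so R1 holds via PT1; and the only addresses are $\Root$ and $\bot$, both tree-like by definition, so R2 is vacuous. For the inductive step I fix a step $(M,N)$ with $M$ regular and case-analyze on the step kind, showing that each of R0, R1, R2 transfers to $N$.

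R0 and R1 are the more mechanical clauses. The names $\prt, \n, \lft$ belong to \Sys\ and are only reset at Sys's master m0, where the precondition-setup phase supplies R0 directly; in between, the key-comparison clause is preserved by Step-property \ref{sp:immutable-keys}, and the $\KeyF(\n)$-connectivity of $\prt$ is carried forward by Step-property \ref{Cm2}, with Corollary \ref{lemma:pdc-rem-or-locked} ruling out $\prt$ silently becoming pre-removed. For R1, steps that touch neither $\nd_p, \nxt_p, k_p$ nor $\LeftF/\RightF$ preserve everything trivially, so the work is in (a) steps of $p$ itself, where the code is designed so that each new $\nd_p$ or $\nxt_p$ inherits potential connectivity from the previous one (direct use of Observation \ref{lemma:pt3} for the $\nxt_p := \LR(\nd_p, k_p<\nd_p.\key)$ branches and direct verification for the backtracking assignments at d2, i2, i3), and (b) graph-mutating steps by \Sys\ that could perturb the potential connectivity of some other process's $\nd_p$ or $\nxt_p$. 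For (b), a node in PT3 stays in PT3 by Step-property \ref{InvRem} and Observation \ref{observation:rem-children-constant}; a node in PT1 either remains $k$-connected by \ref{Cm2}, or by Step-property \ref{Cl4.9} and Invariant \ref{inv:pre-rem} must coincide with $\n$ and is then placed into PT2 using the control-dependent invariants of Figure \ref{cdLM}.

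The main obstacle is R2. The only steps that can create a new path-connected non-tree-like node, or leave a non-tree-like $\n$ in place while $\ControlF(\Sys)$ exits $\{\rom{f7,f8,r7,r8}\}$, are the graph-mutating \Sys-steps f6, f7, f8, v6, v7, v8 (and their r-counterparts). Each case rests on the structural foothold supplied by $M$'s regularity: at the start of any such step $\ControlF^M(\Sys) \notin \{\rom{f7,f8,r7,r8}\}$, so R2 in $M$ forces every path-connected node of $M$ to be tree-like, hence by Corollary \ref{cor:tree-like-path} the reachable subgraph of $M$ is a genuine BST with no confluence, and by Lemma \ref{Lem4.14} no cycles. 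With this in hand I would verify case by case: f6 attaches $\new$ with $\KeyF(\new) = \KeyF(\n) < \KeyF(\r)$ as $\LeftF(\r)$, and its descendants $\ell_0, \rl$ were properly located under $\n$ in $M$, so $\new$ and $\r$ remain tree-like; f7 is precisely the step where $\n$ becomes non-tree-like, which is allowed since $\ControlF^N(\Sys) = \rom{f8}$; f8 disconnects $\n$ from path-connectivity so R2 becomes vacuous for $\n$ in $N$, and the new arc $(\prt,\r)$ merely short-cuts a path already present in $M$, so no other node newly violates tree-likeness; v6--v8 are analogous and are simpler because no new key-collision is introduced. I expect the heaviest bookkeeping to be the f7/f8 (and r7/r8) verifications, since they require precise tracking of which descendants of $\r$ inherit tree-likeness from the old BST structure, and identifying that $\n$ is the unique path-connected non-tree-like node in $N$.
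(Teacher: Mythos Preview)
Your overall architecture matches the paper's: step-induction with a case split on the step kind, handling R0 and R1 via \ref{Cm2}, \ref{lemma:pdc-rem-or-locked}, and Observation \ref{lemma:pt3}, and concentrating the work on R2 for the \Sys\ graph-mutating steps. The base case and the R0/R1 sketch are fine at this level of detail.

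There is, however, a concrete error in your R2 argument. You assert that ``at the start of any such step $\ControlF^M(\Sys) \notin \{\rom{f7,f8,r7,r8}\}$, so R2 in $M$ forces every path-connected node of $M$ to be tree-like''. This is false for the steps f7 and f8 themselves: if $s \in \Step(\Sys,\rom{f7})$ then $\ControlF^M(\Sys) = \rom{f7}$, and likewise for f8. More importantly, the structural foothold you want (that the reachable part of $M$ is a genuine BST) is simply not available at these two steps. The node $\n$ becomes non-tree-like already after f6, not after f7: once f6 has executed, $\new$ is a right-descendant of $\n$ (via $\r$) with $\KeyF(\new) = \KeyF(\n)$, so $\new$ is not properly located with respect to $\n$. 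Hence at the start of f7 and of f8, $M$ already contains a path-connected non-tree-like node, and Corollary \ref{cor:tree-like-path} (no confluence, unique paths) does not apply to $M$ as you claim. Your f6 bullet also omits this: you verify that $\new$ and $\r$ stay tree-like but do not note that $\n$ fails tree-likeness in $N$; R2 still holds there only because $\ControlF^N(\Sys) = \rom{f7}$.

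The paper's proof therefore treats f7 and f8 under the weaker hypothesis that \emph{only} $\n$ is non-tree-like in $M$. For f7 it argues that any $x \neq \n$ has the same children and the same descendant sets in $N$ as in $M$, so a non-tree-like $x \neq \n$ in $N$ would already be non-tree-like in $M$, contradicting R2 there. For f8 it first shows $\n$ is not confluent in $M$ (using that every path-connected node other than $\n$ is tree-like, so any second parent of $\n$ would yield a non-tree-like common ancestor $\neq \n$), hence removing the arc $(\prt,\n)$ disconnects $\n$; then it checks that $\prt$ stays tree-like because the new child $\r$ was already in $\LeftDes(\prt)$. Your plan needs exactly this refinement for f7 and f8 (and symmetrically r7, r8); the remaining cases f6, v6--v8 are as you outlined.
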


It may be helpful for to the reader to repeat the definition of regularity before commencing with this proof:

A state is said to be {\em regular} if the following conditions hold.
\begin{enumerate}
\item[R0.] $\prt$ is $\KeyF(\n)$-connected, and if $\n\neq\prt$ then $\lft\rightarrow\KeyF(\n)<\KeyF(\prt)$ and $\neg\lft\rightarrow\KeyF(\n)>\KeyF(\prt)$.
\item[R1.] For every process $p$ with $p>0$, node $\nd_p$ is potentially $k_p$-connected, and if $\nxt_p\neq\bot$ then $\nxt_p$ is also potentially $k_p$-connected. 
\item[R2.] If $x\not\in\{\Root,\bot\}$ is a path-connected node that is not tree-like, then $x=\n$ and \linebreak{}$\ControlF(\Sys)\in \{ \rom{f7, f8,  r7, r8}\}$.
\end{enumerate}

\begin{proof}
We shall consider all possible steps $s=(M,N)$, assume that $M$ is regular and deduce that $N$ is also regular. 
In addition to proving regularity of $N$, we check for each step whether 
$\Set(M)=\Set(N)$ (see Definition \ref{DefSet}), and in the case of inequality, we determine the relation between
the two sets (either the insertion or the deletion of some key value).

Before we commence, we note that the second part of R0, namely that if $\n\neq\prt$ then $\lft\rightarrow\KeyF(\n)<\KeyF(\prt)$ and $\neg\lft\rightarrow\KeyF(\n)>\KeyF(\prt)$, is trivially preserved by any step step, except for steps $s\in\Step(\Sys,\rom{m0})$. This is because in our model, we assumed that key values of nodes never change. We will not bother to reiterate this point for every step. We will only address this part of R0 directly in the case of $s\in\Step(\Sys,\rom{m0})$, since these steps reset the values of $\n$, $\prt$ and $\lft$, and so the claim requires a proof.

We begin by checking the steps of an arbitrary working process $p>0$.

\paragraph*{Assume $s\in\Step(p, \rom{m0})$.} In this step, the local variables of $p$ are re-initialized such that in $N$: $\nd_p=\nxt_p=\Root$, $k_p$ is set to a value $k\in\KeyT$, and $\ControlF(p)\in\{\rom{c1,d1,i1}\}$. The extension of the predicates $\RemovedP$ and $\DeletedP$, the functions $\LeftF$ and $\RightF$, and $\AddressT$ are the same in $N$ as in $M$.

Since this step does not change the values of the local variables of $\Sys$ and does not change \LeftF/\RightF, and since R0 holds in $M$, R0 must hold in $N$.

Since $\Root$ is trivially $k$-connected for every $k\in\KeyT$, R1$(p)$ holds in $N$. Since the step does not change the local variables of any other process, and since $\RemovedP$, \LeftF/\RightF, and $\AddressT$ are the same in $M$ as in $N$, R1$(q)$ holds in $N$ for every other working process $q\neq p$.

Since R2 holds in $M$ and since \LeftF/\RightF\ and $\RemovedP$ are unchanged by the step, R2 holds in $N$ as well.

Since \LeftF/\RightF, $\AddressT$ and $\DeletedP$ are unchanged by the step, $\Set(M)=\Set(N)$.

\paragraph*{Assume $s\in\Step(p, \rom{c1})$.}
There are three possibilities for the execution of this step: 

\begin{enumerate}
    \item $s\in\Step(p, \rom{c1, m0})$, in which case $\nxt^M_p=\bot$. 
    Then $\ControlF^N(p)=m_0$ and so R1$(p)$ holds trivially in $N$
    since $\nd_p^N = \nd_p^M$, $\nxt_p^N=\nxt_p^M$, and the step did not change 
    the \LeftF/\RightF\ functions or the extension of the \DeletedP\ predicate.
    For any other working process $q\neq p$,
    R1$(q)^N$ is obvious since a step by $p$ does not change the program variables
    $\nd_q$ and $\nxt_q$.
    
    \item $s\in\Step(p, \rom{c1, c2})$, in which case 
    \[\nxt^M_p \neq \bot\wedge  \nd_p^N=\nxt_p^M = \nxt_p^N \wedge k_p =\KeyF(\nd_p^N).\]
    Since $M$ is a regular state and $\nxt_p^M\neq \bot$, R1$(p)^M$ implies that $\nxt_p^M$ is potentially $k_p$-connected in $M$. 
    Since $\nd_p^N=\nxt^M_p$, address $\nd_p^N$ is potentially $k_p$-connected in $M$  
    and hence in $N$ (and evidently so is $\nxt_p^N=\nxt_p^M$). 
    The arguments for R1$(q)^N$ where $q\neq p$.
    
    \item $s\in\Step(p, \rom{c1, c1})$, in which case $\nxt^M_p \neq \bot\wedge \nd_p^N=\nxt_p^M$, as in the previous case,
    but now $k_p \neq\KeyF(\nd_p^N)$ and $\nxt_p^N$ is defined by 
    \[M\models \nxt_p^N = \LR(\nd_p^N, k_p<\KeyF(\nd_p^N).\]
    As in the previous case, R1$(p)^M$ implies that $\nxt_p^M$ is potentially $k_p$-connected and hence that $\nd_p^N$ is potentially $k_p$-connected in $M$ and consequently in $N$. Observation \ref{lemma:pt3}(2) says that if node $x$ is potentially
    $k$-connected and $x\arrow_k a$ then $a$ is potentially connected.
    As $\nd_p^N\arrow_{k_p} \nxt_p^N$, address $\nxt_p^N$ is potentially $k_p$-connected in $N$.
    Thus R1$(p)$ holds in $N$, and R1$(q)$ for any working process $q>0$ follows as before. 
\end{enumerate}
Thus, R1 holds in $N$.
    
Since $M$ and $N$ have the same addresses, the same
$\RightF$ and $\LeftF$ functions, and the same interpretations of
the program variables of process \Sys, R2 holds in $N$ since it
holds in $M$.

For the same reasons, and since $s$ does not modify the local variables of $\Sys$, R0 holds in $N$ since it holds in $M$.

The definition of $\Set(M)$ depends only on the \KeyF\ function,
the \DeletedP\ predicate, and the $k$-connection predicate $\Root\arrow^*_k x$. 
Since an execution of c1 does not change these components between $M$ and $N$, $\Set(N)=\Set(M)$ follows and $\Set$ is not changed in any execution of c1.

Instructions d1 and i1 are textually isomorphic to c1, and the
same proof given above for c1 shows that their executions preserve regularity and the
value of \Set.

\paragraph*{Assume $s\in\Step(p, \rom{d2})$.}
By the control-dependent invariants of Figure \ref{fig:protocol-elementary-invariants}:  
\begin{equation}
\label{Eq:locked}
M\models \LockedP(\nd_p,p)\wedge \KeyF(\nd_p)=k_p.
\end{equation}

There are three possibilities for the execution of this step: 
\begin{enumerate}
    \item $M\models \DeletedP(\nd_p)$, and so $\ControlF^N(p)=\rom{m0}$.
    In all other aspects, $N$ is identical to $M$. 
    In particular $\nd_p^N=\nd_p^M$ and $\nxt_p^N=\nxt_p^M$. 
    So $N$ is also regular, and $\Set(N)=\Set(M)$.
    
    \item $M\models \neg\DeletedP(\nd_p)\wedge \RemovedP(\nd_p)$, which implies that $s\in\Step(p, \rom{d2, d1})$,
    and so $\nxt_p^N = (\RightF(\nd_p))^M$, and there is no change in $\nd_p$.
    Since $M$ is regular, $\nd_p^M=\nd_p^N$ is potentially $k_p$-connected in $M$ and, so in $N$. 
    Since $\nd_p$ is removed, PT3$(\nd_p,k_p)$ holds in $M$. 
    Since $\nd_p^M\arrow \nxt_p^N$, Observation \ref{lemma:pt3} implies that $\nxt_p^N$ is potentially $k_p$-connected in $N$.
    So R1$(p)$ holds in $N$, and as before R1$(q)$ holds for every $q>0$. The arguments for R0 and R2 are as in the previous 
    case. $\Set(N)=\Set(M)$ is obtained as in the previous cases since the functions
    \LeftF/\RightF\ and the \DeletedP\ predicate stay the same in $N$ as in $M$.

    \item $M\models \neg\DeletedP(\nd_p)\wedge \neg\RemovedP(\nd_p)$, which implies that $s\in\Step(p,\rom{d2, m0})$, 
    and $\DeletedP(\nd_p)$ in $N$. 
    There are no changes in $\nd_p$ or in $\nxt_p$, and
    since $M$ is regular, it follows that $N$ is regular as in previous cases.
    
    We claim that $k_p\in\Set(M) \wedge \Set(N)=\Set(M)\setminus\{k_p\}$.
    This will follow immediately after proving that $\nd_p$ is $k_p$-connected in $M$
    (which entails that $k_p\in\Set(M)$ because $\KeyF(\nd_p)=k_p$ in $M$):
    
    Since $M$ is regular, $\nd_p^M$ is potentially $k_p$-connected. Of the three possibilities
    PT1, PT2, and PT3  we will rule-out the last two, and deduce
    that $\nd_p^S$ is $k_p$-connected in $M$. 
    \begin{description}
        \item[$\rom{PT2}$:] If $\nd_p^M$ were pre-removed in $M$, then
        Inductive invariant \ref{inv:pre-rem} would imply that $\nd_p^M=\n \wedge \ControlF(\Sys)\in \{\rom{f9, r9, v7, v8, v9}\}$,
        and then $\LockedP(\nd_p^M,\Sys)$ can be concluded in contradiction to (\ref{Eq:locked}).
        \item[$\rom{PT3}$:] We assumed that $\nd_p^M$ is not removed, contradicting the condition of this case.
    \end{description} 
    So $\nd_p^M$ is $k_p$-connected in $M$, as required.
\end{enumerate}

\paragraph*{Assume $s\in\Step(p, \rom{i2})$.}
The code of i2 is very similar to that of d2, only replacing $\nd.\del$ with $\neg\nd.\del$. 
Thus the proof of regularity of $N$ is obtained by the same arguments as those that served for d2.
As with executions of d2, there are three possibilities for executions of i2.
\begin{enumerate}
    \item $M\models \neg \DeletedP(\nd_p)$, then $N$ is regular and $\Set(N)=\Set(M)$.
    
    \item $M\models \DeletedP(\nd_p)\wedge \RemovedP(\nd_p)$, then $N$ is regular and $\Set(N)=\Set(M)$.
    
    \item $M\models \DeletedP(\nd_p)\wedge \neg\RemovedP(\nd_p)$, then the same proof as for d2 shows that $N$ is a regular.
    For handling the change to \Set, the same proof gives that $\nd_p^N=\nd_p^M$ is $k_p$-connected in $M$
    and $N$, but new the conclusion for this step is that $k_p\in\Set(M)\wedge \Set(N)=\Set(M)\cup\{k_p\}$.
\end{enumerate}

\paragraph*{Assume $s\in\Step(p, \rom{i3})$.}
The control-dependent invariants of Figure \ref{fig:protocol-elementary-invariants} imply that
\begin{equation}
\label{Eq:ndp-not-bot}
M \models  \LockedP(\nd_p,p) \wedge \nd_p\neq \bot \wedge \KeyF(\nd_p)\neq k_p \wedge \nxt_p = \bot.
\end{equation}
There are two possibilities for this step which depend on whether or not $M$ satisfies \linebreak{}
$\LR(\nd_p,k_p < \KeyF(\nd_p))\neq \bot$:

\begin{enumerate} 
    \item If $M\models \LR(\nd_p,k_p < \KeyF(\nd_p))\neq \bot$,
    then $s\in\Step(p, \rom{i3, i1})$ and $\nxt_p^N= \LR(\nd_p,k_p < \KeyF(\nd_p))\neq \bot$.
    Since $M$ is regular, $\nd_p^M=\nd_p^N$ is potentially $k_p$-connected in $M$, and 
    Observation \ref{lemma:pt3}(2) implies that $\nxt_p^N$ is potentially $k_p$-connected. 
    So R1 holds in $N$, and as in previous cases, R0 and R2 hold in $N$ as well, and $\Set(N)=\Set(M)$.

    \item $M\models \LR(\nd_p, k_p < \KeyF(\nd_p))= \bot$, and assume without loss of generality
    that $k_p<\KeyF(\nd_p^M)$, and thus, $\LeftF(\nd_p)^M=\bot$. 
    Then $s\in\Step (p, \rom{i3, m0})$, there is a new address $\new$ in $N$ such that $\new=\LeftF^N(\nd_p^M)$, 
    and both left and right children of $\new$ are $\bot$.
    Additionally, $\nd^M_p=\nd_p^N$ and $\nxt^M_p=\nxt_p^N$. 
    
    Since $M$ is regular, and with the help of Step-property \ref{Cm2}, for any node $x\neq\bot$ and any $k\in\KeyT$, 
    if $x$ is $k$-connected in $M$ and path-connected in $N$, then it is $k$-connected in $N$.
    The only arc of $M$ that was removed by the step is $(\nd_p,\bot)$, and the only descendant of $\bot$ is $\bot$. 
    Thus every node $x$ that is path-connected in $M$ is also path-connected in $N$. 
    In addition, we can conclude that every path in $M$ exists in $N$, except for paths ending with the arc $(\nd_p,\bot)$. 
    Thus for any two nodes $y,z$ such that $z\neq\bot$, if $P$ is a path from $y$ to $z$ in $M$ then $P$ is a path from $y$ to $z$ in $N$.
    From this we conclude that R0 and R1 must hold in $N$, since they hold in $M$.

    We prove that R2 holds in $N$ by showing that any path-connected node that is not tree-like in $N$ is also not tree-like in $M$.
    Assume for a contradiction that there is some address $a$ of $N$ that is path-connected but not tree-like in $N$. 
    $a\neq\new$ trivially, since both the left and right children of $\new$ are $\bot$.
    So $a$ must be an ancestor of $\new$ in $N$, which means that $nd_p$ must be path-connected in $N$. 
    This means that $\nd_p$ is also path-connected in $M$, since the arcs added by the step $s$ are $(\nd_p,\new)$ and $(\new,\bot)$, 
    neither of which can be on a new path to $\nd_p$ in $N$.
    As a result, and since $\nd_p$ is potentially $k_p$-connected in $M$, $nd_p$ and all of its path-connected ancestors are in fact $k_p$-connected in $M$.
    Thus, the insertion of $\new$ cannot cause any of these nodes to become non-tree-like, since $\KeyF(\new)=k_p$.
    We conclude that $a$ is not tree-like in $M$, and R2 holds in $N$ because it holds in $M$.
\end{enumerate} 


In the next part of our proof, 
we handle steps by the system process. 
Recall that since \rotateLeft\ and \rotateRight\ are symmetrically similar, 
we only prove the claim for \rotateLeft. Without loss of generality, we also assume that $\lft=\TRUE$.

We will also prove that for any step $(S,T)$ by process \Sys, $\Set(T)=\Set(S)$.

\paragraph*{Assume $s\in\Step(\Sys, \rom{m0})$.} In this step, the local variables of $\Sys$ are re-initialized such that in $N$: $\lft=\TRUE$ (by our assumption); $\prt\in\AddressT^N$ that is not removed and is not $\bot$; $\n=\LeftF(\prt)$ (since $\lft=\TRUE$) and $\n\neq\bot$; and $\ControlF(\Sys)\in\{\rom{f6,r6,v6}\}$. The extension of the predicates $\RemovedP$ and $\DeletedP$, the functions $\LeftF$ and $\RightF$, and $\AddressT$ are the same in $N$ as in $M$.

Since $M$ is regular and $\ControlF(\Sys)\notin\{\rom{f7,f8,r7,r8}\}$, every node is tree-like. By Corollary \ref{cor:tree-like-path}, this implies that the path from $\Root$ to $\n$ a unique $\KeyF(\n)$-path. Since $\prt$ is an ancestor of $\n$, if $\prt$ is path-connected then it must be $\KeyF(\n)$-connected in $N$. We assumed that $\prt$ is not removed, and by Invariant \ref{inv:pre-rem}, since $\prt\neq\n$, $\prt$ is not pre-removed. Thus, $\prt$ is path-connected, as required.
Since $M$ is regular, $\prt$ is tree-like, and thus, given that $\n=\LeftF(\prt)$, we have that $\KeyF(\n)<\KeyF(\prt)$ in $N$. From these facts, we conclude that R0 holds in $N$.

Since this step does not change the values of the local variables of $p$ and does not change \LeftF/\RightF, and since R1 holds in $M$, R1 must hold in $N$.

Since R2 holds in $M$ and since \LeftF/\RightF\ and $\RemovedP$ are unchanged by the step, R2 holds in $N$ as well.

Since \LeftF/\RightF, $\AddressT$ and $\DeletedP$ are unchanged by the step, $\Set(M)=\Set(N)$.

\paragraph*{Assume $s\in\Step(\Sys, \rom{f6})$.} In this step, the new node $\new$ is added, and arc $(\r, \rl)$ of $M$ is replaced by the arcs $(\r, \new)$, $(\new,\rl)$ and $(\new,\ell_0)$ of $N$. We note that $(\r, \rl)$ is the only arc removed by the step, and that there is still a path from $\r$ to $\rl$ in $N$, via the arcs $(\r, \new)$ and $(\new,\rl)$. As a result, if any two nodes $x,y$ are connected by a path in $M$, then they must be connected by a path in $N$ (since if $x,y$ were to be disconnected by the step, they would have to connect via $(\r,\rl)$ in $M$). From this we can easily draw two conclusions:
\begin{enumerate}
    \item Since any two nodes $x,y$ are connected by a path in $M$, in the case $x=\Root$ we get that any $y$ that is path-connected in $M$ is also path-connected in $N$. As a result, by Step-property \ref{Cm2}, we get that for any $k\in\KeyT$ and for every node $y\neq\bot$ in $M$, if $y$ is $k$-connected in $M$, then it is $k$-connected in $N$.
    \item Otherwise, $x$ is potentially $k$-connected (but not $k$-connected) for some $k\in\KeyT$ in $M$. Then $y$ is as in the definition of PT2 or PT3 (Definition \ref{DefPot}), i.e., the $k$-connected ``anchor'' of $x$. Thus $x$ is potentially $k$-connected in $N$ as well, by way of the same $y$, since every such pair $x,y$ are still connected in $N$, and $y$ is still $k$-connected in $N$, as we concluded above.
\end{enumerate}

These conclusions hold for any node, and in particular for any $\nd_p$ and $\nxt_p\neq\bot$ for any process $p$, and for $\prt$. Thus, since $M$ is a regular state, R0 and R1 hold in $M$, meaning that $\prt$ is $\KeyF(\n)$-connected in $M$, and that $\nd_p$ and $\nxt_p\neq\bot$ are potentially $k_p$-connected in $M$ for any process $p$. We conclude that R0 and R1 hold in $N$ as well.

In order to prove that R2 holds in $N$, we must show that no node other than $\n$ is confluent in $N$, i.e., that every node except for $\n$ is tree-like in $N$. Since $M$ is regular, there are no cycles in $M$ by lemma \ref{Lem4.14}. In addition, since $M$ is regular, R2 holds in $M$, and since $\ControlF^M(\Sys)=f6$, we have that there are no confluent nodes in $M$. Also, we have from the control-dependent invariants of Figure \ref{cdLM} that $\n\arrow\r$ in $M$. From these facts we can conclude that there is no path from $\ell_0$ to $\r$ in $M$, otherwise either $\r$ would be confluent in $M$ (if the path from $\ell_0$ to $\r$ does not go through $\n$), or there would be a cycle going through $\n$ (since $\n\arrow\r$ in $S$).

Additionally, since we know there are no cycles or confluent nodes in $M$, all of the path-connected nodes are tree-like in $M$, from which we can conclude that $\KeyF(\ell_0)<\KeyF(\n)<\KeyF(\rl)<\KeyF(\r)$. Note also that $\KeyF(\new)=\KeyF(\n)$ in $N$, and so $\KeyF(\ell_0)<\KeyF(\n)=\KeyF(\new)<\KeyF(\rl)<\KeyF(\r)$ in $N$. As a result, the addition of the arc $(\r, \new)$ such that $\LeftF(\r)=\new$ and the arc $(\new,\ell_0)$ such that $\LeftF(\new)=\ell_0$ cannot make any node from $M$ become non-tree-like in $N$, except for $\n$. Finally, since $\n$ is tree-like in $M$ and $\KeyF(\new)=\KeyF(\n)$ in $N$ and $\LeftF(\new)=\ell_0$ and $\RightF(\new)=\rl$, $\new$ is also tree-like in $N$. We conclude that $R2$ holds in $N$.

We prove that $\Set(N)=\Set(M)$: States $M$ and $N$ have the same extensions of
predicates \RemovedP\ and \DeletedP\ on the nodes of $M$, and Node \new\ has the same
key as node $\n$. We also showed that this step does not change the $k$-connectivity of any node from $M$ in $N$ for any $k$. 
Thus, it must be the case that $\Set(N)=\Set(M)$.

\paragraph*{Assume $s\in\Step(\Sys, \rom{f7})$.} In this step, arc $(\n , \ell_0)$ of $M$ is replaced by arc $(\n,\r)$ of $N$.
As in the previous case ($s\in\Step(\Sys,rom{f6})$), any two nodes $x,y$ that are connected in $M$ are connected in $N$, because the removed
arc $(\n, \ell_0)$ can be replaced by the path $\n\arrow\r\arrow\new\arrow\ell_0$ of $N$. In the same manner as before, with the help of Step-property \ref{Cm2}, we can conclude that for every node $x$ in $M$ and any $k\in\KeyT$, if $x\neq\bot$ is potentially $k$-connected in $M$ then it is potentially $k$-connected in $N$. This gives us that $\prt$ is $\KeyF(\n)$-connected in $N$ and that $\nd_p$ and $\nxt_p\neq\bot$ are potentially $k_p$-connected for any $p$ in $N$, from which we get that R0 and R1 hold in $N$.

To prove that R2 holds in $N$, let $x$ be an address of $N$ that is path-connected
but not tree-like in $N$. Since obviously $\ControlF(\Sys)^T \in \{\rom{f7, f8, r7, r8}\}$, we must show that and that $x=\n$.
Assume that $x\neq \n$. Since $x\neq\n$, $y=\LeftF^M(x)=\LeftF^N(x)$
and $z=\RightF^M(x)=\RightF^N(x)$ (because only arc $(\n,\ell_0)$ is removed by the step). And just as any other node,
$y$ (and $z$) have the same descendants in $N$ as in $M$, meaning that $x$ is not tree-like in $M$, contradicting the regularity of $M$. 
We conclude that such an $x$ does not exist, and R2 holds in $N$.

Finally, as was the case before, $\Set(N) = \Set(M)$, since $M$ and $N$ have the same addresses and same extension of predicate
$\DeletedP$, and the step does not change the $k$-connectivity of any node from $M$ in $N$ for any $k$.

\paragraph*{Assume $s\in\Step(\Sys,\rom{f8})$.} In this step, arc $(\prt, \n)$ of $M$ is replaced by arc $(\prt, \r)$ or $N$.
This case is slightly different from the previous two cases; this time we can show that the connectivity of every pair of nodes $x,y$ of $M$ is preserved by the step {\em if $y\neq\n$}: Since $\prt\neq\bot$ is not removed by the control-dependent invariants of Figure \ref{cdLM}, and by Corollary \ref{lemma:pdc-rem-or-locked}, we have that $\prt$ is path-connected in $M$. By the control-dependent invariants of Figure \ref{cdLM}, there are paths from $\prt$ to $\ell_0$ and from $\prt$ to $\r$ in $M$ (via $(\prt,\n)$,$(\n,\ell_0)$ and via $(\prt,\n)$,$(\n,\r)$, respectively). So, using the same arguments as before, the potential $k$-connectivity of any node $y$ for any $k\in\KeyT$ is preserved by the step except in a small number of cases:
\begin{enumerate}
    \item $y=\n$ in $M$: If $\n$ is still path-connected in $N$, then nothing changes, and the claim holds. Otherwise, $\n$ is not path-connected in $N$, then it is pre-removed in $N$ by the fact that it is not removed (control-dependent invariants of Figure \ref{cdLM}) and by Corollary \ref{lemma:pdc-rem-or-locked}. So we must show that if $\n$ is $k$-connected for some $k\in\KeyT$ in $M$, then PT2$(\n,k)$ holds in $N$.
    
    Since $M$ is regular and $\prt$ is path-connected, then $\prt$ is tree-like in $M$. We assumed that $\n=\LeftF(\prt)$ (we are proving the case that $\lft=\TRUE$), and so $\KeyF(\n)<\KeyF(\prt)$ and also $\KeyF(\r)<\KeyF(\prt)$ ($\r$ is a left-descendant of $\prt$). So for any $k$ for which $\n$ is $k$-connected in $M$, $k<\KeyF(\prt)$. Next, we know that $T\models\r=\LeftF(\prt)$, and so $\r$ is trivially $k$-connected in $N$. Thus PT2$(\n,k)$ must hold in $N$, since $\n\arrow\r$ in $N$.
    
    \item $y\neq\n$ and PT3$(y,k)$ holds in $M$, and $\n$ is the $k$-connected ``anchor'' of $y$ from those definitions (see Definition \ref{DefPot}): As before, if $\n$ is still path-connected in $N$, then nothing changes, and the claim holds. Otherwise, by the previous bullet, PT2$(\n,k)$ and so PT3$(y,k)$ holds in $N$ by Definition \ref{DefPot}).
    
    Note that it is impossible that PT2$(y,k)$ holds if $y\neq\n$, since by Inductive invariant \ref{inv:pre-rem}, only $\n$ can be pre-removed.
\end{enumerate}
Thus R1 holds in $N$.

Since $M$ is regular, $\prt$ is not a descendant of $\n$ in $M$. This follows from the fact that $(\prt, \n)$ is an arc of $M$, and there are no cycles in a regular state (by lemma \ref{Lem4.14}). From this and the proof that R1 holds in $N$, it follows that $\prt$ remains $\KeyF(\n)$-connected in $N$, since it is $\KeyF(\n)$-connected in $M$. Thus R0 holds in $N$.

We now prove R2 holds in $N$. Since $\ControlF(\Sys) = \rom{f9} \notin\{\rom{f7,f8,r7,r8}\}$ in $N$, we must show that all path-connected nodes are tree-like in
$N$. Since $M$ is regular and $\ControlF^M(\Sys) = \rom{f8}$, all path-connected nodes are tree-like, except for $\n$. We show that $\n$ is not path-connected in $N$, and that $\prt$ remains tree-like in $N$, from which we conclude that all path-connected nodes are tree-like in $N$ (since only $\LeftF(\prt)$ changes in the step, showing $\prt$ remains tree-like suffices). We argued above that $\prt$ is path-connected $M$, and so $\n$ is path-connected in $M$. 

We claim that $\n$ is not confluent in $M$: If there were a path-connected node $x\neq\prt$ such that $x\arrow\n$ in $s$, then $\prt$ and $x$ would have some common ancestor $z$ such that $\n\in\LeftDes(z)$ and also $\n\in\RightDes(z)$. Since there are no cycles in $M$ (by lemma \ref{Lem4.14}), $z\neq\n$, meaning $z$ is not tree-like in $M$, contradicting the regularity of $M$. So $\n$ is path-connected in $M$ only via the arc $(\prt, \n)$, which is removed by the step, and so $\n$ is not path-connected in $N$.

That $\prt$ remains tree-like in $N$ is trivial, since we already know that $\LeftF(\prt)=\r$ in $N$, but $\r\in\LeftDes(\prt)$ in $M$. Thus R2 holds in $N$.

Finally, we show that $\Set(N) = \Set(M)$: $M$ and $N$ have the same addresses and same extension of predicate
$\DeletedP$. We also showed that this step does not change the $k$-connectivity of any node from $M$ in $N$ for any $k$, except for node $\n$. 
So it remains to show that $\new$ is $\KeyF(\new)=\KeyF(\n)$-connected in $N$. 
Lemma \ref{lemma:tree-like-path} states that if every path-connected node is tree-like then every $x$ is $\KeyF(x)$-connected.
This observation holds in our case, since we showed that R2 holds in $N$, which means that every path-connected node is tree-like in $N$.
Thus, it must be the case that $\Set(N)=\Set(M)$.

\paragraph*{Assume $s\in\Step(\Sys, \rom{f9})$.} In this step, $\LeftF$, $\RightF$ and $\AddressT$ have the same interpretations in both $M$ and $N$, $\nd_p^M=\nd_p^N$ and $\nxt_p^M=\nxt_p^N$ for every working process $p>0$, and likewise the denotations of $\n$ and $\prt$ do not change. It is obvious for every address $x\neq \n$ and key value $k$ that $x$ is potentially $k$-connected in $M$ if and only if $x$ is potentially $k$-connected in $N$. 
Thus R0 holds in $N$. 

Since $\n$ is marked as removed by the step, To prove that R1 holds in $N$, 
we must show that for any $k\in\KeyT$, $\n$ is potentially $k$-connected in $N$ if and only if it is potentially $k$-connected in $M$.
Since $M$ is regular and $\ControlF(\Sys)=\rom{f9}\not\in \{\rom{f7, f8, r7, r8}\}$, every path connected node of $M$ is tree-like.
From the control-dependent invariants of Figure \ref{cdLM}, we have that $\n$ is focused, and so not tree-like, meaning it cannot be path-connected in $M$.
Since $\n$ is not removed in $M$, it is pre-removed in $M$. 
If $\n$ is potentially $k$-connected in $M$, then PT2$(\n, k)$ holds in $M$, and since $\ControlF(\Sys)=\rom{f9}$, $\r$ is
$k$-connected in $M$. Thus $\r$ is $k$-connected in $N$, and so PT3$(\n,k)$ holds in $N$.
We conclude that R1 holds in $N$.

R2 trivially holds in $N$ if and only if it holds in $M$, since $\LeftF$, $\RightF$ and $\AddressT$ have the same interpretations in $M$ and in $N$.

Finally, $\Set(M)=\Set(N)$ holds trivially, since $\LeftF$, $\RightF$ and $\AddressT$ and $\DeletedP$ have the same interpretations in $M$ and in $N$.

\paragraph*{Assume $s\in\Step(\Sys,\rom{v6})$.} In this step, arc $(\prt,\n)$ of $M$ is replaced with arc $(\prt,\child)$ of $N$. The proof for this step is nearly identical to that of $s\in\Step(\Sys,\rom{f8})$ above, with the only real difference being in the proof that $\Set(M)=\Set(N)$.

R0 and R1 hold in $N$ by the same reasoning as for $s\in\Step(\Sys,\rom{f8})$: the connectivity of every pair of nodes $x,y$ of $M$ is preserved by the step {\em if $y\neq\n$}, using the same arguments. The same two special cases for where $\n$ is involved are handled in the same way, using $\child$ instead of $\r$.

The proof that R2 holds in $N$ is also very similar, once again substituting $\r$ with $\child$.

Finally, we show that $\Set(N) = \Set(M)$: $M$ and $N$ have the same addresses and same extension of predicate
$\DeletedP$. We also showed that this step does not change the $k$-connectivity of any node from $M$ in $N$ for any $k$, except for node $\n$, 
which becomes non-path-connected in $N$. As a result, we must show that $\KeyF(\n)\notin\Set(M)$. 
By the control-dependent invariants of Figure \ref{cdLM}, we know that $\DeletedP(\n)$ in $M$. 
In addition, since $M$ is regular and $\ControlF(\Sys)=\rom{v6}$ in $M$, R2 implies that all path-connected nodes are tree-like in $M$. 
Thus, by Corollary \ref{cor:tree-like-path}, there are no two path-connected nodes with the same key in $M$, and so $\KeyF(\n)\notin\Set(M)$, since $\DeletedP(\n)$.
Thus $\Set(N) = \Set(M)$, as required.

\paragraph*{Assume $s\in\Step(\Sys, \rom{v7})$.} In this step, arc $(\n,\bot)$ of $M$ is replaced by arc $(\n, \prt)$ of $N$.

Once again, we claim that for every pair of nodes $x,y$ such that there is a path form $x$ to $y$ in $M$, there is a path from $x$ to $y$ in $N$. This is trivial (recall that the only descendant of $\bot$ is $\bot$ itself), except for the case that $y=\bot$. As before, since $M$ is regular, and with the help of stop-property \ref{Cm2}, for every node $z\neq\bot$ in $M$ and every $k\in\KeyT$, if $z$ is $k$-connected in $M$ then it is $k$-connected in $N$. Thus R0 holds in $N$.

Since the only arc that is changed by the step is an outgoing arc of $\n$, to prove R1 holds in $N$, it suffices to show that for any $k\in\KeyT$, if $\n$ is potentially $k$-connected in $M$ then it is potentially $k$-connected in $N$. So we must show that for every $k\in\KeyT$ such that $\n\arrow_k\bot$ holds in $M$, $\prt$ is $k$-connected in $N$. If we can show that $\n$ is pre-removed in $M$, then we get this conclusion ``for free'' from the definition of PT2$(\n,k)$ (see Definition \ref{DefPot}).

To show that $\n$ is pre-removed in $M$, we check two possibilities:
\begin{enumerate}
    \item If $\child\neq\bot$, then since $\prt$ is $\KeyF(\n)$-connected in $M$, then it must be path-connected in $M$, and from the control-dependent invariants of Figure \ref{cdLM} we have that $\prt\arrow\child$. So if $\n$ were path connected in $M$, then $\child$ would be confluent in $M$. However, since R2 holds in $M$ and since $\ControlF(\Sys)=\rom{v7}$, this cannot be.
    \item Assume $\child=\bot$. Since R2 holds in $M$ and since $\ControlF(\Sys)=\rom{v7}$, then all path-connected nodes in $M$ are tree-like. Then if $\n$ is path-connected, we have from Corollary \ref{cor:tree-like-path} that there is a single $\KeyF(\n)$-path from $\Root$ to $\n$ in $M$. Since $\prt$ is $\KeyF(\n)$-connected in $M$ by R0, then $\n$ must be a descendant of $\prt$ in $M$. Since $\lft=\TRUE$, then by R0, $\KeyF(\n)<\KeyF(\prt)$, and since $\prt$ is tree-like in $M$, we have that $\n\in\LeftDes(\prt)$ in $M$. However, since $\lft=\TRUE$, $\LeftF(\prt)=\bot$ in $M$, and $\LeftDes(\prt)=\emptyset$ (see Definition \ref{def:descendants}).
\end{enumerate}
In both cases, $\n$ cannot be path-connected in $M$, as required, and so R1 holds in $N$.

Since in both $M$ and $N$, $\ControlF(\Sys)\notin\{\rom{f7,f8,r7,r8}\}$, to prove that R2 holds in $N$, we have to prove that all path-connected nodes are tree-like in $N$. 
It is easy to check that $M$ and $N$ have the same path-connected nodes,
and that for any path-connected node $x$, the left (correspondingly right) descendants of $x$ in $M$ and in $N$ form the same set. 
So R2 holds in $N$ as well.

That $\Set(N)=\Set(M)$ follows immediately from the fact that $\DeletedP$ has the same extension in $M$ as in $N$, 
and from our observations that for any path-connected node $x$, the left (correspondingly right) descendants of $x$ in $M$ and in $N$ form the same set.

\paragraph*{Assume $s\in\Step(\Sys, \rom{v8})$.} In this step, arc $(\n, \child)$ of $M$ is replaced with arc $(\n, \prt)$ of $N$.

By control-dependent invariants of Figure \ref{cdLM}, we have that in $M$: $\prt\arrow\child$, $\n\arrow\prt$, $\prt$ is not removed, and $\prt\neq\nd$. With the help of \ref{inv:pre-rem}, these facts imply that $\prt$ cannot be pre-removed in $M$. Since $M$ is regular, $\prt$ is $\KeyF(\n)$-connected in $M$. It follows that $\n$ cannot be an ancestor of $\prt$ in $M$ (otherwise, the $\KeyF(\n)$-path would end at $\n$ and not continue to $\prt$). Thus, $\n$ must be pre-removed in $M$, otherwise $\prt$ would confluent in $M$, since $\n$ points to $\prt$ but is not on the path $\Root\arrow_{\KeyF(\n)}^*\prt$. 

Since the only arc removed by the step of an outgoing arc of $\n$, we conclude that all nodes that are path connected in $M$ are path connected in $N$. Since $M$ is regular, and by Step-property \ref{Cm2}, for every node $x$ in $M$ and every $k\in\KeyT$, if $x$ is $k$-connected in $M$, then it is $k$-connected in $N$. Thus R0 holds in $N$.

As in previous steps, since the only arc that changed by the step is an outgoing arc of $\n$, in order to prove that R1 holds in $N$, it suffices to show that for every $k\in\KeyT$ such that $\n$ is potentially $k$-connected in $M$, it is potentially $k$-connected in $N$. By control-dependent invariants of Figure \ref{cdLM}, we have that $\n$ is not removed in $M$, and we already argued that $\n$ is not path-connected in $M$. Thus $\n$ is pre-removed, and if $\n$ is potentially $k$-connected in $M$, then PT2$(\n,k)$ must hold in $M$. By the definition of PT2 (see Definition \ref{DefPot}), we have that $\prt$ is $k$-connected in $M$ and we already concluded it must also be $k$-connected in $N$. Thus $\n$ is potentially $k$-connected in $N$ as well (since $\LeftF(\n)=\RightF(\n)=\prt$ in $N$), and R1 holds in $N$.

For R2, we prove that all path-connected nodes $x\notin \{\Root,\bot\}$ in $N$ are tree-like. If $x$ is path-connected in $N$ then
it is path-connected in $M$, and hence it is a tree-like node in $M$,
and this implies that it is tree-like in $N$. Similar to the proof of R1, all nodes that were path connected in $M$ are path connected in $N$, and so R2 holds in $N$ as well.

Once again, $\Set(N)=\Set(M)$ follows immediately from the fact that $\DeletedP$ has the same extension in $M$ as in $N$, and from our observations that for any path-connected node $x$, the left (correspondingly right) descendants of $x$ in $M$ and in $N$ form the same set.

\paragraph*{Assume $s\in\Step(\Sys,\rom{v9})$.} The proof for this case if identical to the proof for the case of $s\in\Step(\Sys,\rom{f9})$, for all practical purposes.

\underline{This ends the proof that regularity is an inductive invariant.}
\end{proof}

\begin{SP}
\label{sp:rem-children}
Let $(M, N)$ be a step of the algorithm. If $x\in\AddressT^M$ and $\RemovedP(x)^M$, then $\LeftF^M(x)=\LeftF^N(x)$ and $\RightF^M(x)=\RightF^N(x)$.
\end{SP}

\begin{proof}
This is exactly what Corollary \ref{cor:constantly-focused} states.
\end{proof}

\begin{SP}[uses: \ref{Cl4.9}, \ref{cor:tree-like-path}]
\label{sp:pkc-after-disconnect}
Let $(M, N)$ be a step such that $M$ is regular, let $x\neq\bot$ be an address of $M$, and let $k\in\omega$ be a key value 
such that $M\models\Root\arrow_k^* x$ but $N\models\Root\notarrow_k^* x$. Then $x$ is potentially $k$-connected in $N$.

In addition, if $k=\KeyF(x)$, then both $\Root\arrow_k^*\LeftF(x)$ and $\Root\arrow_k^*\RightF(x)$ in $N$.
\end{SP}

\begin{proof}
Since $N\models\Root\notarrow_k^* x$, Step-Property \ref{Cm2} implies that $N\models \Root\notarrow^* x$. 
Thus, by lemma \ref{Cl4.9}, $(M,N)$ executes either f8, r8 or v6, and so $x=\n$.

If the step executes f8, then $\n=\LR(\prt,\lft)$ in $M$, and $\r=\LR(\prt,\lft)$ in $N$, and $\RightF(\n)=\LeftF(\n)=\r$ in both $M$ and $N$. 
By the regularity of $M$: 
\begin{enumerate*}[label=(\arabic*)]
    \item $\prt$ is $k$-connected in $M$, and so also in $N$ (arc $(\prt,\n)$ is not an arc of path $\Root\arrow^*\prt$);
    \item $\n$ is not confluent in $M$, and so path-connected in $N$; and 
    \item since $\prt\arrow_k\n$ in $M$, then $\prt\arrow_k\r$ in $N$, and so PT2$(\n,k)$ holds in $N$.
\end{enumerate*}
We conclude that the claim holds.

If the step executes v6, then $\n=\LR(\prt,\lft)$ in $M$, and $\child=\LR(\prt,\lft)$ in $N$, and $\n\arrow\child\wedge \n\arrow\bot$ in both $M$ and $N$.
By the regularity of $M$:
\begin{enumerate*}[label=(\arabic*)]
    \item $\prt$ is $k$-connected in $M$, and so also in $N$ (arc $(\prt,\n)$ is not an arc of path $\Root\arrow^*\prt$);
    \item $\n$ is not confluent in $M$, and so path-connected in $N$; and 
    \item since $\prt\arrow_k\n$ in $M$, then $\prt\arrow_k\child$ in $N$.
\end{enumerate*}
We know that $\Set(M)=\Set(N)$. If $k\in\Set(N)=$ then $\n\notarrow_k\bot$ in $M$, and so PT2$(\n, k)$ holds in $N$ via $\n\arrow\child$. 
Otherwise, $k\notin\Set(N)$, and so $\child\arrow^*_k\bot$, implying that $\bot$ is $k$-connected, and so PT2$(\n, k)$ holds in $N$ via $\n\arrow\bot$.
If $k=\KeyF(\n)$, then $k\notin\Set(N)$, and we already showed that both $\bot$ and $\child$ are $k$-connected in $N$.
\end{proof}

\begin{SP}
\label{sp:pdc-remains-pkc}
Let $s=(M, N)$ be a step such that $M$ is regular, let $k\in\omega$ be a key value, and let $x\in\AddressT$ such that $\Root\notarrow^* x$ and $x$ is potentially $k$-connected in $M$. Then $x$ is potentially $k$-connected in $N$.
\end{SP}

\begin{proof}
Since key-values are immutable, it suffices to prove the claim for steps that change $\LeftF$ or $\RightF$.

Note that $x\neq\bot$, since $\bot$ is always path-connected.

\begin{description}
    \item[$s\in\Step(p,i3)$:] Since this step only removes arc $(y,\bot)$, it is impossible that the step changes the potential-connectivity of any node.
    \item[$s\in\Step(\Sys,f6)$:] Clearly, any node $x$ that was potentially connected in a way dependent on the arc $(\r,\rl)$ remains potentially connected via the arcs $(\r,\new)$ and $(\new,\rl)$.
    \item[$s\in\Step(\Sys,f7)$:] Clearly, any node $x$ that was potentially connected in a way dependent on the arc $(\n,\ell_0)$ remains potentially connected via the arcs $(\n,\r)$, $(\r,\new)$, and $(\new,\ell_0)$.
    \item[$s\in\Step(\Sys,f7)$:] Clearly, any node $x$ that was potentially connected in a way dependent on the arc $(\prt,\n)$ remains potentially connected via the arcs $(\prt,\r)$.
    \item[$s\in\Step(\Sys,v6)$:] Clearly, any node $x$ that was potentially connected in a way dependent on the arcs $(\prt,\n)(\n,\child)$ remains potentially connected via the arcs $(\prt,\child)$.
    \item[$s\in\Step(\Sys,v7)$:] Since this step only disconnects $\bot$ from $\n$, the claim holds trivially.
    \item[$s\in\Step(\Sys,v8)$:] Clearly, any node $x$ that was potentially connected in a way dependent on the arc $(\n,\child)$ remains potentially connected via the arcs $(\n,\prt)$ and $(\prt,\child)$.
\end{description}
\end{proof}

\begin{SP}
\label{sp:dc-chd-leads-to-chd}
Let $s=(M, N)$ be a step such that $M$ is regular, and let $x\in\AddressT$ such that $M\models\Root\notarrow^* x$, and either $\LeftF^M(x)\neq\LeftF^N(x)$ or $\RightF^M(x)\neq\RightF^N(x)$. Let $y$ be the new child of $x$ in $N$, and let $z$ be the child that did not change. Then for every $k\in\omega$ such that $N\models\Root\arrow_k^*z$, also $N\models\Root\arrow_k^*y$.
\end{SP}

\begin{proof}
There are only two types of steps that match this scenario: 
\begin{description}
    \item[$s\in\Step(\Sys,v7)$:] In this step, arc $(\n,\bot)$ is replaced with arc $(\n,\prt)$. Note that $x=\n$, $y=\prt$ and $z=\child$. Since $\Set(M)=\Set(N)$ and $\child=\LR(\prt,\lft)$ in both $M$ and $N$, the claim holds trivially.
    \item[$s\in\Step(\Sys,v8)$:] In this step, arc $(\n,\child)$ is replaced with arc $(\n,\prt)$. Note that $x=\n$, $y=\prt$ and $z=\child$. Since $\Set(M)=\Set(T)$ and $\child=\LR(\prt,\lft)$ in both $M$ and $N$, the claim holds trivially.
\end{description}
\end{proof}

\begin{SP}[uses: \ref{inv:rem-chld-not-bot}]
\label{sp:rem-not-pc}
Let $s=(M,N)$ be a step such that $M$ is regular. Then for any $x\in\AddressT^M$, if $\neg\RemovedP(x)$ in $M$ and $\RemovedP(x)$ in $N$, then $\Root\notarrow^* x$ in $M$.
\end{SP}

\begin{proof}
Let $x$ be a removed node of $N$. From Inductive invariant \ref{inv:rem-chld-not-bot} we have that $\RightF(x)=\LeftF(x)$. Thus $x$ is not tree-like. If $x$ is path-connected, since $M$ is regular and regularity is an invariant, then $N$ is regular. Thus, R2 holds in $N$, and so $x=\n$. Since the step $s$ changes the extension of the $\RemovedP$ predicate, by observation of the algorithm, $s\in\Step(\Sys,\rom{f9})\cup\Step(\Sys,\rom{r9})\sup\Step(\Sys,\rom{v9})$.

If $s\in\Step(\Sys,\rom{f9})$, then the claim follows from the fact that $\prt$ is path-connected in $M$ by R0, and that both $\prt$ and $\n$ point to $\r$ (by the control-dependent invariants of Figure \ref{cdLM}), implying that $\n$ cannot be path-connected (otherwise $\r$ would be confluent, contradicting R2 in $M$).

If $s\in\Step(\Sys,\rom{v9})$, then the claim follows from the fact that $\prt$ is $\KeyF(\n)$-connected in $M$ by R0, and that $\n$ points to $\prt$ (by the control-dependent invariants of Figure \ref{cdLM}). In this case, if $\n$ were path-connected, then $\n$ would be a parent of $\prt$ in $M$ (contradicting R0).
\end{proof}

\begin{corollary}
\label{cor:rem-not-pc}
The combination step-properties \ref{InvRem}, \ref{Eq5a} and \ref{sp:rem-not-pc} implies that in any regular state $M$, for any address $x$, if $\RemovedP(x)$ then $\Root\notarrow^* x$ in $M$.
\end{corollary}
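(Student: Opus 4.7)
The plan is to prove this as an inductive invariant on the history of the algorithm. The induction is over the length of the history, with the statement being: \emph{for every address $x$, if $\RemovedP(x)$ in the current state, then $\Root \notarrow^* x$ in the current state.} The base case is trivial since in the initial state no address is removed, so the implication holds vacuously.

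For the inductive step, consider an arbitrary step $s = (M,N)$ where the property holds in $M$, and $M$ is regular (and consequently $N$ is regular by Theorem~\ref{InvReg}). Let $x$ be an arbitrary address of $N$ such that $\RemovedP(x)^N$. I would distinguish three cases according to the status of $x$ relative to $M$:

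\textbf{Case 1:} $x$ is a new address introduced by step $s$. By Step-property~\ref{InvRem}, any new address in $N$ is not removed in $N$. This contradicts $\RemovedP(x)^N$, so this case cannot occur.

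\textbf{Case 2:} $x$ is already present in $M$ with $\RemovedP(x)^M$. By the inductive hypothesis applied to $M$, we have $\Root \notarrow^* x$ in $M$. Since $M$ is regular, Step-property~\ref{Eq5a} immediately gives $\Root \notarrow^* x$ in $N$.

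\textbf{Case 3:} $x$ is present in $M$ but with $\neg\RemovedP(x)^M$. Since $M$ is regular and $x$ becomes removed in the step from $M$ to $N$, Step-property~\ref{sp:rem-not-pc} gives $\Root \notarrow^* x$ in $M$. Again, because $M$ is regular, Step-property~\ref{Eq5a} yields $\Root \notarrow^* x$ in $N$.

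This exhausts all possibilities for $x$, completing the inductive step. There is no real obstacle here: the three cited step-properties were precisely designed to cover the three ways a node can be removed in a successor state (newly added, already removed, or freshly removed), and Step-property~\ref{Eq5a} provides the propagation of non-path-connectedness. The only subtlety is making sure the hypothesis of regularity required by \ref{Eq5a} and \ref{sp:rem-not-pc} is available, which is exactly what Theorem~\ref{InvReg} gives us along the entire history.
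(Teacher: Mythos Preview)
Your proof is correct and is precisely the argument the paper intends: the corollary is stated without an explicit proof, merely citing the three step-properties, and your induction on the history is the natural way to combine them. The three cases you distinguish (new address, already removed, freshly removed) exhaust the possibilities and match exactly the roles of \ref{InvRem}, the inductive hypothesis together with \ref{Eq5a}, and \ref{sp:rem-not-pc} together with \ref{Eq5a}, respectively.
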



\section{Histories and their properties}
\label{SecHistory}
At this stage we know that regularity is an inductive invariant (Theorem \ref{InvReg}), 
and hence we may {\em assume} that all states are regular.

\begin{definition}
A {\em history} is an infinite sequence of {\em regular} states $\oM =(M_i\mid i\in\omega)$ such that $M_0$ is an initial address structure, and for every index $i$ the 
pair $(M_i,M_{i+1})$ is a step by one of the processes $p\in\{0,\ldots,N\}$. 
\end{definition}


A history sequence models an execution of the CF algorithm.

We clarify the distinction between local and history statements:
A local statement $\varphi$ is a statement in the language $\mathcal{L}_{AS}$ of address structures, 
and for any structure $M$, either $M\models \varphi$ or else $M\models \neg\varphi$.
The truth of a {\em history statement} is evaluated at any given history sequence \oM\ to be
true or false for that history. 

A history statement may include quantification over history indexes, 
which would be meaningless in local statements (which use language $\mathcal{L}_{AS}$).
A history statement involves local state statements of the form $(\varphi)^{M_i}$ where
$M_i$ is a reference to the $i$th state of a history sequence and $\varphi$ is a local statement.
If $\psi$ is a history statement, and $\oM$ a history,
then either $\psi$ holds in $\oM$ (denoted $\oM\models \psi$),
or its negation holds (denoted $\oM\models \neg \psi$).

The following is an example of a history statement which turns out to be useful.

\begin{lemma}[uses: \ref{Cm2}, \ref{Eq5a}]
\label{T4.4}
Let $\oM$ be a history, $x\neq \bot$ an address, and $i$ an index such that $M_i\models \Root\arrow_k ^* x$. 
Let $j>i$ be an index such that $M_j\models \Root \arrow^* x$. Then $\Root\arrow_k ^* x$ in $M_j$ as well.
\end{lemma}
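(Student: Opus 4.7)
The plan is to prove this by induction on the gap $j-i$, stringing together the two single-step results already established for regular states: Step-property \ref{Cm2} (a $k$-connection in one state lifts to a $k$-connection in the next state whenever path-connection is preserved) and Step-property \ref{Eq5a} (path-connection cannot be regained by a single step from a disconnected state). Since every state of a history is regular by Theorem \ref{InvReg}, both step-properties are available at every index.

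For the base case $j=i$ there is nothing to show. For the inductive step, I fix $i,k,x$ with $M_i\models\Root\arrow_k^* x$ and assume as induction hypothesis that for every $j'$ with $i\le j'\le j$, whenever $M_{j'}\models\Root\arrow^* x$ we also have $M_{j'}\models\Root\arrow_k^* x$. Now suppose $M_{j+1}\models\Root\arrow^* x$. Since $x\neq\bot$ and $M_j$ is regular, Step-property \ref{Eq5a} applied contrapositively to the step $(M_j,M_{j+1})$ yields $M_j\models\Root\arrow^* x$. The induction hypothesis then gives $M_j\models\Root\arrow_k^* x$. Applying Step-property \ref{Cm2} to the step $(M_j,M_{j+1})$ — whose hypothesis $M_j\models\Root\arrow_k^* x$ and $M_{j+1}\models\Root\arrow^* x$ is now satisfied — we conclude $M_{j+1}\models\Root\arrow_k^* x$, as required.

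The only subtlety to verify is that $x$ really is an address of $M_j$ (so that Step-property \ref{Eq5a} is applicable); but $x$ is already present in $M_i$ by hypothesis, and no step removes addresses from $\AddressT$, so $x\in\AddressT^{M_{j}}$ for every $j\ge i$. No obstacle beyond this routine bookkeeping arises, because the genuinely hard content — that no single step can sever and later re-route the $k$-search path to a still path-connected address — was already isolated in the proofs of \ref{Cm2} and \ref{Eq5a}. The present lemma is simply the iteration of those one-step facts along the history.
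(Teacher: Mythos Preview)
Your proof is correct and follows essentially the same approach as the paper: both arguments iterate Step-properties \ref{Cm2} and \ref{Eq5a} along the interval $[i,j]$, with the paper packaging this as a maximal-index-plus-contradiction argument while you write it out as an explicit induction on $j-i$. The content is identical.
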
 
\begin{proof}
Let $j_0$ be the maximal index such that $j_0\leq j$ and $M_{j_0}\models  \Root\arrow^*_k x$. 
Then $i\leq j_0$. We want to prove that $j_0=j$. Assume for a contradiction that $j_0< j$. 
So $j_0+1\leq j$. Then Step-property \ref{Cm2} implies directly that $x$ is not path-connected in $M_{j_0+1}$ 
(since by the maximality of $j_0$, $x$ is not $k$-connected  in $M_{j_0+1}$). 
By Invariant \ref{Eq5a}, a node that is not path connected, cannot later become path-connected, 
contradiction to the assumption that $M_j\models \Root\arrow^* x$.
\end{proof}



\begin{lemma}[uses: \ref{lemma:pdc-rem-or-locked}, \ref{Eq5a}]
\label{Lem4.4}
Let $\oM$ be a history.
Let $\ell$ be a history index, and let $y$ be an address such that $y$ is $k$-connected in $M_\ell$,
but not path-connected in $M_{\ell+1}$.
Then for every index $j\geq \ell$, $\DeletedP^{M_{\ell}}(y)\, \iff\, \DeletedP^{M_{j}}(y)$.
\end{lemma}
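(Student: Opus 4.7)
The plan is to apply Step-property~\ref{Cl4.9} to the step $(M_\ell, M_{\ell+1})$. Since $y$ is $k$-connected (hence path-connected) in $M_\ell$, is not path-connected in $M_{\ell+1}$, and satisfies $y \neq \bot$ (as $\bot$ is always path-connected in a regular state by Lemma~\ref{Lem4.14}), the step must be an execution by \Sys\ of one of f8, r8, or v6, with $y = \n^{M_\ell}$. Because none of these three instructions touches the \DeletedP\ predicate, we immediately obtain $\DeletedP^{M_\ell}(y) \iff \DeletedP^{M_{\ell+1}}(y)$, which reduces the lemma to showing that $\DeletedP(y)$ is constant for all $j \geq \ell+1$.

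Next, I would iterate Step-property~\ref{Eq5a} to conclude that $y$ remains not path-connected in every $M_j$ with $j \geq \ell+1$. Since $y \neq \bot$, Corollary~\ref{lemma:pdc-rem-or-locked} then yields the key dichotomy in each such $M_j$: either $\RemovedP(y)$ holds, or $\LockedP(y, \Sys)$ holds. Direct inspection of the code shows that $\DeletedP(y)$ can be toggled only by a step in which some working process $p > 0$ executes d2 or i2 with $\nd_p = y$; the \DeletedP-modifying branch of each of these instructions is guarded by $\neg\RemovedP(\nd_p)$, and the control-dependent invariants of Figure~\ref{fig:protocol-elementary-invariants} further require $\LockedP(\nd_p, p)$. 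In the first case of the dichotomy the guard $\neg\RemovedP(y)$ fails, ruling out such a step; in the second case, the mutual-exclusion property of the wait-lock primitive contradicts $\LockedP(y, p)$ for any $p > 0$.

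Thus no step $(M_j, M_{j+1})$ with $j \geq \ell+1$ can change $\DeletedP(y)$, and a trivial induction on $j$ completes the proof. I do not foresee a real obstacle; the argument is essentially bookkeeping. The only delicacy is confirming that both horns of the dichotomy independently block the only two mechanisms (d2 and i2) through which \DeletedP\ could flip, which relies on reading off the enabling conditions of those instructions from the control-dependent invariants and on the implicit mutex semantics of \lock.
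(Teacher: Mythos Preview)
Your proof is correct and follows essentially the same route as the paper's: iterate Step-property~\ref{Eq5a} to keep $y$ disconnected from $M_{\ell+1}$ onward, invoke Corollary~\ref{lemma:pdc-rem-or-locked} for the Removed/Locked-by-\Sys\ dichotomy, and observe that either horn blocks the only \DeletedP-changing steps (d2, i2). The one difference is that you single out the transition $(M_\ell,M_{\ell+1})$ and dispatch it via Step-property~\ref{Cl4.9}, whereas the paper does not invoke~\ref{Cl4.9} at all: its dichotomy at $M_{\ell+1}$ already rules out a d2/i2 step at $(M_\ell,M_{\ell+1})$ (such a step would neither set \RemovedP\ nor grant \Sys\ a lock, so the dichotomy would fail at $M_{\ell+1}$). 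Your treatment is slightly more explicit; the paper's is slightly leaner in its dependencies.
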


\begin{proof}
Since $y$ is not path-connected in $M_{\ell+1}$ and $y\neq \bot$, $y$ remains path-disconnected for all $M_i$ such that $i\geq\ell+1$, by Step-property \ref{Eq5a}. 
It follows from Corollary \ref{lemma:pdc-rem-or-locked} that either $\RemovedP(y)$ or $\LockedP(y, \Sys)$ in $M_i$. 
In either case, no step may change the truth of $\DeletedP(y)$.
\end{proof}

\begin{definition}[Abstract $k$-scanning]
\label{Ekpath}
Let $\oM= (M_i\mid i\in\omega)$ be a history.
For any key value $k\in \omega$, an {\em abstract $k$-scanning} in $\oM$ is a finite sequence of triples  
\[T= (\langle \ell_0, x_0,y_0\rangle, \ldots, \langle\ell_i, x_i,y_{i}\rangle, \ldots, \langle\ell_n,x_n,y_{n}\rangle)\]
such that the following hold. 

\begin{enumerate}
    \item Each $\ell_i\in \omega$ is an index, and 
    the indexes are increasing: $\ell_0< \ell_1< \cdots< \ell_n$. 
     
    \item $x_i$ and $y_i$ are addresses in $M_{\ell_i}$, and $x_i\neq \bot$ and is potentially $k$-connected there.
     
    \item For every $0\leq i<n$,
    \begin{enumerate}
        \item $y_i \neq\bot  \Rightarrow x_{i+1}=y_i$ (a handshake), or\label{item:handshake}
        \item $(y_i=\bot \vee \KeyF(x_i)= k) \Rightarrow x_{i+1}= x_i$ (a traversal stutter).
    \end{enumerate}
    
    \item  For every $0\leq i\leq n$,
    one of the following possibilities holds.
    \begin{enumerate}
        \item $M_{\ell_i} \models x_i\arrow_k y_{i}$ (a $k$-search triple);
        \item $M_{\ell_i}\models \KeyF(x_i)=k  \wedge \RemovedP(x_i)\wedge y_i =\RightF(x_i)$ (a backtracking triple);\label{item:backtrack}
        \item $M_{\ell_i}\models y_i=x_i$ (a delaying triple).
    \end{enumerate}
    
    \end{enumerate}
\end{definition}

The abstract $k$-scanning definition is meant to represent the abstract notion of a $k$-searching traversal, using the standard ``hand over hand'' approach. 
In the case of the CF algorithm, the two ``hands'' of a working process $p>0$ are represented by the local variables $\nd_p$ and $\nxt_p$, 
that match $x_i$ and $y_i$ in the $k$-scanning triples, respectively. 
This is seen most clearly by analysing the diagrams of Figure~\ref{fig:flowcharts}.
Each triple $(\ell,x,y)$, represents a step
$(M_{\ell-1},M_\ell)$ where $x=\nd_p^{M_\ell}$,
and $y= \nxt^{M_\ell}$ is the candidate for the next value of $\nd_p$.
This abstract notion is introduced in order to formulate the minimal assumptions that are nevertheless
sufficiently strong to enable a proof of Theorem \ref{T4.2} which is the main tool in the linearizability proof of the
CF algorithm.


\begin{theorem}[The Scanning Theorem]
\label{T4.2}
Let $\oM= (M_i\mid i\in\omega)$ be a history,
and let $T= (\langle \ell_0,x_0,y_0\rangle,\ldots,\langle \ell_n, x_n, y_n\rangle)$ 
be an abstract $k$-scanning in $\oM$. 
Suppose that $y_0\neq \bot$ is $k$-connected in $M_{\ell_0}$. 
Then, for some index $j$ such that $\ell_0\leq j \leq \ell_n$, $y_{n}$ is $k$-connected in $M_j$,
and \linebreak{}$\DeletedP(y_{n})^{M_j} \iff \DeletedP(y_{n})^{M_{\ell_n}}$.
\end{theorem}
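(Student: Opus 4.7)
The plan is to proceed by induction on $n$. The base case $n=0$ is immediate from the hypothesis: take $j=\ell_0$. For the inductive step, apply the inductive hypothesis to the prefix $T'=(\langle\ell_0,x_0,y_0\rangle,\ldots,\langle\ell_{n-1},x_{n-1},y_{n-1}\rangle)$, which is itself an abstract $k$-scanning satisfying the same base hypothesis. This yields $j_0$ with $\ell_0\leq j_0\leq\ell_{n-1}$ such that $y_{n-1}$ is $k$-connected in $M_{j_0}$ and has matching deletion status at $M_{\ell_{n-1}}$. The task is to extend this to produce a witness $j\leq\ell_n$ for $y_n$.

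I would next translate the inductive hypothesis into information about $x_n$ using the handshake/stutter relation. In the handshake subcase $x_n=y_{n-1}$, so $x_n$ is outright $k$-connected in $M_{j_0}$. In the stutter subcase $x_n=x_{n-1}$, and one uses that $x_{n-1}$ is potentially $k$-connected in $M_{\ell_{n-1}}$ by the scanning definition, propagating this to $M_{\ell_n}$ via Step-property~\ref{sp:pdc-remains-pkc} whenever $x_n$ is not path-connected (and via Lemma~\ref{T4.4} when it is). The proof then splits according to the type of the $n$-th triple.

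For a $k$-search triple ($x_n\arrow_k y_n$ at $M_{\ell_n}$): if $x_n$ remains path-connected through $M_{\ell_n}$, Lemma~\ref{T4.4} lifts its $k$-connectivity forward, so $y_n$ is $k$-connected in $M_{\ell_n}$ and we take $j=\ell_n$. Otherwise there is a first step $(M_{\ell'-1},M_{\ell'})$ at which $x_n$ loses path-connectivity, which by Step-property~\ref{Cl4.9} is an execution of f8, r8, or v6 with $x_n=\n$; Step-property~\ref{sp:pkc-after-disconnect} then identifies the appropriate child of $x_n$ as $k$-connected at $M_{\ell'}$, and Step-property~\ref{sp:dc-chd-leads-to-chd} together with the control-dependent invariants track this through the subsequent child swaps to show it agrees with $y_n$. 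For a backtracking triple ($\KeyF(x_n)=k$, $\RemovedP(x_n)$, $y_n=\RightF(x_n)$ at $M_{\ell_n}$): Corollary~\ref{cor:rem-not-pc} says $x_n$ is already disconnected, and the additional clause of Step-property~\ref{sp:pkc-after-disconnect} for $k=\KeyF(x)$ locates $y_n=\RightF(x_n)$ as $k$-connected at the state immediately after the step that $k$-disconnected $x_n$. The delaying case ($y_n=x_n$) reduces directly to finding a state where $x_n$ itself was $k$-connected, which the cases above already provide.

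Finally, the deletion-preservation clause follows from Lemma~\ref{Lem4.4}: once a witness index $j$ where $y_n$ is $k$-connected is identified, if $y_n$ becomes path-disconnected at some index in $[j,\ell_n]$ then its deletion status freezes thereafter; if it remains path-connected throughout, one argues directly that its deletion bit cannot flip without a locking process, which our construction precludes. The main obstacle is the $k$-search case with $x_n$ disconnected mid-scan: between the disconnection time $\ell'$ and $\ell_n$, the children of $x_n$ (still being $\n$ during an unfolding rotation) can be modified further, so the identity of the ``$k$-side child'' that is $y_n$ in $M_{\ell_n}$ must be traced back to the $k$-connected child identified at $M_{\ell'}$; this is the step for which the combination of Step-properties~\ref{sp:pkc-after-disconnect} and~\ref{sp:dc-chd-leads-to-chd} with the focused-child invariants (Invariant~\ref{inv:rem-chld-not-bot} and Corollary~\ref{cor:constantly-focused}) is indispensable.
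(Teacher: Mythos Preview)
Your backward-induction strategy (handle the prefix inductively, then extend by one triple) is dual to the paper's forward approach and can be made to work, but your treatment of the \emph{stutter} case has a genuine gap.

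When $x_n = x_{n-1}$, the inductive hypothesis you invoke yields that $y_{n-1}$ is $k$-connected at some $j_0$, not that $x_n$ is. In the sub-case $y_{n-1} = \bot$ this only says that $\bot$ is $k$-connected somewhere in the past, which tells you nothing about $x_n$. You propose to fall back on the scanning definition, obtaining that $x_{n-1}$ is \emph{potentially} $k$-connected at $M_{\ell_{n-1}}$, and to propagate this via Step-property~\ref{sp:pdc-remains-pkc}. But potential $k$-connectivity is strictly weaker than $k$-connectivity: in the PT2 and PT3 cases it does not imply that the particular $k$-side child $y_n$ is $k$-connected at any time, so the subsequent appeal to Lemma~\ref{T4.4} (``lifts its $k$-connectivity forward'') has no actual $k$-connectivity to lift. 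The same defect appears in the stutter sub-case with $\KeyF(x_{n-1}) = k$ and a backtracking triple at $n-1$, where $y_{n-1} = \RightF(x_{n-1}) \neq x_n$.

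The fix is to trace the stutter chain $x_n = x_{n-1} = \cdots = x_m$ back to the first handshake $x_m = y_{m-1}$ (which exists since $y_0 \neq \bot$ forces $x_1 = y_0$) and apply the inductive hypothesis to the prefix ending at index $m-1$; that yields genuine $k$-connectivity of $x_n$ at some index $\leq \ell_{m-1}$, after which your case analysis on the $n$-th triple goes through essentially as in the paper's $n=1$ argument. The paper sidesteps this bookkeeping by inducting forward: it proves the two-triple case directly, and for $n>1$ either deletes a $\bot$-valued middle triple or applies the two-triple case to get $y_1$ $k$-connected, rebuilding a shorter scan from there. One smaller point: in your deletion-preservation paragraph, the claim that ``our construction precludes'' a flip of $\DeletedP(y_n)$ when $y_n$ remains path-connected is unsupported; the correct move, as in the paper, is simply to choose $j$ maximal among the witnesses, so that either $j = \ell_n$ or Lemma~\ref{Lem4.4} applies from $j+1$ onward.
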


\begin{proof}
The proof is by induction on $n\geq 1$, and for any fixed $n>1$ the proof is by induction on
$\ell_n-\ell_0$. We start with the case $n=1$. 
So $T=(\langle \ell_0,x_0,y_0\rangle, \langle \ell_1,x_1,y_1\rangle)$, $\ell_0<\ell_1$ and 
$y_0\neq \bot$ is assumed to be $k$-connected in $M_{\ell_0}$. 
$y_0\neq\bot$ implies (by item \ref{item:handshake} of Definition \ref{Ekpath}) that $x_1=y_0$ and so:

\begin{enumerate}[label=\Alph*.]
    \item $M_{\ell_0} \models \Root\arrow^*_k\, y_0 = x_1$.
    \item $M_{\ell_1} \models  y_1=x_1 \vee x_1\arrow_k y_1 \vee (\KeyF(x_1)=k\wedge \RemovedP(x_1)\wedge y_1=\RightF(x_1))$.\label{item-b}
\end{enumerate} 

We have to prove that for some index $j$ such that $\ell_0\leq j \leq \ell_1$, $y_{1}$ is $k$-connected in $M_j$, 
and $y_1$ is deleted in $M_{j}$ if and only if $y_1$ is deleted in $M_{\ell_1}$. 

If $M_{\ell_1}\models\Root\arrow_k^* y_1$, the claim holds for $j=\ell_1$. So assume that 
\begin{equation}
    \label{eqn:y1-not-kc}
    M_{\ell_1}\models\Root\notarrow_k^* y_1.
\end{equation}
It suffices to find an index $\ell$ such that 
\begin{equation}
    \label{eq:sufficient}
    (\ell_0\leq \ell \leq \ell_1 \wedge \Root\arrow_k  y_1)^{M_\ell},
\end{equation} 
because in that case we pick such $\ell$ that is maximal, which entails that $\ell<\ell_1$, $y_1$ is $k$-connected in $M_\ell$ but is not 
$k$-connected in $M_{\ell+1}$, and hence $y_1$ is not path-connected in $M_{\ell+1}$ (by Step-Property \ref{Cm2}). 
Then  by Lemma \ref{Lem4.4}, for every $m\geq \ell$, $\DeletedP(y_1)^{M_\ell} \iff \DeletedP(y_1)^{M_m}$, and $\ell$ is the required index.

We now check each of the possibilities of item \ref{item-b} above, and find in each case an index
$\ell$ as in (\ref{eq:sufficient}):
\begin{description}
    \item[${M_{\ell_1}} \models y_1=x_1$:] In this case, $y_1$ is $k$-connected in 
	$M_{\ell_0}$, since $y_1=x_1=y_0$, and $y_0$ is $k$-connected in 
	$M_{\ell_0}$. Then the condition at (\ref{eq:sufficient}) holds and
	the required index is $\ell=\ell_0$.
    
    \item[$M_{\ell_1}\models x_1\arrow_k y_1$:] In this case, $x_1\neq y_1$, and $\KeyF(x_1)\neq k$ (by the definition of $\arrow_k$). This entails that 
	$M_{\ell_1}\models\Root\notarrow_k^* x_1$ (otherwise  $M_{\ell_1}\models\Root\arrow_k^* x_1\arrow_k y_1$ would contradict (\ref{eqn:y1-not-kc})). Taking into account Item A above, let $r\geq \ell_0$ be the maximal
	index such that $r\leq \ell_1$ and $M_r\models\Root\arrow_k^* x_1$. Then $r<\ell_1$ (since $M_{\ell_1}
	\models \Root\notarrow_k^* x_1$) and   
	\begin{equation}
		\label{Eq24}
		\mbox{$\Root\notarrow_k^* x_1$ at $M_{r+1}$ and all
		subsequent indexes until $\ell_1$.} 
    \end{equation}
    If $M_r\models x_1\arrow_k y_1$, then  $M_r\models\Root\arrow_k^* x_1\arrow_k y_1$ shows that (\ref{eq:sufficient}) holds for $\ell=r$ entailing that the required index can be found.
    
    Hence we may assume that $M_r\models x_1\notarrow_k y_1$. Let
	$s\in\{r+1,\ldots,\ell_1\}$ be the last index such that $M_s\models x_1\notarrow_k y_1$. 
	Then $s<\ell_1$ and $M_{s+1}\models x_1\arrow_k y_1$. 
	Thus step $(M_s,M_{s+1})$ changes the left or the right child of $x_1$ and this indicates that
	\begin{equation}
    	\label{Eq25}
    	\mbox{$x_1$ is not removed at $M_s$ and at $M_{s+1}$,}
	\end{equation}
	(by Step-property \ref{sp:rem-children}).
	By applying Step-property \ref{sp:pkc-after-disconnect} to the step $(M_r, M_{r+1})$
	we deduce that $x_1$ is potentially $k$-connected at $M_{r+1}$. Then,
	applying Step-property \ref{sp:pdc-remains-pkc} to all the steps from $M_{r+1}$ to 
	$M_{s+1}$, 
	we conclude that $x_1$ is potentially $k$-connected in $M_{s+1}$.
	Equations (\ref{Eq24}) and (\ref{Eq25}), imply that PT2($x_1,k)$ at $M_{s+1}$. 
	By Step-property \ref{sp:dc-chd-leads-to-chd} we have that $M_{s+1}\models\Root\arrow^*_k y_1$ must hold, 
	and so $y_1$ is $k$-connected at $M_{s+1}$.

    \item[$M_{\ell_1}\models\KeyF(x_1)=k\wedge \RemovedP(x_1)\wedge y_1=\RightF(x_1))$:] In this case, 
	$x_1\neq \Root$ (as $k\in \omega$), and $x_1\neq y_1$ (as $y_1=\RightF(x_1)$; see Invariant \ref{inv:address-points-to-root}).
	Since $M_{\ell_1}\models\RemovedP(x_1)$, by Corollary \ref{cor:rem-not-pc}, we have that $M_{\ell_1}\models\Root\notarrow^* x_1$.
	Thus, as $x_1$ is $k$-connected at $M_{\ell_0}$, there is a maximal index $r\in\{\ell_0,\ldots,\ell_1\}$ 
	such that $M_r\models\Root\arrow_k^* x_1$. Then $r<\ell_1$, and $x_1$ is not $k$-connected (and hence
	not connected) at $M_{r+1}$
	and at every subsequent structure until $M_{\ell_1}$ (see Lemma \ref{T4.4}).

	Step-property \ref{sp:pkc-after-disconnect} can be applied to step $(M_r,M_{r+1})$ to conclude that
	\begin{equation}
	    \mbox{$x_1$ is potentially $k$-connected at $M_{r+1}$.}
	\end{equation}
	Moreover, since $\KeyF(x_1)=k$, 
	if $x_1\arrow y_1$ at $M_r$, then $\Root\arrow_k^* y$ at $M_{r+1}$.     
	In this case an index satisfying (\ref{eq:sufficient}) is found.
	
	Thus we may assume that $M_r\models x_1\notarrow y_1$. Let then  $s\in\{r+1,\ldots,\ell_1\}$ be the maximal index such that $M_s\models x_1\notarrow y_1$. Then $s<\ell_1$ (since 
	$M_{\ell_1}\models y_1=\RightF(x_1)$), 
	but $M_{s+1}\models x_1\arrow y_1$. 
	
	Applying Step-property \ref{sp:pdc-remains-pkc} to all the steps from step $(M_{r+1},M_{r+2})$ to step 
	$(M_s,M_{s+1})$, we have that $x_1$ is potentially $k$-connected at states $M_{r+1}$ to $M_{s+1}$. 
	We know that $x_1$ cannot be removed in $M_s$ or in $M_{s+1}$, since one of its children changed in step 
	$(M_s, M_{s+1})$.
    Since $x_1$ is potentially $k$-connected, not $k$-connected and not removed in $M_{s}$, it must be that PT2$(x_1,k)$ holds in $M_{s}$. 
    By Step-property \ref{sp:dc-chd-leads-to-chd} we have that $M_{s+1}\models\Root\arrow^*_k y_1$ must hold. Since $M_{\ell_1}\models\Root\notarrow_k^* y_1$, there must be an index $t\in\{s+2,\ldots,\ell_1-1\}$ such that $M_t\models\Root\arrow_k^* y_1$ but $M_{t+1}\models\Root\notarrow_k^* y_1$, and again we are in a situation in which an index $\ell=t$ that satisfies (\ref{eq:sufficient}) is found.
\end{description}

This concludes the proof of the base case of the induction.

Now suppose that $n>1$. In case $y_1=\bot$, we have $x_2=x_1=y_0$, and then, by removing the triple
$(\ell_1,x_1,y_1)$ from the abstract scan $T$, a shorter abstract scan is obtained
to which the inductive hypothesis applies. So we may assume now that $y_1\neq \bot$.
Recall that $y_0$ is assumed to be $k$-connected in $M_{\ell_0}$.

The case $n=1$ of the theorem applies to the $k$-scan 
$\left(\langle \ell_0,x_0,y_0\rangle, \langle \ell_1,x_1,y_1\rangle\right)$, and so there is an
 index $\ell\in\{\ell_0,\ldots,\ell_1\}$ such that $y_1$ is $k$-connected in $M_\ell$.
 Let $p$ be the node on the $k$-path before $y_1$ in $M_{\ell}$ (exists since
$y_1\neq\Root$). Then
apply the inductive assumption to the shorter path 
$T' = (\langle \ell, p,y_1\rangle, \langle  \ell_2, x_2,y_2\rangle,\ldots, \langle \ell_n x_n,y_{n}\rangle)$ 
and get an index $\ell'$ such that $\ell\leq \ell'\leq \ell_n$ and $y_{n}$ is $k$-connected in $M_{\ell'}$, and
$\DeletedP(y_{n})^{M_j} \iff \DeletedP(y_{n})^{M_{\ell_n}}$.
\end{proof}


\section{Linearizability of the Contention-Friendly Algorithm}
\label{Sec6}
We are ready to prove the linearizability of the CF algorithm. 
Let $\oM$ be an arbitrary history sequence of the algorithm.
For any index $i\in\omega$, let $\Set(M_i)\subset \KeyT$ be the set of key values
represented by state $M_i$ of the history (the key values $\infty$ and $-\infty$ are not
members of $Set(M_i)$; see Definition \ref{DefSet}). 

We say that step $(M,N)$ is {\em set-preserving} if $\Set(M)=\Set(N)$, i.e. the step did not change
the \Set\ value. An operation is considered to be set-preserving if all its steps are set-preserving.

Let $E$ be a terminating data operation execution by process
$p>0$ in history \oM\ ($E$ is one of $\bfcontains(k)$, $\bfdelete(k)$ and $\bfinsert(k)$).
$E$ has a unique returning boolean value $\returnVal(E)$. 
We say that $E$ is {\em successful} if and only if $\returnVal(E)=\TRUE$.
Let $i_0=\mathit{inv}(E)$ and $r=\mathit{res}(E)$ be the history
indexes such that $s_{i_0+1} =(M_{i_0}, M_{i_0+1}) = \Begin(E)\in\Step(p,\rom{m0},f)$ (where $f$ points to the first instruction of $E$) is the invocation of $E$, 
and $s_{r+1}=(M_r,M_{r+1})=\End(E)\in \Step(p,rt,\rom{m0})$ (where $rt$ is a \return\ instruction) is the response of $E$\footnote{We denote step $(M_{i-1},M_i)$ by $s_i$ (rather than by $s_{i-1}$) so that $S_i$ is the consequence of step $s_i$.}.
The sequence of states $(M_{i_0},\ldots,M_r)$ is said to be the {\em extension interval} of $E$, and the steps $(M_i,M_{i+1})$ for $i_0\leq i\leq r$ are the steps
of that extension. So the invocation $\Begin(E)$ and the response $\End(E)$ are the first and last steps of
this extension of $E$; these are steps by $p$ but other steps 
in this extension can be by other processes. We identify
$E$ with the set of all steps by $p$ that are in the extension of $E$, so
that $\Begin(E)$ and $\End(E)$ are the first and last steps of $E$. If we remove the last (returning) step from the terminating
operation execution $E$, then the resulting set of steps form the {\em search part} of
$E$, denoted $\search(E)$. The search part is thus, the set of steps $s_i$ by $p$ such that $i_0<i\leq r$. 
If we enumerate the set of the steps in $\search(E)$ in increasing order $\ell_0,\ldots,\ell_n$, then
the scanning steps of $E$ are the steps
$s_{\ell_0}=(M_{\ell_0-1}, M_{\ell_0}),\ldots, s_{\ell_n}=(M_{\ell_n-1},M_{\ell_n})$, and
$(M_{r},M_{r+1})$ is the return step of $E$.

Our aim is to define for every terminating operation execution $E$
 by process $p>0$ an index $\mathit{inv}(E)\leq \ell(E)\leq \mathit{res}(E)$
which has the following `return' properties Ret1--Ret4.
$\ell$ is called the {\em linearization index} of $E$, and $(M_{\ell}, M_{\ell+1})$ its {\em linearization point}.

\begin{enumerate}
\item[Ret1:] If $E$ is a $\bfcontains(k_p)$ execution, then $E$ is set-preserving, and $\returnVal(E)=\TRUE$ if and only if $(k_p\in \Set(M_{\ell(E)}))$.

\item[Ret2:] If $E$ is a $\bfdelete(k_p)$ execution then one 
of the following possibilities holds.
\begin{enumerate}
\item $\returnVal(E)=\TRUE$, and $k_p\in\Set(M_{\ell(E)})$ but $\Set(M_{{\ell(E)}+1}) = \Set(M_{\ell(E)})\setminus \{k_p\}$.
Any step in $E$ other than $(M_{\ell(E)},M_{{\ell(E)}+1})$ is set-preserving.
\item $\returnVal(E)=\FALSE$, $E$ is set-preserving and $k_p\notin \Set(M_{\ell(E)})$.
\end{enumerate}

\item[Ret3:] If $E$ is a $\bfinsert(k)$ execution then one of the 
following possibilities holds.
\begin{enumerate}
\item $\returnVal(E)=\TRUE$, and $k_p\not\in \Set(M_{\ell(E)})$ but $\Set(M_{{\ell(E)}+1}) = \Set(M_{\ell(E)}) \cup \{k_p\}$. 
Any step in $E$ other than $(M_{\ell(E)},M_{{\ell(E)}+1})$ is set-preserving.
\item $\returnVal(E)=\FALSE$, $E$ is set-preserving and $k_p\in \Set(M_{\ell(E)})$.
\end{enumerate}

\item[Ret4:] If $(M_i,M_{i+1})$ is any step by the \Sys\ process, then 
$\Set(M_i)=\Set(M_{i+1})$.
That is, \Sys\ steps are set-preserving.
\end{enumerate}

Before we prove these four return properties, we prove Theorem~\ref{thm:operations-are-kscans} below, 
which establishes that any terminating operation execution by a working process $p>0$
induces an abstract scanning sequence (Definition \ref{Ekpath}).
Then, the Scanning Theorem \ref{T4.2} applies, producing the linearization point $\ell(E)$.

\begin{theorem}
\label{thm:operations-are-kscans}
Let $\oM$ be a history of the CF algorithm, let $p>0$ be a working process, let $E$ be a terminating data operation executed by $p$ in $\oM$, and let $k$ be the value of $k_p$ during the execution of $E$. Then the scanning  steps $s_i$ of $p$ in the interval $\left[\Begin(E),\End(E)\right)$ induce an abstract $k$-scan.
\end{theorem}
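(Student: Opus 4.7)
My plan is to associate with each scanning step $s_{\ell_i}$ of $E$ the triple $\langle \ell_i, \nd_p^{M_{\ell_i}}, \nxt_p^{M_{\ell_i}} \rangle$, and then verify the conditions of Definition~\ref{Ekpath}. The indices $\ell_0 < \ell_1 < \cdots$ are strictly increasing by construction. The requirement $x_i \neq \bot$ follows from the control-dependent invariants of Figure~\ref{fig:protocol-elementary-invariants}, which guarantee $\nd_p \neq \bot$ at every control location a scanning step can land in (c1, c2, d1, d2, i1, i2, i3). That $x_i$ is potentially $k$-connected in $M_{\ell_i}$ is precisely clause R1 of regularity, which is an invariant by Theorem~\ref{InvReg}.

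The bulk of the proof is a case analysis over the scanning-step transitions, most easily read off from the flowcharts of Figure~\ref{fig:flowcharts}. The invocation step (from m0 to c1, d1, or i1) sets $\nd_p = \nxt_p = \Root$ and produces a delaying triple with $x_0 = y_0 = \Root$. A traversal step (c1-to-c1, d1-to-d1, or i1-to-i1 with $\KeyF(\nxt_p) \neq k$) performs $\nd := \nxt$ and then $\nxt := \LR(\nd, k<\KeyF(\nd))$, giving a $k$-search triple $x_i \arrow_k y_i$. A critical-section entry (to c2, d2, or i2, taken precisely when $\KeyF(\nxt_p) = k$) performs only $\nd := \nxt$, so $y_i = x_i$ and the triple is delaying. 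A backtracking step from a critical section (d2-to-d1 or i2-to-i1 when $\RemovedP(\nd_p)$) sets $\nxt := \nd.\Right$ while leaving $\nd_p$ fixed; the control-dependent invariants at d2/i2 then supply $\KeyF(x_i) = k$ and $\RemovedP(x_i)$, yielding a backtracking triple. The i3-backtracking step (from i3 to i1 when $\LR(\nd_p, k<\KeyF(\nd_p)) \neq \bot$) updates only $\nxt_p$; since this step does not mutate the children of $\nd_p$ in shared memory, the new value of $\nxt_p$ equals $\LR(x_i, k<\KeyF(x_i))^{M_{\ell_i}}$, producing a $k$-search triple.

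Finally, the handshake conditions follow from the locality of $\nd_p$ and $\nxt_p$: between consecutive scanning steps of $p$, no other process's step can change these variables, so $\nd_p^{M_{\ell_{i+1}-1}} = x_i$ and $\nxt_p^{M_{\ell_{i+1}-1}} = y_i$. Inspecting each transition then gives $x_{i+1} = y_i$ in every case where $s_{\ell_{i+1}}$ executes $\nd := \nxt$ (which requires $y_i \neq \bot$), and $x_{i+1} = x_i$ in the remaining cases, matching the handshake/stutter dichotomy of Definition~\ref{Ekpath}. The main obstacle I anticipate is the trivial i1-to-i3 transition, which modifies neither $\nd_p$ nor $\nxt_p$ and thus produces a triple with $y_i = \bot$ and $x_i \neq \bot$ that need not match any of the three triple types (because $\LR(x_i, k<\KeyF(x_i))$ may have been restored to a non-$\bot$ value by steps of other processes in the meantime). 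The cleanest fix is to omit such transitions from the induced scan $T$: the preceding triple necessarily has $y_{i-1} = \bot$ (which is precisely what triggers the jump to i3), so its stutter condition transfers correctly to the next actual triple because $\nd_p$ is preserved across the omitted step.
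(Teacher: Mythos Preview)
Your proposal is correct and follows essentially the same approach as the paper: a case analysis over the transition types of the data operations, using the control-dependent invariants and regularity (clause R1) to discharge the side conditions of Definition~\ref{Ekpath}, and handling the problematic i1-to-i3 transition by omitting it from the induced scan. The paper organizes the argument as three separate lemmas (one per operation), whereas you treat the cases uniformly; you are also slightly more explicit than the paper in invoking R1 for the potential $k$-connectedness requirement in item~2, which the paper leaves implicit.
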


Before we delve into the details of the proof, we present an intuitive overview of the intent of this theorem.

Recall the an abstract $k$-scanning (Definition \ref{Ekpath}) is meant to be an abstract representation of the traversal of a working process $p>0$ through the virtual graph. 
In most cases, the steps of the data operations of the CF algorithm naturally induce an abstract $k$-scanning, even the backtracking steps. 
For these steps, the proof that a valid abstract $k$-scanning is induced is a simple case-analysis process.
There is, however, one case that does not match the abstract $k$-scanning definition. This case arises when two concurrent insertion operations contend for the same insertion location:
Consider the case in which process $p$ is attempting to insert value $k$, and does not find a logically-deleted node with value $k$ to ``un-delete''. This means that a new node must be inserted as a leaf of the $k$-connected node referenced by $\nd_p$. This case is indicated by $\nxt_p$ referencing the $\bot$ node when $\ControlF(p)=\rom{i1}$. Before $p$ has a chance to ``discover'' that this is the case, and to ``decide'' to execute instruction i3, some other working process $q$ preempts $p$, and inserts a new node of its own in the same spot where $p$ wanted to insert $k$ (as a $k$-child of $\nd_p$). This creates a situation in which $\KeyF(\nd_p)\neq k_p$ but $\nd_p\notarrow_k\nxt_p=\bot$ (since $q$ inserted a node in between $\nd_p$ and $\bot$). The abstract $k$-scanning definition does not cover this situation, which may arise in some steps of the form $\Step(p,\rom{i1},\rom{i3})$.

The solution to this discrepancy arises from the fact that $\nd_p\arrow_k\nxt_p=\bot$ was true in a previous step of $p$ (the step in which $\bot$ was assigned to $\nxt_p$, before $q$ preempted $p$), yielding a valid $k$-scanning triple. Also, due to the preemption of $p$ by $q$, when executing i3, $p$ will go back to i1 to continue trying to find a spot to insert $k$. In this case, $\nxt_p$ will be re-defined to the current $k$-child of $\nd_p$, thus, inducing a valid $k$-scanning triple once again. Thus, in the case of a $k$-scan induced by \bfinsert, steps $\Step(p,\rom{i1},\rom{i3})$ are not included.

We now turn the proof itself. Recall that in the case of the CF algorithm, the $x_i$ component of a $k$-scan triple represents $\nd_p$, and $y_i$ represents $\nxt_p$. We consider each of the forms of non-returning steps of $p$ in $E$, and show that each step $s_{\ell_i}=(M_{{\ell_i}-1}, M_{\ell_i})$ induces a valid $k$-scan triple $(\ell_i,x_i,y_i)$ as defined in item 4 of Definition \ref{Ekpath}. Additionally, we show that every pair of these triples has a valid relationship, as defined in item 3 of Definition \ref{Ekpath}.

We prove the claim individually for each of the data operations. We recommend that the reader commences with the diagrams of Figure~\ref{fig:flowcharts} in hand.

\begin{lemma}
\label{lemma:contains-are-kscans}
Theorem \ref{thm:operations-are-kscans} holds if $E$ is a \bfcontains\ operation.
\end{lemma}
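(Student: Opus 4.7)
The plan is to perform a direct case analysis on the kinds of scanning steps that can appear in a \bfcontains$(k)$ execution, verifying for each step that the induced triple
\[(\ell_i, x_i, y_i) := (\ell_i, \nd_p^{M_{\ell_i}}, \nxt_p^{M_{\ell_i}})\]
meets items (2) and (4) of Definition \ref{Ekpath}, and that consecutive triples respect the handshake/stutter condition of item (3). Because \bfcontains\ has no backtracking or retry mechanism, the argument is essentially a mechanical walk through the flowchart of Figure \ref{FIGflowchart1}; the novelty is only in matching each step to the right clause of the abstract definition.

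First I would discharge item (2) uniformly: by the control-dependent invariants c1--2 of Figure \ref{fig:protocol-elementary-invariants}, $\nd_p \neq \bot$ at every c1 and c2 state, and by R1 of regularity (an inductive invariant by Theorem \ref{InvReg}), $\nd_p$ is potentially $k_p$-connected; since $k_p$ is assigned the value $k$ at $\Begin(E)$ and is never modified thereafter, we conclude that each $x_i$ is a non-$\bot$ address of $M_{\ell_i}$ that is potentially $k$-connected.

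Next, for item (4), I would enumerate the three step types that can appear in $\search(E)$:
\begin{enumerate}
\item $\Begin(E) \in \Step(p,\rom{m0, c1})$: the master program sets $\nd_p = \nxt_p = \Root$, so $x_0 = y_0 = \Root$, a delaying triple.
\item $s \in \Step(p,\rom{c1, c1})$: the atomic instruction tests $\nxt_p \neq \bot$, executes $\nd_p := \nxt_p$, and then, since the \goto\ c2 branch was not taken, $\KeyF(\nd_p) \neq k$ and $\nxt_p := \LR(\nd_p, k < \KeyF(\nd_p))$. Thus $M_{\ell_i} \models x_i \arrow_k y_i$, a $k$-search triple.
\item $s \in \Step(p,\rom{c1, c2})$: the step assigns $\nd_p := \nxt_p$ and takes the \goto\ c2 branch without updating $\nxt_p$, so $y_i = x_i$ and $\KeyF(x_i) = k$; a delaying triple.
\end{enumerate}

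Finally I would verify item (3) by inspection of the code: every scanning step after $\Begin(E)$ begins by executing $\nd_p := \nxt_p$, so the post-step value $x_i$ equals the pre-step value of $\nxt_p$, which is precisely $y_{i-1}$; hence the handshake $x_i = y_{i-1}$ holds whenever $y_{i-1} \neq \bot$. The alternative preconditions ($y_{i-1} = \bot$ in a c1--c1 triple, or $\KeyF(x_{i-1}) = k$ in a c1--c2 triple) can only occur at the last scanning triple and are immediately followed by the return step $\End(E)$, so the stutter clause $x_i = x_{i-1}$ never needs to be witnessed by a subsequent triple. There is no substantive mathematical obstacle here; the only care required is in treating $\Begin(E)$ as the initial delaying triple and in confirming that the three step types above exhaust the possibilities for $\search(E)$.
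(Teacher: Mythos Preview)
Your proposal is correct and follows essentially the same three-case analysis as the paper's proof (the m0--c1, c1--c1, and c1--c2 steps, classified respectively as delaying, $k$-search, and delaying triples). You are in fact slightly more thorough than the paper, which does not explicitly discharge item~(2) via R1 and the control-dependent invariants; the only minor slip is the phrase ``$y_{i-1}=\bot$ in a c1--c1 triple'' in your discussion of item~(3), where you presumably meant $y_i=\bot$ \emph{after} a c1--c1 step, but your overall conclusion there is right.
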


\begin{proof}
\begin{description}
    \item[$s_i\in\Step(p,\rom{m0},\rom{c1})$:] In this case, $i=\ell_0$ by the definition of an operation. As illustrated in Figure \ref{FIGflowchart1}, $x_0=y_0 =\Root$ in $M_{\ell_0}$, and so $s_i$ is a delaying triple.
    \item[$s_i\in\Step(p,\rom{c1},\rom{c1})$:] As illustrated in Figure \ref{FIGflowchart1}, $y_{i-1}\neq\bot$, and so $x_i=y_{i-1}$ (a handshake), and $\KeyF(x_i)\neq k$, and so $(x_i\arrow_k y_i)^{M_{\ell_i}}$ (k-search triple).
    \item[$s_i\in\Step(p,\rom{c1},\rom{c2})$:] As illustrated in Figure \ref{FIGflowchart1}, $y_{i-1}\neq\bot$, and so $x_i=y_{i-1}$ (a handshake), and $\KeyF(x_i)=k$, and so $y_i=y_{i-1}$ (a delaying triple).
\end{description}

Thus the sequence of triples induced by the steps of $p$ in $E$ is an abstract $k$-scan as defined in \ref{Ekpath}.
\end{proof}

\begin{lemma}
\label{lemma:deletes-are-kscans}
Theorem \ref{thm:operations-are-kscans} holds if $E$ is a \bfdelete\ operation.
\end{lemma}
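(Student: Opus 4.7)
The plan is to follow the case-analysis structure of Lemma~\ref{lemma:contains-are-kscans} for contains, enumerating the step types that appear in the search part of $E$ and verifying for each that it produces a triple of one of the three forms in item~4 of Definition~\ref{Ekpath}, and that consecutive triples satisfy the handshake/stutter relation of item~3. Four step types appear in the search part of a delete (the two returning kinds $\Step(p,\rom{d1},\rom{m0})$ and $\Step(p,\rom{d2},\rom{m0})$ are excluded): the initializer $\Step(p,\rom{m0},\rom{d1})$, the loop step $\Step(p,\rom{d1},\rom{d1})$, the entry to the critical section $\Step(p,\rom{d1},\rom{d2})$, and the backtracking step $\Step(p,\rom{d2},\rom{d1})$. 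Throughout, the potential $k$-connectivity of $x_i$ at $M_{\ell_i}$ --- the one triple-level obligation that is not about the step's shape --- is delivered for free by clause R1 of regularity (Definition~\ref{DefReg}), available at every state by Theorem~\ref{InvReg}.

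Three of the four cases are nearly textual copies of the contains argument. The initializer gives $x_0=y_0=\Root$, a delaying triple. A d1\ensuremath{\to}d1 step performs $\nd:=\nxt$ and then $\nxt:=\LR(\nd,k<\nd.\key)$ under $\KeyF(x_i)\neq k$, yielding a $k$-search triple with handshake $x_i=y_{i-1}$. A d1\ensuremath{\to}d2 step performs only the first assignment $\nd:=\nxt$ under $\KeyF(x_i)=k$, leaving $y_i$ unchanged so that $y_i=y_{i-1}=x_i$, a delaying triple with handshake $x_i=y_{i-1}$. I expect no surprises in any of these.

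The genuinely new case, and the place where I expect any real argument to live, is the backtracking step $\Step(p,\rom{d2},\rom{d1})$. Here $\nd_p$ is not modified by the step, so $x_i=x_{i-1}$; the d2 control-dependent invariant of Figure~\ref{fig:protocol-elementary-invariants} together with key immutability (Step-property~\ref{sp:immutable-keys}) gives $\KeyF(x_i)=k$; the branch was taken under the guard $\nd.\rem$ in $M_{\ell_i-1}$, and monotonicity of $\RemovedP$ (Step-property~\ref{InvRem}) carries this forward to $\RemovedP(x_i)$ in $M_{\ell_i}$; and the assignment $\nxt:=\nd.\Right$ sets $y_i=\RightF(x_i)$. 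Together these witness the backtracking form of item~4(b) of Definition~\ref{Ekpath}. Invariant~\ref{inv:rem-chld-not-bot} additionally gives $y_i\neq\bot$, which I need so that the following d1 step enters the loop and produces a further scanning triple (via handshake $x_{i+1}=y_i$) rather than returning. For the transition from the preceding d1\ensuremath{\to}d2 delaying triple (where $y_{i-1}=x_{i-1}$) into this backtracking triple, the handshake clause applies since $y_{i-1}\neq\bot$ and $x_i=x_{i-1}=y_{i-1}$. After these checks the lemma reduces to mechanical bookkeeping against the invariants assembled in Section~\ref{sec:invariants-and-props}.
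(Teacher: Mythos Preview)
Your proposal is correct and follows essentially the same case-analysis structure as the paper's proof, covering the same four step types with the same triple classifications. You are slightly more explicit about the invariants invoked (e.g., Step-properties~\ref{sp:immutable-keys} and~\ref{InvRem} for the backtracking case, and the potential-connectivity obligation via R1), and you label the transition into the backtracking triple as a handshake rather than a traversal stutter---both labels are valid here since $y_{i-1}=x_{i-1}\neq\bot$ and $\KeyF(x_{i-1})=k$, so either clause of item~3 applies.
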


\begin{proof}
\begin{description}
    \item[$s_i\in\Step(p,\rom{m0},\rom{d1})$:] In this case, $i=\ell_0$ by the definition of an operation. As illustrated in Figure \ref{FIGflowchart2}, $x_0=y_0 =\Root$ and so $s_i$ is a delaying triple.
    \item[$s_i\in\Step(p,\rom{d1},\rom{d1})$:] As illustrated in Figure \ref{FIGflowchart2}, $y_{i-1}\neq\bot$, and so $x_i=y_{i-1}$ (a handshake), and $\KeyF(x_i)\neq k$, and so $(x_i\arrow_k y_i)^{M_{\ell_i}}$ (a k-search triple).
    \item[$s_i\in\Step(p,\rom{d1},\rom{d2})$:] As illustrated in Figure \ref{FIGflowchart2}, $y_{i-1}\neq\bot$ and so $x_i=y_{i-1}$ (a handshake), and $\KeyF(x_i)=k$, and so $y_i=y_{i-1}$ (a delaying triple).
    \item[$s_i\in\Step(p,\rom{d2},\rom{d1})$:] As illustrated in Figure \ref{FIGflowchart2}, $x_i=x_{i-1}$ (a traversal stutter), $\RemovedP(x_i)^{M{\ell_i}}$ and $y_i=\Right(x_i)$. Note that in any execution of \bfdelete, $s_i$ must be preceded by a step in $\Step(p,\rom{d1},\rom{d2})$, which means that $x_{i-1}=y_{i-1}$ and $\KeyF(x_{i-1})=k$. Since $x_i=x_{i-1}$, we have that $M_{\ell_i}\models\RemovedP(x_i)\wedge\KeyF(x_i)=k\wedge y_i=\RightF(x_i)$ (a backtracking triple).
\end{description}

Thus the sequence of triples induced by the steps of $p$ in $E$ is an abstract $k$-scan as defined in \ref{Ekpath}.
\end{proof}

\begin{lemma}
\label{lemma:inserts-are-kscans}
Theorem \ref{thm:operations-are-kscans} holds if $E$ is an \bfinsert\ operation.
\end{lemma}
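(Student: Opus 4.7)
The plan is to follow the template of Lemmas~\ref{lemma:contains-are-kscans} and~\ref{lemma:deletes-are-kscans}: classify each scanning step $s_i$ of $E$ according to the branches of m0, i1, i2, and i3 in Figure~\ref{FIGflowchart3}, and exhibit a triple $(\ell_i, x_i, y_i)$ meeting Definition~\ref{Ekpath}. Five of the six kinds of scanning step of \bfinsert\ admit a completely routine classification; the sixth kind, $\Step(p,\rom{i1},\rom{i3})$, is the only real difficulty and must be handled by the device flagged in the discussion preceding the lemma, namely by omitting it from the induced $k$-scan.

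Writing $k = k_p$, I would dispatch the five routine cases as follows. A step $s_i\in\Step(p,\rom{m0},\rom{i1})$ is the first scanning step, with $x_0 = y_0 = \Root$, hence a delaying triple. A step $s_i\in\Step(p,\rom{i1},\rom{i1})$ has $y_{i-1}\neq\bot$ (otherwise the i3 branch would have been taken), giving the handshake $x_i = y_{i-1}$; since $\KeyF(x_i)\neq k$, the assignment $\nxt_p := \LR(\nd_p, k<\KeyF(\nd_p))$ produces a $k$-search triple. A step $s_i\in\Step(p,\rom{i1},\rom{i2})$ is also a handshake, and since $\KeyF(x_i) = k$ and $y_i = y_{i-1} = x_i$, it is a delaying triple. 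A step $s_i\in\Step(p,\rom{i2},\rom{i1})$ (the backtracking branch of i2) is preceded by a step in $\Step(p,\rom{i1},\rom{i2})$, so $\KeyF(x_{i-1}) = k$, yielding the traversal stutter $x_i = x_{i-1}$; since the $\nd.\rom{rem}$ branch is taken, $\RemovedP(x_i)$ holds and $y_i = \RightF(x_i)$, giving a backtracking triple. Finally, a step $s_i\in\Step(p,\rom{i3},\rom{i1})$ (the retry branch of i3) has $\KeyF(x_i)\neq k$ by the control-dependent invariants for i3 combined with Step-property~\ref{sp:immutable-keys}, and assigns $y_i = \LR(x_i, k<\KeyF(x_i))\neq\bot$, giving a $k$-search triple.

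The main obstacle is a step $s_i\in\Step(p,\rom{i1},\rom{i3})$: a concurrent insertion by some $q\neq p$ may have introduced a node between $\nd_p$ and $\bot$, so neither $x_i\arrow_k y_i$ nor $\KeyF(x_i)=k$ need hold at $M_{\ell_i}$, and the step cannot be packaged as any of the three kinds of triple. I would resolve this by omitting every such step from the induced $k$-scan and then verifying that the resulting subsequence is still an abstract $k$-scan. The key observation is that the step directly preceding an omitted $\Step(p,\rom{i1},\rom{i3})$ must leave $\ControlF(p)=\rom{i1}$ and $\nxt_p=\bot$; inspection of the code leaves only the possibilities $\Step(p,\rom{i1},\rom{i1})$ with $y=\bot$ and $\Step(p,\rom{i2},\rom{i1})$ with $y=\RightF(x)=\bot$, and both of these retained triples have $y$-component $\bot$. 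The transition rule of item~3 of Definition~\ref{Ekpath} therefore only requires the traversal stutter $x_{\text{next}} = x_{\text{prev}}$ to the next retained triple, which holds because $\nd_p$ is a local variable of $p$ and is modified neither by $\Step(p,\rom{i1},\rom{i3})$ nor by the subsequent $\Step(p,\rom{i3},\rom{i1})$ (the latter touches only $\nxt_p$). If the omitted step is the final scanning step of $E$ (i.e.\ $E$ returns by inserting a new node), there is no subsequent triple to verify and the $k$-scan simply terminates at the preceding retained triple. Potential $k$-connectivity of every $x_i$ (clause~2 of Definition~\ref{Ekpath}) comes for free from invariant R1 of Definition~\ref{DefReg}, which is available by Theorem~\ref{InvReg}.
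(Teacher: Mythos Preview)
Your proposal is correct and follows essentially the same approach as the paper: a case analysis over the six kinds of scanning step of \bfinsert, with the problematic $\Step(p,\rom{i1},\rom{i3})$ step omitted from the induced $k$-scan. Your justification for the omission is in fact more thorough than the paper's---you explicitly identify the possible predecessors of the omitted step and verify their $y$-component is $\bot$, whereas the paper simply observes that $\nd_p$ and $\nxt_p$ are unchanged by the omitted step and that the analysis of the following $\Step(p,\rom{i3},\rom{i1})$ therefore carries over. One minor remark: your second listed predecessor, $\Step(p,\rom{i2},\rom{i1})$ with $y=\RightF(x)=\bot$, is in fact impossible by Invariant~\ref{inv:rem-chld-not-bot} (a removed node never has $\bot$ as a child), so only $\Step(p,\rom{i1},\rom{i1})$ with $y=\bot$ actually occurs; this does not affect the correctness of your argument.
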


\begin{proof}
\begin{description}
    \item[$s_i\in\Step(p,\rom{m0},\rom{i1})$:] In this case, $i=\ell_0$ by the definition of an operation. As illustrated in Figure \ref{FIGflowchart3},  $x_0=y_0 =\Root$ and so $s_i$ is a delaying triple.
    \item[$s_i\in\Step(p,\rom{i1},\rom{i1})$:] As illustrated in Figure \ref{FIGflowchart3}, $y_{i-1}\neq\bot$, and so $x_i=y_{i-1}$ (a handshake), and $\KeyF(x_i)\neq k$, and so $(x_i\arrow_k y_i)^{M_{\ell_i}}$ (a k-search triple).
    \item[$s_i\in\Step(p,\rom{i1},\rom{i2})$:] As illustrated in Figure \ref{FIGflowchart3}, $y_{i-1}\neq\bot$, and so $x_i=y_{i-1}$ (a handshake), and $\KeyF(x_i)=k$, and so $y_i=y_{i-1}$ (a delaying triple).
    \item[$s_i\in\Step(p,\rom{i2},\rom{i1})$:] As illustrated in Figure \ref{FIGflowchart3}, $x_i=x_{i-1}$ (a traversal stutter), $\RemovedP(x_i)^{M{\ell_i}}$ and $y_i=\Right(x_i)$. Note that in any execution of \bfinsert, $s_i$ must be preceded by a step in $\Step(p,\rom{i1},\rom{i2})$, which means that $x_{i-1}=y_{i-1}$ and $\KeyF(x_{i-1})=k$. Since $x_i=x_{i-1}$, we have that $M_{\ell_i}\models\RemovedP(x_i)\wedge\KeyF(x_i)=k\wedge y_i=\RightF(x_i)$ (a backtracking triple).
    \item[$s_i\in\Step(p,\rom{i3},\rom{i1})$:] As illustrated in Figure \ref{FIGflowchart3}, $x_{i-1}\notarrow_k\bot$ and $y_{i-1}=\bot$, and so $x_i=x_{i-1}$ (a traversal stutter) and $(x_i\arrow_k y_i)^{M_{\ell_i}}$ (a k-search triple).
    \item[$s_i\in\Step(p,\rom{i1},\rom{i3})$:] This step does not induce a valid triple. However, note that step $s_i$ must be followed by either a returning step (which is not part of any $k$-scan), or a step $s_{i+1}\in\Step(p,\rom{i3},\rom{i1})$. To see that $s_i$ does not interrupt the $k$-scan induced by steps $s_{i-1}$, $s_{i+1}$, observe that both $\nd_p$ and $\nxt_p$ are unchanged by step $s_i$. As a result, the analysis of the step $s_{i+1}\in\Step(p,\rom{i3},\rom{i1})$ shown above holds with $x_{i-1}$ and $y_{i-1}$ carrying over from step $s_{i-1}$.
\end{description}

Thus the sequence of triples induced by the steps of $p$ in $E$ is an abstract $k$-scan as defined in \ref{Ekpath}.
\end{proof}

\begin{corollary}
The data operations of the CF algorithm induce valid abstract scans.
\end{corollary}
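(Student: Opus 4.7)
The plan is very short, because this corollary is essentially a restatement of Theorem~\ref{thm:operations-are-kscans}, combined with the observation that the CF algorithm's data operations consist of exactly \bfcontains, \bfdelete, and \bfinsert. Thus the proof strategy is simply to invoke the three preceding lemmas as the three cases.

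More concretely: first I would fix an arbitrary history $\oM$, an arbitrary working process $p>0$, and an arbitrary terminating data operation execution $E$ of $p$ in $\oM$. By inspection of the algorithm code in Figure~\ref{RLred} (together with the Master Program of Figure~\ref{FigMP}), $E$ is one of $\bfcontains(k_p)$, $\bfdelete(k_p)$, or $\bfinsert(k_p)$ for the value $k=k_p$ set at the invocation step $s_{\mathit{inv}(E)+1}\in\Step(p,\rom{m0}, f)$. Then I would apply the appropriate lemma in each case: Lemma~\ref{lemma:contains-are-kscans} for \bfcontains, Lemma~\ref{lemma:deletes-are-kscans} for \bfdelete, and Lemma~\ref{lemma:inserts-are-kscans} for \bfinsert. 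Each lemma guarantees that the scanning steps of $E$ induce a sequence of triples satisfying the conditions of Definition~\ref{Ekpath}.

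There is no real obstacle here, since the hard work has already been carried out in the case analyses of the three lemmas (in particular the nontrivial handling of the \rom{i1}--\rom{i3} transition in the \bfinsert\ case, where $s_i\in\Step(p,\rom{i1},\rom{i3})$ is absorbed into the surrounding triples via the fact that neither $\nd_p$ nor $\nxt_p$ is modified by such a step). The corollary is really a bookkeeping statement that collects these three lemmas into the uniform conclusion of Theorem~\ref{thm:operations-are-kscans}, making the result ready for direct application in Section~\ref{Sec6} when defining the linearization index $\ell(E)$ for each operation execution $E$.
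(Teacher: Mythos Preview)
Your proposal is correct and matches the paper's approach: the corollary is stated without proof in the paper, as it is an immediate consequence of Lemmas~\ref{lemma:contains-are-kscans}, \ref{lemma:deletes-are-kscans}, and \ref{lemma:inserts-are-kscans}, which together cover the three possible data operations.
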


We now turn to the proof the return properties Ret1 -- Ret4. We recommend that the reader commences with either the formal step definitions (Appendix \ref{Sec4.1}) or the code (Figure \ref{RLred}) in hand. As was the case before, we use $y$, $y_0$, $y_n$, etc., to denote the variable $\nxt_p$ in the various states, while $x$, $x_0$, $x_n$, etc., denotes the variable $\nd_p$ in the various states.

\begin{proof}[Proof of Ret1]
We must prove that the terminating $\bfcontains(k_p)$ operation $E$ is set-preserving, and we must find a linearization point $\ell(E)$ of operation $E$ such that $\returnVal_p(E)=\TRUE$ if and only if $k_p\in\Set(M_{\ell(E)})$.

In Theorem \ref{InvReg} we proved that the steps in $\Step(\rom{c1})\cup\Step(\rom{c2})$ are set-preserving, which means that $E$ is set-preserving.

Next, we find the linearization point of $E$. By Lemma \ref{lemma:contains-are-kscans}, $E$ induces a valid $k_p$-scan $T=(\langle \ell_0,x_0,y_0\rangle,\ldots,\langle \ell_n, x_n, y_n\rangle)$ such that $M_{\ell_0}\models y_0=\Root$ and so $y_0$ is $k_p$-connected in $M_{\ell_0}$.

Since $E$ is a terminating \bfcontains\ operation, the final step of $p$ in $E$, $s=(M_r, M_{r+1})$ (where $r\geq\ell_n+1$), is such that one of the following two possibilities holds:
\begin{enumerate}
    \item $s\in\Step(p,\rom{c1},\rom{m0})$ is a \FALSE\ returning step. In this case, $\nxt_p^{M_r}=\bot$ (a precondition of the step), and so $y_n=\bot$. We apply the Scanning Theorem \ref{T4.2} to the $k_p$-scan $T$, and find that there is some index $j\in\{\ell_0,\ldots,\ell_n\}$ such that $y_n=\bot$ is $k_p$-connected in $M_j$. Take $\ell(E)=j$, and conclude that $k_p\notin \Set(M_j)$.
    
    \item $s\in\Step(p,\rom{c2},\rom{m0})$. Step
    $s=(M_{\ell_n-1}, M_{\ell_n})$, the last step before
    the return step $(M_r,M_{r+1})$, is such that
    $x_n=\nd_p^{M_{\ell_n}}=\nxt_p^{M_{\ell_n}}$, and we get that 
    $\KeyF(x_n)=k_p$ (so that $x_n\neq\bot$). The return value is
    $\neg\DeletedP(\nd_p^{M_r})$. The abstract scanning
    induced by $E$ has $(\ell_n, x_n, x_n)$
    as its last triple. We add to that abstract scanning
    another triple: $t= (\ell_{n+1}, x_{n+1},y_{n+1}) = 
    (r, x_n, x_n)$ and get a longer (by 1)
    abstract scanning $T'=T\cdot t$.  $T'$ satisfies
    the requirements of abstract scanning \ref{Ekpath} as $x_n=y_n\neq\bot$, $x_{n+1}=x_n$,
    and since $M_r$ is a regular state, we get that $\nd_p^{M_r}=x_n$ is potentially $k_p$-connected.
    We now apply the Scanning theorem  to $T'$\footnote{Note that applying the Scanning theorem directly to $T$ would give us an index $\ell_0\leq j \leq \ell_n$ such that $x_n$ is $k_p$ connected in $M_j$, but we would not know that the \DeletedP\ predicate for $x_n$ is agreed upon between $M_j$ and $M_r$.}, and find that there is an index $j\in\{\ell_0,\ldots,r\}$ such that $x_n=y_{n+1}=x_n$ is $k_p$-connected in $M_j$ and $\DeletedP(x_n)^{M_j}\iff\DeletedP(x_n)^{M_{r+1}}$. Take $\ell(E)=j$, and
    the answer provided by the return step is the correct one.
     
\end{enumerate}
\end{proof}

\begin{proof}[Proof of Ret2]
We must find a linearization point $\ell(E)$ of a \bfdelete\ operation $E$, and prove
that it is correctly related with the value returned by the operation.

By Lemma \ref{lemma:deletes-are-kscans}, $E$ induces a valid $k_p$-scan $T=(\langle \ell_0,x_0,y_0\rangle,\ldots,\langle \ell_n, x_n, y_n\rangle)$ such that $M_{\ell_0}\models y_0=\Root$ and so $y_0$ is $k_p$-connected in $M_{\ell_0}$.

Since $E$ is a terminating \bfdelete\ operation, the returning step of $p$, $s=(M_r, M_{r+1})$ (where $r\geq\ell_n+1$), is such that one of the following holds:
\begin{enumerate}
   \item $s\in\Step(p,\rom{d1},\rom{m0})$ and the returned value is
   $\FALSE$. In this case, $\nxt_p^{M_r}=\bot$ (a precondition of the step), and so $y_n=\bot$. We apply Theorem \ref{T4.2} to the $k$-scan $T$, and find that there is some index $j\in\{\ell_0,\ldots,\ell_n\}$ such that $y_n=\bot$ is $k_p$-connected in $M_j$. So
   $k_p\notin\Set(M_j)$, and the returned value corresponds correctly to
   the choice of $\ell(E)=j$ as the linearization point. All steps of $E$ are set-preserving
   in this case by Theorem \ref{InvReg}.
    
    \item $s\in\Step(p,\rom{d2},\rom{m0})$.  In Figure \ref{fig:protocol-elementary-invariants} we have the control-dependent invariant $\ControlF(p)=d2\rightarrow\LockedP(\nd_p)\wedge
    \KeyF(\nd_p)=k_p$. There are two possibilities:
    \begin{enumerate}
        \item $\returnVal_p(E)=\TRUE$, which implies that
        \begin{equation}
        \label{Eq:Mr-is-search-connected}
        M_r\models \nd_p\neq\bot\wedge \neg\DeletedP(\nd_p)\wedge \neg\RemovedP(\nd_p) \wedge 
        \LockedP(\nd_p,p),
        \end{equation}
        and $M_{r+1}\models\DeletedP(\nd_p)$. As discussed in the proof of \ref{InvReg}, in this case, $\Set(M_{r+1})=\Set(M_r)\setminus\{k_p\}$.
        Since $M_r$ is a regular state, $\nd_p$ is potentially $k_p$-connected there.
        Equation (\ref{Eq:Mr-is-search-connected}) together with Corollary \ref{lemma:pdc-rem-or-locked}
        implies that $\nd_p$ is path-connected. Hence necessarily $\nd_p$ is $k_p$-connected.
        We can take $r+1=\ell(E)$ as the linearization point of $E$. 
        
        \item $\returnVal_p(E)=\FALSE$, and $M_r\models\DeletedP(\nd_p)$.
        The last triple of $T$ is $(\ell_n,x_n,y_n)$, and thus, the last step
        $s=(M_{\ell_n-1},M_\ell)$ before the return step $(M_r, M_{r+1})$ is
        an execution of instruction in $\Step(p,\rom{d1, d2})$.
        As illustrated in Figure \ref{FIGflowchart2}, it follows that $\nd_p=\nxt_p$ has key value $k_p$.
        We use these facts, and define a $k_p$-scan triple for step $s$ as follows: $t=\langle r, x_n, x_n\rangle$. We use this triple to define $T'=T\cdot t$, which is a valid $k_p$-scan, by Definition \ref{Ekpath}. We now apply Theorem \ref{T4.2} to $T'$, and find that there is an index $j\in\{\ell_0,\ldots,r\}$ such that $x_n$ is $k_p$-connected in $M_j$ and $\DeletedP(x_n)^{M_j}\iff\DeletedP(x_n)^{M_{r+1}}$. But we know that $\DeletedP(x_n)^{M_{r+1}}$, and so $\DeletedP(x_n)^{M_j}$. Take $\ell(E)=j$, and conclude that the returned value \FALSE\ is correct
        since $k_p\notin\Set(M_j)$.
    \end{enumerate}
\end{enumerate}
\end{proof}

\begin{proof}[Proof of Ret3]
We must find a linearization point $\ell(E)$ of an \bfinsert\ operation $E$, and prove
that it is correctly related with the value returned by the operation.

By lemma \ref{lemma:inserts-are-kscans}, $E$ induces a valid $k_p$-scan $T=(\langle \ell_0,x_0,y_0\rangle,\ldots,\langle \ell_n, x_n, y_n\rangle)$ such that $M_{\ell_0}\models y_0=\Root$ and so $y_0$ is $k_p$-connected in $M_{\ell_0}$.

Since $E$ is a terminating \bfinsert\ operation, the returning
step of $p$ (the one that follows $T$), is $s=(M_r, M_{r+1})$ (where $r\geq\ell_n+1$), is such that one of the following holds:
\begin{enumerate}
    \item $s\in\Step(p,\rom{i2},\rom{m0})$. In Figure \ref{fig:protocol-elementary-invariants} we have the control-dependent invariant $\ControlF(p)=i2\rightarrow\LockedP(\nd_p)\wedge\KeyF(\nd_p)=k_p$. There are two possibilities:
    \begin{enumerate}
        \item $\returnVal_p(E)=\TRUE$.
        This implies that 
        \begin{equation}
        M_r\models \LockedP(\nd_p,p) \wedge \DeletedP(\nd_p) \wedge 
        \neg\RemovedP(\nd_p),
        \end{equation}
        and $M_{r+1}\models\neg\DeletedP(\nd_p)$. As discussed in the proof of \ref{InvReg}, in this case,\linebreak{}
        $\Set(M_{r+1})=\Set(M_r)\cup\{k_p\}$, but we have to prove that $k_p\notin \Set(M_r)$ in order to
        justify the return value. 
        The displayed equation together with \ref{lemma:pdc-rem-or-locked}
        implies that $\nd_p$ is $k_p$-connected at $M_r$ and $M_{r+1}$,
        so that $k_p\notin \Set(M_{r+1})$,
        and we can take $r+1=\ell(E)$ as the linearization point of $E$.
        \item $\returnVal_p(E)=\FALSE$. Then $M_r\models\neg\DeletedP(\nd_p)$ and so $M_{r+1}\models\neg\DeletedP(\nd_p)$ as well 
        since the return step does not change the \DeletedP\ predicate.
        As illustrated in Figure \ref{FIGflowchart3}, the last
        triple of $T$ $(\ell_n, x_n, y_n)$ is such that
        $x_n=y_n$, that is $\nd_p=\nxt_p$ in $M_{\ell_n}$ and hence
        in $M_r$. We use these facts, and define a $k_p$-scan triple for step $s$ as follows: $t=\langle r, x_n, x_n\rangle$. We use this triple to define $T'=T\cdot t$, which is a valid $k_p$-scan, as
        defined in \ref{Ekpath}. We now apply Theorem \ref{T4.2} to $T'$, and find that there is an index $j\in\{\ell_0,\ldots,r\}$ such that $x_n$ is $k_p$-connected in $M_j$ and $\DeletedP(x_n)^{M_j}\iff\DeletedP(x_n)^{M_{r+1}}$. But we know that $\neg\DeletedP(x_n)^{M_{r+1}}$, and so $\neg\DeletedP(x_n)^{M_j}$.
        Thus $k_p\in \Set(S_j)$ and the return value \FALSE\ is justified
        by the choice of $\ell(E)=j$ as the linearization point of $E$.
    \end{enumerate}
    
    \item $s\in\Step(p,\rom{i3},\rom{m0})$. If the returning step
    is an execution of i3, then $\returnVal_p(E)=\TRUE$, 
    and as part of the proof of Theorem \ref{InvReg}, we showed that $\Set(M_{r+1}) = \Set(M_r)\cup\{k_p\}$. In order to prove that the returned
    value is appropriate, we have to show that $k_p\notin\Set(M_r)$,
    and for this we will show that $\bot$ is $k_p$-connected at $M_r$.
    Since $\ControlF(p)=\rom{i3}$ in $M_r$, $\KeyF(\nd_p)\neq k_p$.
    Since $\ControlF(p)=\rom{m0}$ in $M_{r+1}$, $\LR(\nd_p,k_p< \KeyF(\nd_p))=\bot$.
    By the control-dependent invariant of Figure \ref{fig:protocol-elementary-invariants}, we know that $\LockedP(\nd_p,p)$
    Since $M_r$ is regular, $\nd_p$ is potentially $k_p$-connected. 
    Since $\nd_p\arrow_{k_p} \bot$, and since $\nd_p$ is not removed at $M_r$ (see \ref{inv:rem-chld-not-bot}),
    it is not the case that PT3$(\nd_p,k_p)$.
    By the control-dependent invariant of Figure \ref{fig:protocol-elementary-invariants}, we know that $\LockedP(\nd_p,p)$, 
    which excludes the possibility that PT2$(\nd_p,k_p)$.
    Thus PT1$(\nd_p,k_p)$, and so $\nd_p\arrow_{k_p}\bot$ implies immediately that $\bot$ is $k_p$-connected.
\end{enumerate}
\end{proof}

\begin{proof}[Proof of Ret4]
As part of the proof of Theorem \ref{InvReg}, we proved that the steps performed by the system process are set-preserving. Thus, Property 4 holds.
\end{proof}

This concludes the correctness proof of the CF algorithm.

\section{Related Work}
\label{subsec:related-work}
O'Hearn et al.~\cite{ohearn2010hindsight} described a proof framework for linked-list-based concurrent set algorithms. They used it to prove the correctness of the Lazy Set algorithm of to Heller et al.~\cite{heller2007lazy}. Similar to our approach, theirs is two-tiered: they first formulate invariants and step-properties specific to the algorithm in question (some of which match some of our invariants and step-properties). They then formulate and prove the Hindsight Lemma in terms of these invariants and properties, which they use to prove the linearizability of the Lazy Set. As in our case, that lemma is formulated in a way that is abstracted away from the technical details of the algorithm they analyze.

In this chapter, we discussed our development of a framework for analyzing the behavior of BST-backed sets, which requires an approach different from that for the analysis of linked-list-backed sets, due to the differing constraints and more complex behavior of BSTs.

Feldman et al.~\cite{Feldman20} presented a general framework for proving the correctness of concurrent tree- and list-based implementations of the set data structure. Their framework is based on temporal predicates on instructions of the operations of the implementation under inspection. A temporal predicate $\phi$ is said to hold at instruction $i$ of operation $o$ if $\phi$ holds \emph{at some moment} between the time $t$ at which $o$ was invoked and the time $t'$ when $i$ is executed. They use the convention $\DDiamond^{t'}_t(\phi)$ to represent such temporal predicates.

Feldman et al.~\cite{Feldman20} used of their framework to prove the correctness of multiple concurrent tree-based implementations of the set data structure. Among these implementations is a variant of the CF algorithm of Crain et al.~\cite{Crain13,Crain16}. While the variant they proved is very similar to the original algorithm, there is one major behavioral difference in the \bfinsert\ and \bfdelete\ operations: it is possible that when the traversing process reaches a node $x$ where a logical deletion or insertion should take place, $x$ is found to be physically removed (by way of the \rem\ flag). In this case, the original algorithm continues the traversal process from $x$, relying on the clever backtracking mechanism. On the other hand, the variant that Feldman et al. analyzed restarts the traversal process from the \Root\ node all over again.

This difference stems from a condition of Feldman et al.'s proof framework, which they call the \emph{forepassed condition with respect to field $f$}. Intuitively, this condition requires that writes that may interfere with a concurrent traversal do not change the field $f$ of any node $x$ after $x$ has been disconnected, if $x$ was disconnected during said traversal.

Core to the proof of Feldman et al. is the past temporal logic predicate $\DDiamond(\Root\arrow^*_k y \wedge y.\key = k \wedge \neg y.\rem)$, which can be found in the assertion annotations of the \bfdelete\ and \bfinsert\ operations in the Appendix of~\cite{Feldman20}. To prove the correctness of this predicate, it is required for the \emph{forepassed condition with respect to \rem} to hold. However, in the original algorithm, the \bfremove, \rotateLeft\, and \rotateRight\ operations first disconnect the node \n, and only then do they modify \n.\rem, thus violating this condition.

In this chapter, we proved the correctness of the original form of the algorithm, including its full backtracking mechanism.

\section{Conclusion}
In this chapter, we formally proved the correctness of the contention-friendly algorithm of Crain et al.~\cite{Crain13,Crain16}.
To our knowledge, this is the first time this has been done for the original algorithm of Crain et al., which includes its full backtracking mechanism.

To facilitate the proof, we presented the abstract notions of ``potential-connectivity'', ``regularity'', and ``abstract scanning''.
We believe that these notions constitute a general framework for proving the correctness of concurrent BST algorithms.
We intend to explore this belief in the future by attempting to apply the tools developed here to other BST algorithms, 
such as the Logically-Ordered tree algorithm~\cite{drachsler2014lotree} and the Citrus tree algorithm~\cite{arbel2014citrus}.

We supplemented our proof with a bounded model of the algorithm, encoded in TLA+,
and verified the various invariants and properties in Section~\ref{sec:invariants-and-props} against that model~\cite{thesis-code}.
While not a full verification of our proofs (due to the bounded nature of the model), 
this model-checking process does act to validate the correctness of our proofs.

Our methodology of using model-checking to validate our manual work proved useful, helping us find and correct multiple minor issues. In addition, as detailed in footnote~\ref{footnote:pt2} of definition~\ref{DefPot}, TLA+ flagged a serious problem in this definition, which we corrected with the help of the problematic scenario provided by the model-checker as a counter example.

\bibliographystyle{abbrv}
\bibliography{main}

\newpage
\appendix
\section{Steps of the Contention Friendly Algorithm}
\label{Sec4.1}

In this appendix we precisely define each of the steps of the contention friendly protocol presented in figures \ref{LM} and \ref{RLred}. 
For the sake of brevity, in the definition of a step $(M,N)$, interpretations of the constituents of $N$ are assumed to be identical to those of $M$, 
unless explicitly stated otherwise, and we list those variables whose denotations may change in {\em `` Possible changes''}.

We begin by describing the steps of process $p>0$ invoking one of its three operations:

Invocations of $\bfcontains(k)$ (respectively $\bfdelete(k)$ and $\bfinsert(k)$), where $k\in\omega$,
is the set of all steps $\Step(p,\rom{m0},\rom{c1})$ such that the following hold.
$\ControlF^M(p)=\rom{m0}$, $\ControlF^N(p)=\rom{c1}$ (respectively
$\ControlF^N(p)=\rom{d1}$ and $\ControlF^N(p)=\rom{i1}$), $k_p^N=k$, $\nd_p^N=\Root$,
and $\nxt_p^N=\Root$.

\paragraph*{Steps of process $p>0$ executing its $\bfcontains(k_p)$ for $k_p\in\omega$.}

\begin{enumerate}
    \item 
    $(M,N) \in \Step(p,\rom{c1})$ if and only if $\ControlF^M(p)=\rom{c1}$ and
    one of the following three possibilities occurs:
    \begin{enumerate}
        \item
        $\nxt_p^M=\bot \wedge  \returnVal_p^N=\FALSE\wedge \ControlF(p)=\rom{m0}$.
        
        \item $\nxt_p^M\neq\bot \wedge \nd_p^N=\nxt_p^M\wedge k_p=\KeyF(\nd_p^N)
        \wedge \ControlF^N(p)=\rom{c2}$.
        
        \item 
        $\nxt_p^M\neq\bot \wedge \nd_p^N=\nxt_p^M\wedge k_p\neq \KeyF(\nd_p^N)
        \wedge \ControlF^N(p)=\rom{c1}\wedge(\nd_p\arrow_{k_p} \nxt_p)^N$.
    \end{enumerate}
    \possible\ $\ControlF(p), \nd_p, \nxt_p$.
    
    \item
    $(M,N)\in \Step(p,\rom{c2})$ if and only if:
    \begin{equation*}
        \ControlF^M(p)=\rom{c2} \wedge \ControlF^N(p)=\rom{m0}\wedge\returnVal_p^N \equiv (\neg\DeletedP(\nd_p))^M
    \end{equation*}
    \possible\ $\ControlF(p), \returnVal_p$.
\end{enumerate}

\paragraph*{Steps of process $p>0$ executing its $\bfdelete(k_p)$  for $k_p\in\omega$}.

\begin{enumerate}
    \item 
    $(M,N) \in \Step(p,\rom{d1})$ if and only if $\ControlF^M(p)=\rom{d1}$ and
    one of the following three possibilities occurs:
    \begin{enumerate}
        \item
        $\nxt_p^M=\bot \wedge \ControlF^N=\rom{m0}\wedge \returnVal_p^N=\FALSE$.
        
        \item $\nxt_p^M\neq\bot \wedge \nd_p^N=\nxt_p^M\wedge k=\KeyF(\nd_p^N)
        \wedge \ControlF^N(p)=\rom{d2}$.
        
        \item 
        $\nxt_p^M\neq\bot \wedge \nd_p^N=\nxt_p^M\wedge k\neq \KeyF(\nd_p^N)
        \wedge \ControlF^N(p)=\rom{d1}\wedge(\nd_p\arrow_{k_p} \nxt_p)^N$.
    \end{enumerate}
    \possible\ $\ControlF(p), \nd_p, \nxt_p$.
    
    \item $(M,N) \in \Step(p,\rom{d2})$ if and only if $\ControlF^M(p)=\rom{d2}$ and one of the following three possibilities occurs:
    \begin{enumerate}
        \item $\DeletedP(\nd)^M\rightarrow \ControlF^N(p)=\rom{m0}\wedge \returnVal_p^N=\FALSE$.
        \item $\neg\DeletedP(\nd_p)^M\wedge\RemovedP(\nd_p)^M\rightarrow \nxt_p^N=\RightF(\nd_p^M)\wedge \ControlF^N(p)=\rom{d1}$.
        \item $\neg\DeletedP(\nd_p)^M\wedge\neg\RemovedP(\nd_p)^M\rightarrow \DeletedP^N(\nd_p^M)\wedge \ControlF^N(p)=\rom{m0}\wedge\\ \returnVal_p^N=\TRUE$.
    \end{enumerate}
    \possible\ $\ControlF(p), \nxt_p, \DeletedP$.
\end{enumerate}

\paragraph*{Steps of process $p>0$ executing its $\bfinsert(k_p)$  for $k_p\in\omega$.}

\begin{enumerate}
    \item $(M,N) \in \Step(p,\rom{i1})$ if and only if $\ControlF^M(p)=\rom{i1}$ and
    one of the following three possibilities occurs:
    \begin{enumerate}
        \item
        $\nxt_p^M=\bot \wedge \ControlF^N(p)=\rom{i3}$.
        
        \item $\nxt_p^M\neq\bot \wedge \nd_p^N=\nxt_p^M\wedge k_p=\KeyF(\nd_p^N)
        \wedge \ControlF^N(p)=\rom{i2}$.
        
        \item 
        $\nxt_p^M\neq\bot \wedge \nd_p^N=\nxt_p^M\wedge k_p\neq \KeyF(\nd_p^N)
        \wedge \ControlF^N(p)=\rom{i1}\wedge(\nd_p\arrow_{k_p} \nxt_p)^N$.
    \end{enumerate}
    \possible\ $\ControlF(p), \nd_p, \nxt_p$.
    
    \item $(M,N) \in \Step(p,\rom{i2})$ if and only if $\ControlF^M(p)=\rom{i2}$ and one of the following three possibilities occurs:
    \begin{enumerate}
        \item $\neg\DeletedP(\nd)^M\rightarrow \ControlF^N(p)=\rom{m0}\wedge \returnVal_p^N=\FALSE$.
        \item $\DeletedP(\nd_p)^M\wedge\RemovedP(\nd_p)^M\rightarrow \nxt_p^N=\RightF(\nd_p)^M\wedge \ControlF^N(p)=\rom{i1}$.
        \item $\DeletedP(\nd_p)^M\wedge\neg\RemovedP(\nd_p)^M\rightarrow \neg\DeletedP^N(\nd_p^M)\wedge \ControlF^N(p)=\rom{m0}\wedge\\ \returnVal_p^N=\TRUE$.
    \end{enumerate}
    \possible\ $\ControlF(p), \nxt_p, \DeletedP$.
    
    \item $(M,N) \in \Step(p,\rom{i3})$ if and only if $\ControlF^M(p)=\rom{i3}$ and one of the following three possibilities occurs:
    \begin{enumerate}
        \item $(\nd_p\notarrow_{k_p}\bot)^M\rightarrow(\nd_p\arrow_{k_p}\nxt_p)^N\wedge\ControlF^N(p)=\rom{i1}$.
        \item $(\nd_p\arrow_{k_p}\bot)^M\rightarrow\exists\new\in\AddressT^N:$
        \begin{equation*}
            \begin{split}
            & \AddressT^N\setminus\AddressT^M=\{\new\}\wedge(\nd_p\arrow_{k_p}\new)^N\wedge \\
            & \KeyF(\new)=k_p\wedge\LeftF^N(\new)=\RightF^N(\new)=\bot\wedge\\
            & \neg\DeletedP^N(\new)\wedge\neg\RemovedP^N(\new)
            \end{split}
        \end{equation*}
    \end{enumerate}
    \possible\ $\ControlF(p), \nd_p, \nxt_p, \LeftF, \RightF, \KeyF, \AddressT$.
\end{enumerate}

Next, we tackle the steps of process $\Sys$ (i.e. $p=0$) invoking one of its three operations:

\begin{enumerate}
    \item Invocations of $\rotateLeft(\prt,\lft)$ is the set $\Step(\Sys,\rom{m0},\rom{f6})$ of steps
    $(M,N)$ such that $\ControlF^M(\Sys)=\rom{m0}\wedge \ControlF^N(\Sys)=\rom{f5}$, and all the prerequisites
    pr1--pr5 hold at state $N$. 
    
    \item Invocations of $\rotateRight(\prt,\lft)$ is the set defined symmetrically with r6 replacing f6. 
    
    \item Invocations of $\bfremove(\prt,\lft)$ is defined similarly.
\end{enumerate}

\paragraph*{Steps of process $\Sys$ executing its $\rotateLeft(\prt,\lft)$.}

\begin{enumerate}
    \item $(M,N)\in \Step(\Sys,\rom{f6})$ if and only if:
    \begin{equation*}
        \begin{split}
        & \ControlF^M(\Sys)=\rom{f6}\wedge \ControlF^N(\Sys)=\rom{f7} \wedge\\
        & \exists a \in \AddressT^N:\AddressT^N\setminus\AddressT^M=\{a\} \wedge \\
        & \LeftF^N(a) = \ell_0^M \wedge \RightF^N(a) = \rl^M \wedge \KeyF^N(a) = \KeyF^M(\n) \wedge \\
        &\DeletedP^N(a) = \DeletedP^M(\n) \wedge \LeftF^N(\r^N) = a \\
        \end{split}
    \end{equation*}
    \possible\ $\ControlF(\Sys), \LeftF, \RightF, \KeyF, \AddressT$.

    \item $(M,N)\in \Step(\Sys,\rom{f7})$ if and only if:
        \begin{equation*}
        \begin{split}
        & \ControlF^M(\Sys)=\rom{f7}\wedge \ControlF^N(\Sys)=\rom{f8} \wedge\\
        & \LeftF^N(\n^N) = \r^M
        \end{split}
    \end{equation*}
    \possible\ $\ControlF(\Sys), \LeftF$.

    \item $(M,N)\in \Step(\Sys,\rom{f8})$ if and only if:
        \begin{equation*}
        \begin{split}
        & \ControlF^M(\Sys)=\rom{f8}\wedge \ControlF^N(\Sys)=\rom{f9} \wedge\\
        & \lft^M = \TRUE \rightarrow \LeftF^N(\prt^N) = \r^M \wedge \\
        & \lft^M = \FALSE \rightarrow \RightF^N(\prt^N) = \r^M
        \end{split}
    \end{equation*}
    \possible\ $\ControlF(\Sys), \LeftF, \RightF$.

    \item $(M,N)\in \Step(\Sys,\rom{f9})$ if and only if:
        \begin{equation*}
        \begin{split}
        & \ControlF^M(\Sys)=\rom{f9}\wedge \ControlF^N(\Sys)=\rom{m0} \wedge\\
        & \RemovedP^N(\n^N) \wedge \neg\LockedP^N(\n^N, \Sys)
        \end{split}
    \end{equation*}
    \possible\ $\ControlF(\Sys), \RemovedP$.
\end{enumerate}

\paragraph*{Steps of process $\Sys$ executing its $\rotateRight(\prt,\lft)$.}

\begin{enumerate}
    \item $(M,N)\in \Step(\Sys,\rom{r6})$ if and only if:
    \begin{equation*}
        \begin{split}
        & \ControlF^M(\Sys)=\rom{r6}\wedge \ControlF^N(\Sys)=\rom{r7} \wedge\\
        & \exists a \in \AddressT^N:\AddressT^N\setminus\AddressT^M=\{a\} \wedge \\
        & \LeftF^N(a) = \lr^M \wedge \RightF^N(a) = \r^M \wedge \KeyF^N(a) = \KeyF^M(\n) \wedge \\
        &\DeletedP^N(a) = \DeletedP^M(\n) \wedge \RightF^N(\ell_0^N) = a \\
        \end{split}
    \end{equation*}
    \possible\ $\ControlF(\Sys), \LeftF, \RightF, \KeyF, \AddressT$.

    \item $(M,N)\in \Step(\Sys,\rom{r7})$ if and only if:
    \begin{equation*}
        \begin{split}
        & \ControlF^M(\Sys)=\rom{r7}\wedge \ControlF^N(\Sys)=\rom{r8} \wedge\\
        & \RightF^N(\n^N) = \ell_0^M
        \end{split}
    \end{equation*}
    \possible\ $\ControlF(\Sys), \RightF$.

    \item $(M,N)\in \Step(\Sys,\rom{r8})$ if and only if:
    \begin{equation*}
        \begin{split}
        & \ControlF^M(\Sys)=\rom{r8}\wedge \ControlF^N(\Sys)=\rom{r9} \wedge\\
        & \lft^M = \TRUE \rightarrow \LeftF^N(\prt^N) = \ell_0^M \wedge \\
        & \lft^M = \FALSE \rightarrow \RightF^N(\prt^N) = \ell_0^M
        \end{split}
    \end{equation*}
    \possible\ $\ControlF(\Sys), \LeftF, \RightF$.

    \item $(M,N)\in \Step(\Sys,\rom{r9})$ if and only if:
    \begin{equation*}
        \begin{split}
        & \ControlF^M(\Sys)=\rom{r9}\wedge \ControlF^N(\Sys)=\rom{m0} \wedge\\
        & \RemovedP^N(\n^N) \wedge \neg\LockedP^N(\n^N, \Sys)
        \end{split}
    \end{equation*}
    \possible\ $\ControlF(\Sys), \RemovedP$.
\end{enumerate}

\paragraph*{Steps of process $\Sys$ executing its $\bfremove(\prt,\lft)$.}

\begin{enumerate}

    \item $(M,N)\in \Step(\Sys,\rom{v5})$ if and only if:
        \begin{equation*}
        \begin{split}
        & \ControlF^M(\Sys)=\rom{v5}\wedge \ControlF^N(\Sys)=\rom{v6} \wedge \\
        & \lft^M = \TRUE \rightarrow \LeftF^N(\prt^N) = \child^M \wedge \\
        & \lft^M = \FALSE \rightarrow \RightF^N(\prt^N) = \child^M
        \end{split}
    \end{equation*}
    \possible\ $\ControlF(\Sys), \LeftF, \RightF$.

    \item $(M,N)\in \Step(\Sys,\rom{v6})$ if and only if:
    \begin{equation*}
        \begin{split}
        & \ControlF^M(\Sys)=\rom{v6}\wedge \ControlF^N(\Sys)=\rom{v7} \wedge \\
        & \LeftF^M(\n^M) = \bot \rightarrow \LeftF^N(\n^N) = \prt^M \wedge \\
        & \LeftF^M(\n^M) \neq \bot \rightarrow \RightF^N(\n^N) = \prt^M
        \end{split}
    \end{equation*}
    \possible\ $\ControlF(\Sys), \LeftF, \RightF$.

    \item $(M,N)\in \Step(\Sys,\rom{v7})$ if and only if:
    \begin{equation*}
        \begin{split}
        & \ControlF^M(\Sys)=\rom{v7}\wedge \ControlF^N(\Sys)=\rom{v8} \wedge \\
        & \LeftF^M(\n^M) = \prt^M \rightarrow \RightF^N(\n^N) = \prt^M \wedge \\
        & \LeftF^M(\n^M) \neq \prt^M \rightarrow \LeftF^N(\n^N) = \prt^M
        \end{split}
    \end{equation*}
    \possible\ $\ControlF(\Sys), \LeftF, \RightF$.

    \item $(M,N)\in \Step(\Sys,\rom{v8})$ if and only if:
        \begin{equation*}
        \begin{split}
        & \ControlF^M(\Sys)=\rom{v8}\wedge \ControlF^N(\Sys)=\rom{m0} \wedge\\
        & \RemovedP^N(\n^N) \wedge \neg\LockedP^N(\n^N, \Sys)
        \end{split}
    \end{equation*}
    \possible\ $\ControlF(\Sys), \RemovedP$.
\end{enumerate}

\end{document}